\documentclass[pdflatex,12pt]{amsart}
\usepackage{pdfsync}

\pdfoutput=1  %%%% Forces arXive.org to compile with PDF LaTeX

\usepackage{amsmath,amsfonts}
\usepackage{eucal}
\usepackage{enumerate}
\usepackage{xy}
\xyoption{matrix} \xyoption{arrow} \xyoption{arc} \xyoption{color}
\xyoption{pdf}

\makeatletter
\newtheoremstyle{corsivo}
   {\medskipamount}{\medskipamount}%
   {\itshape}{}%
   {\bfseries}{}%
   { }
   {\thmname{#1}\thmnumber{\@ifnotempty{#1}{ }\@upn{#2}}%
    \thmnote{ {\bfseries(#3)}}.}%
\makeatother

\theoremstyle{corsivo}
\newtheorem{theorem}{Theorem}[section]
\newtheorem{lemma}[theorem]{Lemma}
\newtheorem{crl}[theorem]{Corollary}
\newtheorem{prop}[theorem]{Proposition}
\newtheorem{df}[theorem]{Definition}

\makeatletter
\newtheoremstyle{dritto}
   {\medskipamount}{\medskipamount}%
   {\rmfamily}{}%
   {\bfseries}{}%
   { }
   {\thmname{#1}\thmnumber{\@ifnotempty{#1}{ }\@upn{#2}}%
    \thmnote{ {\bfseries(#3)}}.}%
\makeatother

\theoremstyle{dritto}
\newtheorem{rmk}[theorem]{Remark}
\newtheorem{datum}[theorem]{Datum}
\newtheorem{assumption}[theorem]{Assumption}

\newcommand{\ph}{\theta_q}
\newcommand{\B}{\mathbb{B}}
\newcommand{\C}{\mathbb{C}}
\newcommand{\R}{\mathbb{R}}
\newcommand{\Z}{\mathbb{Z}}
\newcommand{\N}{\mathbb{N}}
\newcommand{\T}{\mathbb{T}}
\newcommand{\calL}{\mathcal{L}}
\newcommand{\PB}{\mathcal{P}}
\newcommand{\ch}{\mathrm{ch}}
\newcommand{\BF}{\mathcal{B}}
\newcommand{\Hi}{\mathcal{H}}
\newcommand{\scal}[2]{\left\langle #1, #2 \right\rangle}
\newcommand{\eu}{\mathrm{e}}
\newcommand{\iu}{\mathrm{i}}
\newcommand{\di}{\mathrm{d}}
\newcommand{\sub}[1]{_{\mathrm{#1}}}
\newcommand{\nv}{n\sub{v}}

\renewcommand{\1}{\mathbf{1}} %%%%%%%%%%{\mathds{1}}
\renewcommand{\dot}[1]{\overset{\circ}{#1}}

\DeclareMathOperator{\tr}{Tr}
\DeclareMathOperator{\Ran}{Ran}

%\DeclareMathOperator{\re}{Re} \DeclareMathOperator{\im}{Im}

%%%%%%%%% Gianluca's commands %%%%%%%%%%%%%%%%%%%%%%%%%%%%%%%%%%%%%

\newcommand{\Base}{\T_d^*}

\newcommand{\Hf}{\mathcal{H}_{\mathrm{f}}}
\newcommand{\In}{{I}}
\newcommand{\Or}{\mathcal{O}}
\newcommand{\de}{\partial}
\newcommand{\Bi}{\mathcal{B}}

%%%%%%%% Indexes   %%%%%%%%%%%%%%%%%%%%%%%%

%\newcommand{\virg}[1]{\lq\lq#1\rq\rq}                %TIPOGRAPHIC
\newcommand{\virg}[1]{``#1''}
\newcommand{\ie}{{\sl i.\,e.\ }}
\newcommand{\eg}{{\sl e.\,g.\ }}
%\newcommand{\ae}{{ a.\,e.\ }}

%%%%%%%%%% Special commands %%%%%%%%%%%%%%%%%%

\newcommand{\expo}[1]{\mathrm{e}^{#1}}     % Exponential function
\newcommand{\inner}[2]{\left \langle  #1 \, , \,  #2 \right \rangle}
\newcommand{\norm}[1]{\| #1 \|}
\newcommand{\bra}[1]{\left\langle #1 \right|}
\newcommand{\ket}[1]{\left| #1 \right\rangle}

\newcommand{\set}[1]{ \left\{  #1 \right\}}

\newcommand{\half}{1/2}

%%%%%%%%%%  Footnote   %%%%%%%%

\let\oldfootnote\footnote
\renewcommand{\footnote}[1]{\oldfootnote{\  #1}}

%%%%%%%%%%%%%% COLOURS %%%%%%%%%%%%%%%%%%%%%%%%%%%%%%%%%%

\usepackage{color}

\newcommand{\blu}{\textcolor[rgb]{0.05,0.24,0.57}}
 %%% Rosso antico
%\newcommand{\gp}{\textcolor[rgb]{0.75,0.00,0.00}} %%% Commenti di Gianluca
%\newcommand{\dm}{\textcolor[rgb]{0.00,0.00,0.75}} %%% Commenti di Domenico

%%%%%%%%%%%%%%%%%%%%%%%%%%%%%%%%%%%%%%%%%%%%%%%%%%%%%%%%%
%
%
\setlength{\oddsidemargin}{5mm} \setlength{\evensidemargin}{5mm}
\setlength{\textwidth}{15cm}

\setlength{\parskip}{1mm}

%%%%%%%%%%%  COMMENTS %%%%%%%%%%%%%%%%%%%%%%%%%%%

%
%\newcommand{\segna}[1]{\rosso{$\overline{\rule{0pt}{1pt}\smash[t]{\text{\color{black}#1}}}$}}
%\newcommand{\msegna}[1]{\rosso{\overline{\rule{0pt}{1pt}\smash[t]{\color{black}#1}}}}
%\newcommand{\commento}[1]{\marginpar{\scriptsize{\rosso{#1}}}}%%% commento laterale
%

%%%%%%%%%%% COMMANDS TO BE MODIFIED EVENTUALLY %%%%%%%%%%%

\newcommand{\Composite}{stratified }
\newcommand{\vorticity}{eigenspace vorticity }

%%%%%%%%%%%%%%%%%%%%%%%%%%%%%%%%%%%%%%%%%%%%%%%%%%%%%%%%%%

\title[Wannier functions in graphene]{Topological invariants  \\[1mm] of eigenvalue intersections \\[1mm] and decrease of Wannier functions  in graphene}
\author{Domenico Monaco \and Gianluca Panati}
\date{January 12, 2014}

% Version submitted to arXiv - v3

\usepackage[pdftex,
%colorlinks=true,linkcolor=blue,citecolor=red,%
%plainpages=false,pdfpagelabels,%
%hypertexnames=false,%
%%bookmarks=false,%
%bookmarksopen=true,bookmarksnumbered=true,%
%pdfauthor={D. Monaco and G. Panati},%
%pdftitle={Topological invariants of eigenvalue intersections and decrease of Wannier functions in graphene},%
%%pdfstartpage={40},%
%pdfstartview=FitH
]{hyperref}

\begin{document}

\begin{abstract}
We investigate the asymptotic decrease of the Wannier functions for the valence and conduction band of graphene, both in the monolayer and the multilayer case. Since the decrease of the Wannier functions is characterised by the structure of the Bloch eigenspaces around the Dirac points, we introduce a geometric invariant of the family of eigenspaces, baptised {eigenspace vorticity}. We compare it with the pseudospin winding number. For every value $n \in \Z$ of the eigenspace vorticity, we exhibit a canonical model for the local topology of the eigenspaces. With the help of these canonical models, we show that the single band Wannier function $w$  satisfies $|w(x)| \le \mathrm{const} \cdot |x|^{-2}$ as $|x| \to \infty$, both in monolayer and bilayer graphene.

\bigskip

\noindent \textsc{Keywords}: Wannier functions, Bloch bundles, conical intersections, eigenspace vorticity, pseudospin winding number, graphene.
\end{abstract}

\maketitle

\begin{center}
\textsl{Dedicated to Herbert Spohn, with admiration}
%\textsl{on the occasion of his 66$^{th}$ birthday}
\end{center}

\bigskip

\tableofcontents

%%%%%%%% INTRODUCTION %%%%%%%%%%%%%%%%%%%%%%%%%%%%%%%%%%%%%%%%%%%%%%%%%%%%%%%%%%

\section*{Introduction} \label{sec:intro}

The relation between topological invariants of the Hamiltonian and localization and transport properties of the electrons has become, after a profound paper by Thouless \emph{et al.}\, \cite{TKNN}, a paradigm of theoretical and mathematical physics. Besides the well-known example of the Quantum Hall effect \cite{Bellissard94,Graf review}, the same paradigm applies to the macroscopic polarization of  insulators under time-periodic deformations \cite{KSV,Re92,PaSparTe} and to many other examples \cite{XiaoChangNiu}. While this relation has been deeply investigated in the case of gapped insulators, the case of semimetals remains, to our knowledge, widely unexplored. In this paper, we consider the prototypical example of graphene \cite{CastroNeto, Goerbig, BenaMontambaux}, both in the monolayer and in the multilayer realisations, and we investigate the relation between a \emph{local} geometric invariant of the eigenvalue intersections and the electron localization.

A fundamental tool to study the localization of the electrons in periodic and almost-periodic systems is provided by Wannier functions \cite{Wannier,Wannier review}. In the case of a single Bloch band isolated from the rest of the spectrum, the existence of an exponentially localized Wannier function was proved in dimension $d=1$ by W. Kohn for centrosymmetric crystals \cite{Kohn64}. The latter hypothesis has been later removed by J. de Cloizeaux \cite{Cl2}. A proof of existence for $d \leq 3$ has been obtained by J. de Cloizeaux for centrosymmetric crystals \cite{Cl1, Cl2}, and by G. Nenciu \cite{Ne83} in the general case.

Whenever the Bloch bands intersect each other, there are two possible approaches. On the one hand, following de Cloizeaux \cite{Cl2}, one considers a relevant family of Bloch bands which are separated by a gap from the rest of the spectrum (\eg the bands below the Fermi energy in an insulator). Then the notion of Bloch function is relaxed to the weaker notion of \emph{quasi-Bloch function}, and one investigates whether the corresponding \emph{composite} Wannier functions are exponentially localized. An affirmative answer was provided by G. Nenciu for $d=1$ \cite{Ne91}, and only recently for $d \leq 3$ \cite{Wannier_letter_BPCM,Pa2008}. On the other hand, one may focus on a \emph{single} non-isolated band and estimate the asymptotic decrease of the corresponding single-band Wannier function, as $|x| \to \infty$. The rate of decrease depends, roughly speaking, on the regularity of the Bloch function at the intersection points.

In this paper we follow the second approach.  We consider the case of graphene (both monolayer and bilayer) \cite{CastroNeto, Goerbig} and we explicitly compute the rate of decrease of the Wannier functions corresponding to the conduction and valence band. Since the rate of decrease crucially depends on the behaviour of the Bloch functions at the Dirac points, we preliminarily study the topology of the Bloch eigenspaces around those points.

More precisely, we introduce a geometric invariant of the eigenvalue intersection, which encodes the behaviour of the Bloch eigenspaces at the singular point (Section \ref{sec:vorticity}). We show that  our invariant, baptised \emph{eigenspace vorticity}, equals the \emph{pseudospin winding number} \cite{Park-Marzari, Novoselov_McCann_et_al2006, McCann_Falko2006} whenever the latter is well-defined (Section \ref{sec:pseudospin}). We prove, under suitable assumptions,  that if the value of the \vorticity is $\nv \in \Z$, then the local behaviour of the Bloch eigenspaces is described by the $\nv$-canonical model, explicitly described in Section \ref{sec:models}.  For example, monolayer and bilayer graphene correspond to the cases $\nv=1$ and $\nv =2$, respectively.    The core of our topological analysis is Theorem \ref{Th:criterion}, which shows that, in the relevant situations, the family of canonical models provides a complete classification of the local behaviour of the eigenspaces.

As a consequence of the previous geometric analysis, in Section \ref{sec:decrease} we are able to compute the rate of decrease of Wannier functions corresponding to the valence and conduction bands of monolayer and bilayer graphene. For both bands, we essentially obtain that
\begin{equation} \label{FreeElectronGas}
|w(x)| \le \mathrm{const} \cdot |x|^{-2} \qquad \mbox{as } |x| \to \infty,
\end{equation}
see Theorems \ref{Asymp_wcan} and \ref{Asymp_graphene} for precise statements.

The power-law decay in \eqref{FreeElectronGas} suggests that electrons in the conduction or valence band are delocalized. The absence of localization and the finite metallic conductivity in monolayer graphene are usually explained as a consequence of the Dirac-like (conical) energy spectrum. Since bilayer graphene has the usual parabolic spectrum, \virg{\emph{the observation of the maximum resistivity $\approx h/4e^2$ [...] is most unexpected}}\, \cite{Novoselov_McCann_et_al2006}, thus challenging theoreticians to provide an explanation of the absence of localization in bilayer graphene. \newline In this paper, we show that the absence of localization is not a direct consequence of the conical spectrum, but it is rather a consequence of the non-smoothness of the Bloch functions at the intersection points, which is in turn a consequence of a non-zero eigenspace vorticity, a condition which is verified by both mono- and multilayer graphene.  While the existence of a local geometric invariant distinguishing monolayer graphene from bilayer graphene has been foreseen by several authors, as \eg \cite{Novoselov_McCann_et_al2006, McCann_Falko2006, Park-Marzari}, our paper first demonstrates the relation between non-trivial local topology and absence of localization in position space.

While the Wannier functions in graphene are the  motivating example, the importance of our topological analysis goes far beyond the specific case:  we see a wide variety of possible applications, ranging from the topological phase transition in the Haldane model \cite{Haldane88},  to the analysis of the conical intersections arising in systems of ultracold atoms in optical lattices \cite{ZhuWangDuan, Lepori11, Tarruel_et_al2012}, to a deeper understanding of the invariants in $3$-dimensional topological insulators \cite{HasanKane, KaneMele2005, SoluyanovVanderbilt2010}. As for the latter item, the applicability of our results is better understood in terms of edge states, following \cite[Sec. IV]{HasanKane}. Indeed, in a $3d$ crystal occupying the half-space, the edge states are decomposed with respect to a $2d$ crystal momentum; on the corresponding $2d$ Brillouin zone, there are four points invariant under time-reversal symmetry where surface Bloch bands may be doubly degenerate, yielding an intersection of eigenvalues. Although a detailed analysis is postponed to future work, we are confident that methods and techniques developed in this paper will contribute to a deeper understanding of the invariants of topological insulators.

\medskip

\textbf{Acknowledgments.} We are indebted with D. Fiorenza and  A. Pisante for many inspiring discussions, and with R. Bianco,
R. Resta and A. Trombettoni for interesting comments and remarks. We are also grateful to the anonymous reviewers for their useful observations and suggestions. Financial support from the INdAM-GNFM project \virg{Giovane Ricercatore 2011}, and from the AST Project 2009 \virg{Wannier functions} is gratefully
acknowledged.

%%%%%%%% SECTION 1 %%%%%%%%%%%%%%%%%%%%%%%%%%%%%%%%%%%%%%%%%%%%%%%%%%%%%%%%%%

\section{Basic concepts} \label{sec:basics}

In this Section, we briefly introduce the basic concepts and the notation, referring to \cite[Section 2]{Panati Pisante} for details.

\setcounter{subsection}{-1}

%%%%%%%% SECTION 1.0 %%%%%%%%%%%%%%%%%%%%%%%%%%%%%%%%%%%%%%%%%%%%%%%%%%%%%%%%%%

\subsection{Bloch Hamiltonians} \label{sec:BlochHamiltonians}

To motivate our definition, we initially consider a periodic Hamiltonian $H_{\Gamma} = - \Delta + V_{\Gamma} $ where $V_{\Gamma}(x + \gamma) = V_{\Gamma}(x)$ for every $\gamma$ in the periodicity lattice $ \Gamma = \mathrm{Span}_{\Z}\set{a_1, \ldots, a_d}$ (here $\set{a_1, \ldots, a_d}$ is a linear basis of $\R^d$). We assume that the potential $V_\Gamma$ defines an operator which is relatively bounded with respect to $\Delta$ with relative bound zero, in order to guarantee that $H_\Gamma$ is self-adjoint on the domain $W^{2,2}(\R^d)$: when $d=2$ (the relevant dimension for our subsequent analysis), for example, this holds whenever $V_\Gamma \in L^2\sub{loc}(\R^2)$ \cite[Thm. XIII.96]{Reed-Simon}. To study such \emph{Bloch Hamiltonians}, one introduces the (\emph{modified}) \emph{Bloch-Floquet transform} $\mathcal{U}\sub{BF}$, acting on a function $w \in \mathcal{S}(\R^d)$ as
\[ \left( \mathcal{U}\sub{BF} w \right)(k,y) := \frac{1}{|\B|^{\half}} \sum_{\gamma \in \Gamma} \expo{-\iu k \cdot (y+\gamma)} w(y+\gamma), \quad k \in \R^d, \: y \in \R^d. \]
Here $\B$ is the fundamental unit cell for the dual lattice $\Gamma^*$, \ie the lattice generated over the integers by the dual basis $\set{a_1^*, \ldots, a_d^*}$ defined by the relations $a_i^* \cdot a_j = 2 \pi \delta_{i,j}$. As can be readily verified, the function $\mathcal{U}\sub{BF} w$ is $\Gamma^*$-pseudoperiodic and $\Gamma$-periodic, meaning that
\begin{align*}
\left( \mathcal{U}\sub{BF} w \right)(k+\lambda,y) & = \expo{-\iu \lambda \cdot y} \, \left( \mathcal{U}\sub{BF} w \right)(k,y) && \text{for } \lambda \in \Gamma^*, \\
\left( \mathcal{U}\sub{BF} w \right)(k,y+\gamma) & = \left( \mathcal{U}\sub{BF} w \right)(k,y) && \text{for } \gamma \in \Gamma.
\end{align*}
Consequently, the function $\left( \mathcal{U}\sub{BF} w \right)(k,\cdot)$, for fixed $k \in \R^d$, can be interpreted as an element of the $k$-independent Hilbert space $\Hf = L^2(\T^d_{Y})$, where $\T^d_{Y} = \R^d/\Gamma$ is a $d$-dimensional torus in position space.

The Bloch-Floquet transform $\mathcal{U}\sub{BF}$, as defined above, extends to a unitary operator
\[ \mathcal{U}\sub{BF} \colon L^2(\R^d) \to \int_{\B}^{\oplus} \di k \, \Hf \]
whose inverse is given by
\[ \left( \mathcal{U}\sub{BF}^{-1} u \right)(x) = \frac{1}{|\B|^{\half}} \int_{\B} \di k \, \expo{\iu k \cdot x} u(k,[x]), \quad x \in \R^d, \]
where $[x] := x \bmod \Gamma$. With the hypotheses on the potential $V_\Gamma$ mentioned above, one verifies that $H_\Gamma$ becomes a fibred operator in Bloch-Floquet representation, namely
\begin{equation} \label{H(k)}
\mathcal{U}\sub{BF} H_\Gamma \mathcal{U}\sub{BF}^{-1} = \int_{\B}^{\oplus} \di k \, H(k) \quad \text{where} \quad H(k) = (- \iu \nabla_y + k)^2 + V_\Gamma.
\end{equation}
Each $H(k)$ acts on the $k$-independent domain $\mathcal{D} := W^{2,2}(\T^d_Y) \subset \Hf$, where it defines a self-adjoint operator. Since for any $\kappa_0 \in \C^d$ one has
\[ H(\kappa) = H(\kappa_0) + 2 (\kappa - \kappa_0) \cdot (- \iu \nabla_y) + (\kappa^2 - \kappa_0^2) \1 \]
and $(- \iu \nabla_y)$ is relatively bounded with respect to $H(\kappa_0)$, the assignment $\kappa \mapsto H(\kappa)$ defines an entire family of type (A), and hence an entire analytic family in the sense of Kato \cite[Thm. XII.9]{Reed-Simon}.

Moreover, all the operators $H(k)$, $k \in \R^d$, have compact resolvent, and consequently they only have pure point spectrum accumulating at infinity. We label the eigenvalues in increasing order, \ie $E_0(k) \le \cdots \le E_n(k) \le E_{n+1}(k) \le \cdots$, repeated according to their multiplicity; the function $k \mapsto E_n(k)$ is usually called the \emph{$n$-th Bloch band}. We denote by $u_n(k)$ the solution to the eigenvalue problem
\begin{equation} \label{BlochFunction}
 H(k) u_n(k) = E_n(k) u_n(k), \quad u_n(k, \cdot) \in \mathcal{D} \subset \Hf.
\end{equation}
The function $k \mapsto \psi_n(k, y) = \expo{\iu k \cdot y} u_n(k,y)$ is called the \emph{$n$-th Bloch function} in the physics literature; $u_n(k, \cdot)$ is its $\Gamma$-periodic part.

%%%%%%%% SECTION 1.1 %%%%%%%%%%%%%%%%%%%%%%%%%%%%%%%%%%%%%%%%%%%%%%%%%%%%%%%%%%

\subsection{From insulators to semimetals} \label{sec:ins->semi}

In the case of an isolated Bloch band, or an isolated family of $m$ Bloch bands $\set{E_n, \ldots, E_{n+m-1}} = \set{E_i}_{i \in \In}$, where \virg{isolated} means that
\begin{equation}\label{Gap condizion}
    \inf_{k} \set{ |E_i(k) - E_c(k)| :\ i \in \In, c \notin \In}  >0,
\end{equation}
one considers the orthogonal projector
\begin{equation} \label{Eq:regularproj}
P_{\In}(k) = \sum_{n \in \In} \ket{u_n(k,\cdot)}\bra{u_n(k,\cdot)} \qquad  \in \BF(\Hf).
\end{equation}
It is known that, in view of condition \eqref{Gap condizion}, the map $k \mapsto P_{\In}(k)$ is a $\BF(\Hf)$-valued analytic and pseudoperiodic function (see, for example, \cite{Ne91} or \cite[Proposition 2.1]{Panati Pisante}). Thus, it defines a smooth vector bundle over the torus $\Base = \R^d/\Gamma^*$ whose fibre at $k \in \Base$ is $\Ran P_I(k)$.  The triviality of such vector bundle, called \emph{Bloch bundle} in \cite{Pa2008}, is equivalent to the existence of exponentially localized (composite) Wannier functions. By exploiting this geometric viewpoint, the existence of exponentially localized composite Wannier functions for a time-reversal-symmetric Hamiltonian has been proved, provided $d \leq 3$ \cite{Wannier_letter_BPCM, Pa2008}.

On the other hand, in metals and semimetals the relevant Bloch bands intersect each other. For example, in monolayer graphene the conduction and valence bands, here denoted by $E_+$ and $E_{-}$ respectively, form a conical intersection at two inequivalent points $K$ and $K'$ (Dirac points), \ie for $q = k - K$ one has
\begin{equation} \label{GrapheneBlochBands}
E_{\pm}(K + q) = \pm v_F |q| + o(|q|)  \qquad \mbox{as } |q| \to 0,
\end{equation}
where $v_F >0$ is called the Fermi velocity and $E_{\pm}(K) = 0$ by definition of the zero of the energy. An analogous expansion holds when $K$ is replaced by $K'$. The corresponding eigenprojectors
\[ P_s(k) =  \ket{u_s(k,\cdot)}\bra{u_s(k,\cdot)}, \qquad s \in \set{+,-}, \]
are not defined at $k=K$, nor does the limit $\lim_{k \to K} P_s(k)$ exist. This fact can be explicitly checked by using an effective tight-binding model
Hamiltonian \cite{Wallace,Goerbig,BenaMontambaux}.

%%%%%%%% SECTION 1.2 %%%%%%%%%%%%%%%%%%%%%%%%%%%%%%%%%%%%%%%%%%%%%%%%%%%%%%%%%%

\subsection{Tight-binding Hamiltonians in graphene}

We provide some details about the latter claim. If the Bloch functions $\psi_{s}(k)$ for the Hamiltonian $H_\Gamma$ were explicitly known, it would be natural to study the continuity of the eigenprojector by using the reduced 2-band Hamiltonian
\begin{equation}\label{Reduced Hamiltonian}
\Hi\sub{red}(k)_{r,s} = \inner{\psi_r(k)}{H_{\Gamma} \, \psi_s(k)}_{L^2(Y)} = \inner{ u_r(k)}{H(k) \, u_s(k)}_{\Hf}
\end{equation}
where $r,s \in \set{+,-}$, $Y$ is a fundamental cell for the lattice $\Gamma$ and $H(k)$ is defined in \eqref{H(k)}. Focusing on $|q| \ll 1$, one notices that, since $H(K)u_{\pm}(K)= E_{\pm}(K) u_{\pm}(K)=0$, a standard Hellman-Feynmann argument%
\footnote{The {Hellman-Feynmann-type argument} goes as follows. Near a Dirac point, \ie for $k = K+q$ and $|q| \ll 1$, one has
\begin{align*}
\Hi\sub{red}(k)_{r,s} &= \inner{ u_r(k)}{H(k) \, u_s(k)}_{\Hf} \\
&=  \inner{ u_r(K)}{H(K) \, u_s(K)}_{\Hf} +  \\
& \quad + q \cdot \left(\inner{ \nabla_k u_r(K)} {H(K) \, u_s(K)}_{\Hf} + \inner{H(K) \, u_r(K)} {\nabla_k u_s(K)}_{\Hf} \right) + \\
& \quad + q \cdot \inner{ u_r(K)}{\nabla_k H(K) \, u_s(K)}_{\Hf} + \Or(|q|^2).
\end{align*}
(Notice that the derivatives $\nabla_k H(k)$ exist in view of the above-mentioned analyticity of the family $k \mapsto H(k)$). Since $H(K)u_{\pm}(K)=0$, all terms but the last vanish. Clearly,
$$
\inner{u_r(K)}{H(K+q) \, u_s(K)} = q \cdot \inner{u_r(K)}{ \nabla_k H(K) \, u_s(K)} + \Or(|q|^2),
$$
yielding the claim.} %
yields
\begin{equation}\label{Fixed frame}
\Hi\sub{red}(K +q)_{r,s} = \inner{u_r(K)}{H(K+q) \, u_s(K)} + \Or(|q|^2).
\end{equation}
Thus $q \mapsto \Hi\sub{red}(K + q)$ encodes, for $|q| \ll 1$, the local behaviour of the Hamiltonian and its eigenprojectors with respect to \emph{fixed} Bloch functions, \ie Bloch functions evaluated at the Dirac point.

Approximated Bloch functions can be explicitly computed in the tight-binding approximation. Within this approximation, the reduced Hamiltonian \eqref{Reduced Hamiltonian} is approximated by the effective Hamiltonian
\begin{equation}\label{Effective Ham graphene gamma}
\Hi\sub{eff}(k) = \begin{pmatrix} 0 & \gamma_k^* \\ \gamma_k & 0 \end{pmatrix} \quad \text{where} \quad \gamma_k = 1 + \expo{\iu k \cdot a_2} + \expo{\iu k \cdot (a_2 - a_1)}
\end{equation}
with $\set{a_1, a_2}$ the standard Bravais basis for graphene, as in \cite{Wallace,Goerbig}.  Thus, for $|q| \ll 1$  and denoting by $\theta_q$ the polar angle in the plane $(q_1, q_2)$, \ie $|q| \expo{\iu \theta_q} = q_1 + \iu q_2$, one obtains
\begin{equation}\label{Effective Ham variant}
\Hi\sub{eff}(K +q) = v_{\rm F} \begin{pmatrix}
                          0              &  q_1 - \iu q_2 \\
                          q_1 + \iu q_2  &  0
                        \end{pmatrix} + \Or(|q|^2)
                   =  v_{\rm F} |q| \begin{pmatrix}
                          0              &  \expo{-\iu \theta_q} \\
                          \expo{\iu \theta_q} &  0
                        \end{pmatrix} + \Or(|q|^2).
\end{equation}
One easily checks (see Section \ref{sec:n=1}) that the eigenprojectors of the leading-order Hamiltonian, which is proportional to
\begin{equation}\label{Effective Ham graphene}
\Hi\sub{mono}(q) =  |q| \begin{pmatrix}
                          0              &  \expo{-\iu \theta_q} \\
                          \expo{\iu \theta_q} &  0
                        \end{pmatrix},
\end{equation}
are not continuous at $q=0$, implying that -- within the validity of the tight-binding approximation -- also the eigenprojectors of $H(k)$ are not continuous at $k =K$, as claimed. In the case of bilayer graphene, the same approach yields a leading-order effective Hamiltonian proportional to
\begin{equation}\label{Effective Ham bigraphene}
\Hi\sub{bi}(q) =  |q|^2 \begin{pmatrix}
                          0              &  \expo{-\iu 2 \theta_q} \\
                          \expo{\iu 2 \theta_q} &  0
                        \end{pmatrix}
\end{equation}
which also corresponds to a singular family of projectors. As pointed out by many authors \cite{Park-Marzari, McCann_Falko2006} the effective Hamiltonians \eqref{Effective Ham graphene} and \eqref{Effective Ham bigraphene} are related to different values of a \virg{topological index}. A rigorous definition of that index is the first task of our paper.

For the sake of completeness, we mention that according to \cite{MinMacDonald} the low-energy effective Hamiltonian for multilayer graphene (with $m$ layers and ABC stacking) is proportional to
\begin{equation}\label{Effective Ham multigraphene}
\Hi\sub{multi}(q) =   |q|^m \begin{pmatrix}
                          0              &  \expo{-\iu m \theta_q} \\
                          \expo{\iu m \theta_q} &  0
                        \end{pmatrix}, \qquad \qquad m \in \N^{\times}.
\end{equation}

%%%%%%%% SECTION 1.2 %%%%%%%%%%%%%%%%%%%%%%%%%%%%%%%%%%%%%%%%%%%%%%%%%%%%%%%%%%

\subsection{Singular families of projectors}

Abstracting from the specific case of graphene, we study the topology of the Bloch eigenspaces around an eigenvalue intersection. We consider any periodic Hamiltonian and a selected Bloch band of interest $E_s$ which intersects the other Bloch bands in finitely many points $K_1, \ldots, K_M$. Focusing on one of them, named $k_0$, the crucial information is the behaviour of the function $k \mapsto P_s(k)$ in a neighbourhood $R \subset \Base$ of the intersection point $k_0$. In view of the example of graphene, we set the following

\begin{df} \label{Def:singularproj}
Let $\Hi$ be a separable Hilbert space. A family of orthogonal projectors $\set{P(k)}_{k \in R \setminus \set{k_0}} \subset \BF(\Hi)$ such that $k \mapsto P(k)$ is $C^{\infty}$-smooth in $\dot R =  R \setminus \set{k_0} \subset \Base$ is called a \textbf{singular family} if it cannot be continuously extended to $k = k_0$, \ie if $\lim_{k \to k_0}P(k)$ does not exist. In such a case, the point $k_0$ is called \emph{singular point}. A family which is not singular is called \emph{regular family}.
\end{df}

In the case of insulators, the geometric structure corresponding to the regular family of projectors \eqref{Eq:regularproj} is a smooth vector bundle, whose topological invariants can be investigated with the usual tools of differential geometry (curvature, Chern classes, \ldots). On the other hand, in the case of metals and semimetals one deals with a singular family of projectors, thus the usual geometric approach is not valid anymore. Indeed, in the case of conical intersections the interesting information is \virg{hidden} in the singular point. In particular, if we assume for simplicity that the neighbourhood $R$ is a small ball around the Dirac point $k_0$, then when $d=2$ the set $R \setminus \set{k_0}$ can be continuously retracted to a circle $S^1$, and it is well known that every complex line bundle over $S^1$ is trivial, so it has no non-trivial topological invariants. In view of that, to define the topological invariants of a singular family of projectors we have to follow a different strategy, which is the content of Section \ref{sec:topology}. These invariants are also related to a distributional approach to the Berry curvature, as detailed in Appendix \ref{AppB}.

%%%%%%%% SECTION 2 %%%%%%%%%%%%%%%%%%%%%%%%%%%%%%%%%%%%%%%%%%%%%%%%%%%%%%%%%%

\section{\texorpdfstring{Topology of a singular $2$-dimensional family of projectors}{Topology of a singular 2-dimensional family of projectors}} \label{sec:topology}

Motivated by the example of graphene, in this paper we investigate the case of singular $2$-dimensional families of projectors ($d=2$, arbitrary rank).  Since the relevant parameter is the codimension, this case corresponds to the generic case in the Born-Oppenheimer theory of molecules \cite{Hagedorn1,Hagedorn2,Fermanian Lasser}. The analysis of higher-codimensional cases will be addressed in a future paper.

For the moment being, we focus on just one singularity of the family of projectors, say at the point $k_0$. We will conduct a \emph{local} analysis on the ``topological behaviour'' determined by such a singular point: hence we restrict our attention on a simply-connected region $R \subset \T^*_2$ containing $k_0$ such that for any other Bloch band $E_n$
\[ E_s(k) \ne E_n(k) \quad \text{for all } k \in R \setminus \{k_0\}. \]
In other words, Bloch bands are allowed to intersect only at $k_0$ in the region $R$.

%%%%%%%% SECTION 2.1 %%%%%%%%%%%%%%%%%%%%%%%%%%%%%%%%%%%%%%%%%%%%%%%%%%%%%%%%%%

\subsection{\texorpdfstring{A geometric $\Z$-invariant: eigenspace vorticity}{A topological Z-invariant: eigenspace vorticity}} \label{sec:vorticity}

To define a local geometric invariant characterising the behaviour of a $2$-dimensional family of projectors around the (possibly singular) point $k_0$, we start from the following

\begin{datum} \label{DatumProj}
Let $\Hi$ be a separable Hilbert space, and let $R \subset \T^*_2$ be a simply connected region containing $k_0$. We consider a family of projectors $\set{P_s(k)}_{k \in R \setminus \{k_0\}} \subset \mathcal{B}(\Hi)$ which is $C^\infty$-smooth in $R \setminus \{k_0\}$.
\end{datum}

We restrict our attention to the local behaviour of the family $\set{P_s(k)}$ around $k_0$. Suppose $r>0$ is so small that $U := \set{k \in \R^2 : |k - k_0| < r}$ is all contained in $R$. In order to define an integer-valued (local) geometric invariant $\nv$, baptised \emph{eigenspace vorticity},we provide the following computable recipe. First, introduce a smoothing parameter $\mu \in [-\mu_0, \mu_0]$, $\mu_0 > 0$, so that $\set{P_s(k)}_{R \setminus U}$ can be seen as the $\mu = 0$ case of a \emph{deformed} family of projectors $\set{P_s^\mu(k)}$, which for $\mu \ne 0$ is defined and regular on the whole region $R$. We also assume that the dependence on $\mu$ of such a deformed family of projectors is at least of class $C^2$.

The deformed family $\set{P_s^{\mu}(k)}$ allows us to construct a vector bundle $\calL_s$, which we call the \emph{smoothed Bloch bundle}, over the set
\[ B := \left( R \times [-\mu_0, \mu_0] \right) \setminus C, \]
where $C$ denotes the ``cylinder'' $C := U \times (-\mu_0, \mu_0)$. The total space of this vector bundle is
\begin{equation} \label{Ls} \calL_s := \set{((k,\mu),v) \in B \times \Hi : v \in \Ran P_s^{\mu}(k)}. \end{equation}
Explicitly, the fibre of $\calL_s$ over a point $(k,\mu) \in B$ is the range of the projector $P_s^{\mu}(k)$. We may look at $\calL_s$ as a collection of ``deformations'' of the bundle
\[ \calL_s^0 := \set{(k,v) \in (R \setminus U) \times \Hi : v \in \Ran P_s(k)} \]
over $R \setminus U$, which is defined solely in terms of the undeformed family $\set{P_s(k)}$.

We denote by $\omega_{s}$ the Berry curvature for the smoothed Bloch bundle $\calL_s$. Posing for notational convenience $k_3 = \mu$ and $\partial_j = {\partial}/{\partial k_j}$, one has
\begin{equation}\label{Berry curvature}
\omega_s = \iu \sum_{j,\ell = 1}^{3} (\omega_s)_{j,\ell}(k) \, \di k_j \wedge \di k_{\ell} \quad \mbox{ where } (\omega_s)_{j,\ell}(k) = \tr \left(P_s^{\mu}(k) \left[ \partial_j P_s^{\mu}(k), \partial_{\ell} P_s^{\mu}(k) \right] \right).
\end{equation}
Set $\mathcal{C} := \partial C$ for the ``cylindrical'' internal boundary of the base space $B$, and
\[ \dot{C} := C \setminus \left\{ \left( k = k_0, \mu = 0 \right) \right\} \]
for the ``pointed cylinder'', which is the complement of $B$ in
\[ \widehat{B} := \left( R \times [-\mu_0, \mu_0] \right) \setminus \left\{ \left( k = k_0, \mu = 0 \right) \right\}. \]

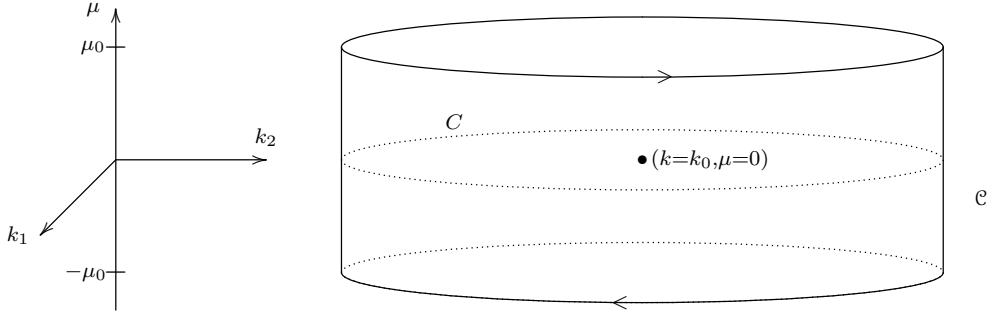
\begin{figure}[ht]
\centering
\begin{xy}
(-40,15); (0,15), {\ellipse(,.1){-}};
(-40,0); (0,0), {\ellipse(,.1){.}};
(-40,-15); (0,-15), {\ellipse(,.1){.}};
(-40,-15); (0,-15), {\ellipse(,.1)__{-}};
(-40,15); (-40,-15) **\dir{-};
(40,15); (40,-15) **\dir{-};
(0,0)*{\scriptstyle{\bullet}}; (9,0)*{\scriptstyle{(k=k_0,\mu=0)}};
(3,11)*{\scriptstyle{>}}; (-3,-19)*{\scriptstyle{<}};
(-70,-20); (-70,20) **\dir{-} ?>*\dir{>}; (-73,20)*{\scriptstyle{\mu}};
(-70,0); (-80,-10) **\dir{-} ?>*\dir{>}; (-83,-10)*{\scriptstyle{k_1}};
(-70,0); (-50,0) **\dir{-} ?>*\dir{>}; (-50,3)*{\scriptstyle{k_2}};
(-70,15)*{-}; (-73,15)*{\scriptstyle{\mu_0}};
(-70,-15)*{-}; (-74,-15)*{\scriptstyle{-\mu_0}};
(-25,5)*{\scriptstyle{C}}; (45,-5)*{\scriptstyle{\mathcal{C}}};
\end{xy}
\caption{The cylinder $C$ and its surface $\mathcal{C}$. The boundary of this surface is oriented according to the outward normal direction.}
\label{fig:cylinder}
\end{figure}

\begin{df}[\textbf{Eigenspace vorticity}] \label{Def:vorticity}
Let $\set{P_s(k)}_{k \in R \setminus \{k_0\}}$ be a family of projectors as in Datum \ref{DatumProj}, and $\set{P_s^{\mu}(k)}_{(k,\mu) \in B}$ be a smoothed family of projectors, as described above. Let $\calL_s$ be the vector bundle \eqref{Ls} over $B$, and denote by $\omega_{s}$ its Berry curvature, as in \eqref{Berry curvature}. The \textbf{eigenspace vorticity} of the smoothed family of projectors $\set{P_s^{\mu}(k)}_{k \in R \setminus \{k_0\}}$ (around the point $k_0 \in R$) is the integer
\begin{equation} \label{n=ch1(L)}
\nv = \nv(P_s) := - \frac{1}{2\pi} \int_{\mathcal{C}} \omega_{s} \in \Z.
\end{equation}
\end{df}

The number $\nv$ is indeed an integer, because it equals (up to a conventional sign) the \emph{first Chern number} of the vector bundle $\calL_s \to \mathcal{C}$. This integer is the topological invariant that characterises the behaviour of $\set{P^{\mu}_s(k)}$ around $k_0$; in particular, when $\dim \Ran P(k) = 1$ for $k \ne k_0$, it selects one of the canonical models that will be introduced in the next Subsection. 
Notice that there is an ambiguity in the \emph{sign} of the integer $\nv$ defined in \eqref{n=ch1(L)}, related to the orientation of the cylindrical surface of integration $\mathcal{C}$. Indeed, if one exchanges $\mu$ with $-\mu$ one obtains the opposite value for $\nv$. This ambiguity is resolved once the orientation of the $\mu$-axis is fixed.

\begin{rmk}[\textbf{Deformations in computational physics}] \label{Rem computational physics}
The above smoothing procedure corresponds to a common practice in computational solid-state physics%
\footnote{We are grateful to R. Bianco and R. Resta for pointing out this fact to us.}. %
Indeed, the family $\set{P_s(k)}$ usually appears as the collection of the eigenprojectors corresponding to the eigenvalue $E_s(k)$ of a given Hamiltonian operator $H(k)$ (\eg the Hamiltonian \eqref{Reduced Hamiltonian} or \eqref{Effective Ham graphene} for graphene). When dealing with an intersection of eigenvalues, to improve the numerical accuracy it is often convenient to consider a family of deformed Hamiltonians $H^{\mu}(k)$ and the corresponding eigenprojectors $P_s^{\mu}(k)$, in such a way that the eigenvalue intersection disappears when $\mu \neq 0$. For example, when dealing with monolayer graphene the deformed Hamiltonian is obtained by varying the electronegativity of the carbon atoms in the numerical code.
\end{rmk}

Before we proceed, we have to comment on the well-posedness of our definition of eigenspace vorticity. First of all, the definition relies on the existence of a deformation $\set{P_s^{\mu}(k)}$ of the original family $\set{P_s(k)}$ as in Datum \ref{DatumProj}, which for $\mu \ne 0$ is regular also at $k = k_0$. Such a deformation indeed exists in all cases of practical interest, namely when $P_s(k)$ arises as the eigenprojector,  relative to an eigenvalue $E_s(k)$, of a $k$-dependent Hamiltonian $H(k)$, such that at $k = k_0$ two of its eigenvalues coincide. Indeed, in view of the von Neumann-Wigner theorem \cite{Agrachev,von Neumann}, the eigenvalue intersection for general Hermitian matrices is highly non-generic, \ie it is a codimension-$3$ phenomenon. As the base $B$ of the smoothed Bloch bunde is $3$-dimensional, it may be assumed that the eigenvalue intersection -- \ie the singularity of the family of projectors -- occurs only at the point $(k = k_0, \mu = 0)$; more precisely, the generic
deformation $\set{P_s^{\mu}(k)}$ (which may be assumed to be the family of eigenprojectors of some deformed Hamiltonian $H^\mu(k)$) will satisfy this hypothesis.

In addition to this, we must investigate how the definition \eqref{n=ch1(L)} of $\nv$ depends on the specific choice of the deformation. Since $\nv$ is a topological quantity, it is stable under small perturbations of the deformed family of projectors $\set{P_s^{\mu}(k)}$. This means that model Hamiltonians which are ``close'', in some suitable sense, will produce the same eigenspace vorticity. More formally, we argue as follows. Let $\widetilde{P}_s^{\mu}(k)$ be another smoothing deformation of the family of projectors $P_s(k)$, so that in particular
\[ P_s^{\mu = 0}(k) = P_s(k) = \widetilde{P}_s^{\mu = 0}(k), \quad \text{for all } k \in R \setminus \set{k_0}. \]
Moreover, define the $2$-form $\widetilde{\omega}_s$ and its components $\left( \widetilde{\omega}_s \right)_{j,\ell}(k)$, $j,\ell \in \set{1, 2, 3}$, as in \eqref{Berry curvature}, with $P_s^{\mu}(k)$ replaced by $\widetilde{P}_s^{\mu}(k)$.

\begin{lemma}[\textbf{Irrelevance of the choice among close deformations}] \label{nvwelldefined}
Suppose that the maps $\widehat{B} \ni (k,\mu) \mapsto P_s^{\mu}(k) \in \mathcal{B}(\Hi)$ and $\widehat{B} \ni (k,\mu) \mapsto \widetilde{P}_s^{\mu}(k) \in \mathcal{B}(\Hi)$ are of class $C^2$, and that
\begin{equation} \label{P-P<1}
\left\| P^{\mu}(k) - \widetilde{P}^{\mu}(k) \right\|_{\mathcal{B}(\Hi)} < 1 \quad \text{for all } (k,\mu) \in \mathcal{C}.
\end{equation}
Then\begin{equation} \label{Omega_equality}
\int_{\mathcal{C}} \omega_s = \int_{\mathcal{C}} \widetilde{\omega}_s.
\end{equation}
\end{lemma}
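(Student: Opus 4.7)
The plan is to reduce the equality of the two curvature integrals to the topological invariance of the first Chern number. As noted right after Definition \ref{Def:vorticity}, the integral $\int_{\mathcal{C}} \omega_s$ equals, up to a conventional sign, $2\pi$ times the first Chern number of the smoothed Bloch bundle $\calL_s$ restricted to the closed oriented surface $\mathcal{C}$, which is homeomorphic to $S^2$ as the boundary of the solid cylinder $C$. The same identity holds for $\widetilde{\omega}_s$ and the analogously constructed bundle $\widetilde{\calL}_s$. Hence it is enough to produce an isomorphism of complex vector bundles $\calL_s|_{\mathcal{C}} \simeq \widetilde{\calL}_s|_{\mathcal{C}}$, which forces the two Chern numbers to coincide.

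To construct such an isomorphism, I would invoke the Kato--Nagy intertwining formula. Whenever two orthogonal projectors $P, \widetilde{P}$ on a Hilbert space satisfy $\| P - \widetilde{P} \| < 1$, the operator
\[
W(P,\widetilde{P}) := \bigl( \1 - (P - \widetilde{P})^2 \bigr)^{-1/2} \bigl[ \widetilde{P} P + (\1 - \widetilde{P})(\1 - P) \bigr]
\]
is well-defined and unitary, satisfies $W P W^* = \widetilde{P}$, and depends smoothly on its arguments via the holomorphic functional calculus. Applied pointwise on $\mathcal{C}$ using the hypothesis \eqref{P-P<1}, this yields a $C^2$-family of unitaries $\set{W^{\mu}(k)}_{(k,\mu) \in \mathcal{C}}$ intertwining $P_s^{\mu}(k)$ and $\widetilde{P}_s^{\mu}(k)$ in the strong sense that $W^{\mu}(k) \, P_s^{\mu}(k) \, W^{\mu}(k)^{*} = \widetilde{P}_s^{\mu}(k)$.

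The assignment $((k,\mu), v) \mapsto ((k,\mu), W^{\mu}(k) v)$ then provides a smooth isomorphism of Hermitian vector bundles $\calL_s|_{\mathcal{C}} \to \widetilde{\calL}_s|_{\mathcal{C}}$. Since the first Chern number is an invariant of the isomorphism class of a complex vector bundle over a closed oriented $2$-manifold, the equality \eqref{Omega_equality} follows at once.

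The main technical obstacle I foresee is ensuring that the Kato--Nagy construction produces a genuinely smooth (not merely continuous) bundle isomorphism: one has to verify that $(\1 - (P_s^{\mu} - \widetilde{P}_s^{\mu})^2)^{-1/2}$ inherits $C^2$-regularity from its argument, which is standard but hinges on the strict inequality in \eqref{P-P<1} holding uniformly on the compact surface $\mathcal{C}$, so that the relevant spectrum stays bounded away from $1$. A slick alternative that avoids the language of vector bundles entirely is to consider the one-parameter interpolating family $P_t^{\mu}(k)$ built from $W$ and observe that $t \mapsto \int_{\mathcal{C}} \omega_s(t)$ depends continuously on $t$; since its values lie in $2\pi \Z$ by the Chern-integrality already noted, it must be constant, which yields \eqref{Omega_equality} directly.
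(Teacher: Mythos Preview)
Your argument is correct and matches the paper's own proof: the paper also invokes the Kato--Nagy unitary to intertwine the two families of projectors, observes that this yields an isomorphism of the associated bundles over $\mathcal{C}$, and concludes that the Chern numbers agree (it addresses the $C^2$-regularity of the square root in a footnote via the convergent power series, just as you anticipate). The only addition in the paper is that, ``for the sake of clarity,'' it also carries out an explicit computation showing $\widetilde{\omega}_s - \omega_s = \di\beta$ with $\beta = 2\iu\,\tr\bigl(P_s^{\mu}(k) W^{\mu}(k)^{-1}\,\di W^{\mu}(k)\bigr)$, and then applies Stokes' theorem---a direct alternative to the Chern-number argument rather than the homotopy-interpolation alternative you sketch.
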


\begin{proof}
By a result of Kato and Nagy \cite[Sec. I.6.8]{Kato}, the hypothesis \eqref{P-P<1} implies that there exists a family of unitary operators $W^{\mu}(k)$ such that
\begin{equation} \label{W_intertwines}
\widetilde{P}_s^{\mu}(k) = W^{\mu}(k) P_s^{\mu}(k) W^{\mu}(k)^{-1}. 
\end{equation}
The explicit Kato-Nagy's formula
\begin{equation} \label{W definition}
W^{\mu}(k) := \left( \1 - (P_s^{\mu}(k) - \widetilde{P}_s^{\mu}(k))^2 \right)^{-1/2} \left( \widetilde{P}_s^{\mu}(k) P_s^{\mu}(k) + (\1 - \widetilde{P}_s^{\mu}(k))(\1 - P_s^{\mu}(k))\right)
\end{equation}
shows that the map $(k, \mu) \mapsto W^{\mu}(k)$ has the same regularity as  $(k, \mu) \mapsto P_s^{\mu}(k) - \widetilde{P}_s^{\mu}(k)$%
\footnote{The presence of the inverse square root does not spoil the regularity of $W^{\mu}(k)$. Indeed, setting $Q = Q^{\mu}(k) = (P_s^{\mu}(k) - \widetilde{P}_s^{\mu}(k))^2$, one can expand
\begin{equation} \label{powerseries} (\1 - Q)^{-1/2} = \sum_{n=0}^{\infty} \binom{-1/2}{n} (-Q)^n. \end{equation}
The above power series is absolutely convergent if $\|Q\| < 1$ (which follows from \eqref{P-P<1}), and in the same range it is term-by-term differentiable.}. %%%End footnote

Formula \eqref{W_intertwines} implies that the vector bundles $\calL_s$ and $\widetilde{\calL}_s$, corresponding to the deformed families of projectors {$\set{{P}_s^{\mu}(k)}$ and $\{\widetilde{P}_s^{\mu}(k)\}$} respectively, are isomorphic, the isomorphism being implemented fibre-wise by the unitary $W^{\mu}(k)$. Thus they have the same Chern number, \ie equation \eqref{Omega_equality} holds true.

Alternatively,  for the sake of clarity, we provide an explicit proof of  {\eqref{Omega_equality}} by showing that the difference $\widetilde{\omega}_s - \omega_s$ is an exact form $\di \beta$ on $\mathcal{C}$; then by applying Stokes' theorem one gets
\[ \int_{\mathcal{C}} (\widetilde{\omega}_s - \omega_s) = \int_{\mathcal{C}} \di \beta = 0 \]
because $\mathcal{C}$ has no boundary. This will conclude the proof of the Lemma.

For the sake of readability, in the following we will use the abbreviations
\begin{gather*}
P := P_s^{\mu}(k), \quad \widetilde{P} := \widetilde{P}_s^{\mu}(k), \quad W := W^{\mu}(k), \\
\omega_{j,\ell} := \left( \omega_s \right)_{j,\ell}(k), \quad \widetilde{\omega}_{j,\ell} := \left( \widetilde{\omega}_s \right)_{j,\ell}(k).
\end{gather*}

A lengthy but straightforward computation, that uses only the cyclicity of the trace, the relations $P^2 = P$ and $W W^{-1} = \1 = W^{-1} W$ and their immediate consequences
\[ P (\partial_j P) = \partial_j P - (\partial_j P) P  \qquad \mbox{ and } \qquad  W^{-1} (\partial_j W) = - (\partial_j W^{-1}) W, \]
yields to
\begin{equation} \label{Contazzo1}
\begin{aligned}
\widetilde{\omega}_{j,\ell}  - \omega_{j,\ell} = & \tr \left\{ P (\partial_j W^{-1}) (\partial_{\ell} W) - P (\partial_{\ell} W^{-1}) (\partial_j W) \right\} \\
& + \tr \left\{ (\partial_j P) W^{-1} (\partial_{\ell} W) - (\partial_{\ell} P) W^{-1} (\partial_j W) \right\}.
\end{aligned}
\end{equation}
Summing term by term the two lines in \eqref{Contazzo1}, one gets
\begin{align*}
\widetilde{\omega}_{j,\ell}  - \omega_{j,\ell}
& = \tr \left\{ \left( P (\partial_j W^{-1}) + (\partial_j P) W^{-1} \right) (\partial_{\ell} W) - \left( P (\partial_{\ell} W^{-1}) + (\partial_{\ell} P) W^{-1} \right) (\partial_j W) \right\} = \\
& = \tr \left\{ \partial_j (P W^{-1}) \partial_{\ell} W - \partial_{\ell} (P W^{-1}) \partial_j W \right\} = \\
& = \tr \left\{ \partial_j \left( P W^{-1} (\partial_{\ell} W) \right) - P W^{-1} \partial_j \partial_{\ell} W  - \partial_{\ell} \left( P W^{-1} (\partial_j W) \right) + P W^{-1} \partial_{\ell} \partial_j W \right\} = \\
& = \partial_j \left( \tr \left\{ P W^{-1} (\partial_{\ell} W) \right\} \right) - \partial_{\ell} \left( \tr \left\{ P W^{-1} (\partial_j W) \right\} \right),
\end{align*}
where we used the fact that $(k, \mu) \mapsto W^{\mu}(k)$ is at least of class $C^2$, so that the mixed second derivatives cancel. In summary, the explicit computation shows that
\[ \widetilde{\omega}_{j,\ell} - \omega_{j,\ell} = \partial_j \tr \left( P_s^{\mu}(k) W^{\mu}(k)^{-1} \partial_{\ell} W^{\mu}(k) \right) - \partial_{\ell} \tr \left( P_s^{\mu}(k) W^{\mu}(k)^{-1} \partial_j W^{\mu}(k) \right) \]
which can be written in a more intrinsic form as
\[ \widetilde{\omega}_s - \omega_s = \di \beta, \quad \text{with} \quad \beta = 2 \, \iu \, \tr \left( P_s^{\mu}(k) W^{\mu}(k)^{-1} \di W^{\mu}(k) \right). \]
This concludes the proof.
\end{proof}

\begin{rmk}[\textbf{Numerical evaluation of the eigenspace vorticity}] \label{Numerical evaluation}
The numerical evaluation of $n\sub{v}$ can be performed by replacing the cylindrical surface $\mathcal{C}$ with any surface homotopically equivalent to it in $\widehat{B}$, \eg with any polyhedron enclosing the point $\left(k_0,0\right)$. Numerically, the eigenspace vorticity is evaluated by summing up contributions from all faces of the polyhedron. The integral of the curvature $\omega_s$ over each face is computed by a discretization scheme which approximates the integral of the Berry connection over the perimeter of the face (see \eg \cite{Park-Marzari} and references therein). The latter approach has been implemented by R. Bianco in the case of the Haldane model \cite{Haldane88}, and provided results in agreement with the analytical computation already in an $8$-point discretization, by using a cube.
\end{rmk}

Even with the above result, the value of $\nv$ may still {\it a priori} depend on the choice of the specific deformation $\set{P_s^{\mu}(k)}$, and not only on the original family of projectors $\set{P_s(k) = P_s^{\mu=0}(k)}$. However, as we will explain in Section \ref{sec:pseudospin}, when $\Hi = \C^2$ there exists a class of ``distinguished deformations'', called \emph{hemispherical}, which provide a natural choice of deformation to compute the eigenspace vorticity. For such hemispherical deformations, the eigenspace vorticity indeed depends only on the undeformed family of projectors, \ie on the Datum \ref{DatumProj}. In all the relevant examples, as the tight-biding model of graphene or the Haldane model, such hemispherical deformations appear naturally.

For the sake of clarity, in the next Subsection we introduce a family of \emph{canonical models}, one for each value of $\nv \in \Z$,  having the property of being hemispherical. For the case of a general Hilbert space $\Hi$, the identification of a ``distinguished'' class of deformations is a challenging open problem.

%%%%%%%% SECTION 2.2 %%%%%%%%%%%%%%%%%%%%%%%%%%%%%%%%%%%%%%%%%%%%%%%%%%%%%%%%%%

\subsection{The canonical models for an intersection of eigenvalues} \label{sec:models}

In this Subsection, we introduce effective Hamiltonians whose eigenspaces model the topology of Bloch eigenspaces, locally around a point $k_0$ where Bloch bands intersect. These will be also employed as an example on how to perform the $\mu$-deformation for a family of eigenprojectors.

Hereafter, we focus on the case of a system of two \emph{non-degenerate} Bloch bands, \ie $P_s(k)$ is a projector on $\Hi = \C^2$ and $\dim \Ran P_s(k)=1$ for $k \ne k_0$.

%%%%%%%% SECTION 2.2.1 %%%%%%%%%%%%%%%%%%%%%%%%%%%%%%%%%%%%%%%%%%%%%%%%%%%%%%%%%%

\subsubsection{\texorpdfstring{\textbf{The $1$-canonical model}}{The 1-canonical model}} \label{sec:n=1}

Assume from now on that $k \in U = \set{k \in \R^2 : |k-k_0|< r}$, where $r > 0$ is sufficiently small; as before, set $q = k - k_0$. We will mainly work in polar coordinates in momentum space, and hence denote by $(|q|, \ph)$ the coordinates of $q$, namely $q_1 + \iu q_2 = |q| \, \expo{\iu \ph}$.

Following \cite{Hagedorn1}, we consider the effective Hamiltonian\footnote{The Hamiltonian \eqref{Heff} is unitarily equivalent to the effective Hamiltonian for monolayer graphene \eqref{Effective Ham graphene}, by conjugation with a \emph{$k$-independent} unitary matrix. Thus, both \eqref{Effective Ham graphene} and \eqref{Heff} are reasonable choices for a canonical local model describing conical intersections of eigenvalues. We prefer the choice \eqref{Heff}, since this Hamiltonian has real entries.}%
\begin{equation} \label{Heff}
H\sub{eff}(q) := \begin{pmatrix} q_1 & q_2 \\ q_2 & -q_1 \end{pmatrix} = |q| \begin{pmatrix} \cos \ph & \sin \ph \\ \sin \ph & - \cos \ph \end{pmatrix}.
\end{equation}
The eigenvalues of this matrix are given by
\[ E_{\pm}(q) = \pm  |q| \]
and thus $H\sub{eff}(q)$ is a good candidate for modelling conical intersections locally (compare \eqref{GrapheneBlochBands}). The eigenfunctions corresponding to $E_{+}(q)$ and $E_{-}(q)$ are respectively
\begin{equation} \label{canonic_conical}
\phi_{1,+}(q) = \eu^{\iu \ph/2} \begin{pmatrix} \cos(\ph/2) \\ \sin(\ph/2) \end{pmatrix}, \quad \phi_{1,-}(q) = \eu^{\iu \ph/2} \begin{pmatrix} - \sin(\ph/2) \\ \cos(\ph/2) \end{pmatrix}.
\end{equation}
The phases are chosen so that these functions are single-valued when we identify $\ph = 0$ and $\ph = 2 \pi$. We will call $\phi_{1,\pm}(q)$ the \emph{canonical eigenvectors for the conical intersection} at the singular point $k_0$. These satisfy
\begin{equation} \label{nablaphi}
\partial_{|q|} \phi_{1,\pm} = 0, \quad \partial_{\ph} \phi_{1,\pm} = \frac{1}{2} \left( \pm \phi_{1,\mp} + \iu \phi_{1,\pm} \right).
\end{equation}
The corresponding eigenprojectors are easily computed to be
\begin{equation} \label{ConicalProj}
P_{1,\pm}(q) = \pm \frac{1}{2|q|} \begin{pmatrix} q_1 \pm |q| & q_2 \\ q_2 & - q_1 \pm |q| \end{pmatrix} = \pm \frac{1}{2} \begin{pmatrix} \cos \ph \pm 1 & \sin \ph \\ \sin \ph & - \cos \ph \pm 1 \end{pmatrix}.
\end{equation}

The above expressions show that the families $\{P_{1,\pm}(q)\}_{q \in \dot{U}}$ are singular at $q=0$, in the sense of Definition \ref{Def:singularproj}.  As explained in Section \ref{sec:vorticity}, in order to compute their vorticities we have to introduce a smoothing parameter $\mu \in [-\mu_0, \mu_0]$ to remove the singularity: this is achieved by considering the so-called \emph{avoided crossings} \cite{Hagedorn2}. Explicitly, we deform the Hamiltonian \eqref{Heff} to get
\begin{equation} \label{Heffmu}
H\sub{eff}^\mu(q) := \begin{pmatrix} q_1 & q_2 + \iu \mu \\ q_2 - \iu \mu & -q_1 \end{pmatrix} = |q| \begin{pmatrix} \cos \ph & \sin \ph + \iu \eta \\ \sin \ph - \iu \eta & - \cos \ph \end{pmatrix},
\end{equation}
where $\mu \in [- \mu_0, \mu_0]$ is a small real parameter and
\[ \eta = \eta^\mu(q) := \frac{\mu}{|q|}. \]
In this case, the eigenvalues are
\[ E_{\pm}^\mu(q) := \pm \sqrt{|q|^2+\mu^2} = \pm |q| \sqrt{1+\eta^2} \]
and thus, if $\mu \ne 0$, the energy bands do not intersect.

As the matrix $H\sub{eff}^\mu(q)$ is not real, we look for complex eigenfunctions of the form $v + \iu u$. These can be found by solving the system
\begin{equation} \label{system}
\begin{cases}
\cos \ph \, v_1 + \sin \ph \, v_2 = \pm \sqrt{1+\eta^2} \, v_1 + \eta \, u_2, \\
- \sin \ph \, v_1 + \cos \ph \, v_2 = \mp \sqrt{1+\eta^2} \, v_2 + \eta \, u_1, \\
\cos \ph \, u_1 + \sin \ph \, u_2 = \pm \sqrt{1+\eta^2} \, u_1 - \eta \, v_2, \\
- \sin \ph \, u_1 + \cos \ph \, u_2 = \mp \sqrt{1+\eta^2} \, u_2 - \eta \, v_1,
\end{cases}
\end{equation}
with respect to the unknowns $(v_1, v_2, u_1, u_2)$, and then imposing that they coincide, up to the phase $\eu^{\iu \ph/2}$, with \eqref{canonic_conical} when $\mu = 0$ and $q \ne 0$ (a condition on $(v_1, v_2)$). After choosing the appropriate phase, one gets the \emph{canonical eigenvectors for the avoided crossing}
\begin{equation} \label{canonic_avoided}
\phi_{1,\pm}^\mu(q) := \frac{1}{\sqrt{1+\alpha^2}} \left[ \phi_{1,\pm}(q) + \iu \alpha \phi_{1,\mp}(q) \right]
\end{equation}
where
\begin{equation} \label{alpha}
\alpha = \alpha^\mu(q) := \frac{1-\sqrt{1+\eta^2}}{\eta} = \frac{|q|-\sqrt{|q|^2+\mu^2}}{\mu}.
\end{equation}
One can easily check that the eigenprojectors associated to these eigenvectors are
\begin{align*}
P_{1,\pm}^\mu(q) & = \pm \frac{1}{2 \sqrt{|q|^2 + \mu^2}} \begin{pmatrix} q_1 \pm \sqrt{|q|^2 + \mu^2} & q_2 + \iu \mu \\ q_2 - \iu \mu & - q_1 \pm \sqrt{|q|^2 + \mu^2} \end{pmatrix} = \\
& = \pm \frac{1}{2 \sqrt{1+\eta^2}} \begin{pmatrix} \cos \ph \pm \sqrt{1+\eta^2} & \sin \ph + \iu \eta \\ \sin \ph - \iu \eta & - \cos \ph \pm \sqrt{1+\eta^2}
\end{pmatrix}.
\end{align*}

It is convenient to describe the behavior of the eigenspaces in geometric terms. For a fixed choice of the index $s \in \{+, -\}$, we introduce a line bundle $\PB_{1,s}$, called the \emph{\Composite bundle}, on the pointed cylinder $\dot{C}$,
whose fibre at the point $(q,\mu) \in \dot{C}$ is just the range of the projector $P_{1,s}^\mu(q)$.
As the \Composite bundle $\PB_{1,s}$ is (up to retraction of the basis)  a line bundle over the $2$-dimensional manifold $\mathcal{C}$, it is completely
characterised by its first Chern number. The latter can be computed as the integral over $\mathcal{C}$ of the Berry curvature, which can be interpreted as an ``inner'' vorticity of the family $\set{P_{1,s}(q)}$ (compare Definition
\ref{Def:vorticity}).

Explicitly, by using polar coordinates, the Berry curvature of the \Composite bundle $\PB_{1,\pm}$ reads
\begin{equation}  \label{omegapmmu}
\begin{aligned}
\omega_{1,\pm} = & -2 \left( \Im \scal{\partial_{|q|} \phi_{1,\pm}^\mu(q)}{\partial_{\ph} \phi_{1,\pm}^\mu(q)}_{\Hf} \di |q| \wedge \di \ph \, + \right. \\
& + \Im \scal{\partial_{|q|} \phi_{1,\pm}^\mu(q)}{\partial_{\mu} \phi_{1,\pm}^\mu(q)}_{\Hf} \di |q| \wedge \di \mu \, + \\
& + \left. \Im \scal{\partial_{\ph} \phi_{1,\pm}^\mu(q)}{\partial_{\mu} \phi_{1,\pm}^\mu(q)}_{\Hf} \di \ph \wedge \di \mu \right).
\end{aligned}
\end{equation}
By using \eqref{nablaphi}, one computes the derivatives appearing in the above expression: this yields to
\begin{equation} \label{derivatives}
\begin{aligned}
\partial_{|q|} \phi_{1,\pm}^\mu & = \frac{\iu}{1+\alpha^2} \, \phi_{1,\mp}^\mu \partial_{|q|} \alpha, \\
\partial_{\mu} \phi_{1,\pm}^\mu & = \frac{\iu}{1+\alpha^2} \, \phi_{1,\mp}^\mu \partial_{\mu} \alpha, \\
\partial_{\ph} \phi_{1,\pm}^\mu & = \frac{1}{2} \frac{1 \mp \alpha}{\sqrt{1 + \alpha^2}} \, \left( \pm \phi_{1,\mp} + \iu \phi_{1,\pm} \right).
\end{aligned}
\end{equation}
As
\[ \scal{\phi_{1,\mp}^\mu(q)}{\pm \phi_{1,\mp}(q) + \iu \phi_{1,\pm}(q)}_{\Hf} = \pm \frac{1 \pm \alpha}{\sqrt{1+\alpha^2}} \]
one obtains
\begin{equation} \label{omega1}
\omega_{1,\pm} = \pm \frac{1}{2} \left[ \partial_{|q|} \left( \frac{\mu}{\sqrt{|q|^2+\mu^2}} \right) \di |q| \wedge \di \ph - \partial_{\mu} \left( \frac{\mu}{\sqrt{|q|^2+\mu^2}} \right) \di \ph \wedge \di \mu \right].
\end{equation}
Integrating the curvature of the Berry connection over the surface $\mathcal{C}$, one obtains the Chern number
\[ \ch_1(\PB_{1,\pm}) = \frac{1}{2 \pi} \int_{\mathcal{C}} \omega_{1,\pm} = \mp 1 \]
or equivalently
\[ \nv(P_{1,\pm}) = \pm 1. \]

%%%%%%%% SECTION 2.2.2 %%%%%%%%%%%%%%%%%%%%%%%%%%%%%%%%%%%%%%%%%%%%%%%%%%%%%%%%%%

\subsubsection{\texorpdfstring{\textbf{The $n$-canonical model}}{The n-canonical model}} \label{sec:n=n}

We now exhibit model Hamiltonians $H_n(q)$, having Bloch bands $E_{\pm}(q) = \pm e(q)$ (with $e(0) = 0$ and $e(q) > 0$ for $q \ne 0$, in order to have eigenvalue intersections only at $q=0$), such that the corresponding \Composite bundles have a Chern number equal to an arbitrary $n \in \Z$, and that in particular $H\sub{eff}(q) = H_{n=1}(q)$ when we choose $e(q) = |q|$ (\ie when a conical intersection of  bands is present). Notice that, if $P_{n, \pm}(q)$ are the eigenprojectors of the Hamiltonian $H_n(q)$, then
\begin{equation} \label{rev_eng}
H_n(q) = E_+(q) P_{n, +}(q) +  E_-(q) P_{n, -}(q).
\end{equation}
Thus, it suffices to provide an {\it ansatz} for the eigenfunctions $\phi_{n, \pm}(q)$ of $H_n(q)$. Set
\begin{equation} \label{n-canonical}
\phi_{n, +}(q) = \eu^{\iu n \ph/2} \begin{pmatrix} \cos(n \ph/2) \\ \sin(n \ph/2) \end{pmatrix}, \quad \phi_{n, -}(q) = \eu^{\iu n \ph/2} \begin{pmatrix} - \sin(n \ph/2) \\ \cos(n \ph/2)
\end{pmatrix}.
\end{equation}
Notice that, for even $n$, the functions $\cos(n \ph/2)$ and $\sin(n \ph/2)$ are already single-valued under the identification of $\ph = 0$ with $\ph = 2 \pi$, but so is the phase $\eu^{\iu n \ph/2}$, so there is no harm in inserting it. These functions will be called the \emph{$n$-canonical eigenvectors}. As for their eigenprojectors, one easily computes
\begin{equation} \label{nProjectors}
P_{n, \pm}(q) = \pm \frac{1}{2} \begin{pmatrix} \cos n \ph \pm 1 & \sin n\ph \\ \sin n \ph & - \cos n \ph \pm 1 \end{pmatrix}. \end{equation}
By \eqref{rev_eng} above we get that the $n$-th Hamiltonian is
\begin{equation} \label{Hn}
H_n(q) = e(q) \begin{pmatrix} \cos n \ph & \sin n \ph \\ \sin n \ph & - \cos n \ph \end{pmatrix}
\end{equation}
provided that we show that the \Composite bundles $\PB_{n, \pm}$ have first Chern number equal to $\mp n$.

In order to evaluate $\ch_1(\PB_{n, \pm})$, we perturb the Hamiltonian $H_n(q)$ in a way which is completely analogous to what we did for $H\sub{eff}(q)$, and define
\begin{equation} \label{Hnmu}
H_n^\mu(q) := e(q) \begin{pmatrix} \cos n\ph & \sin n\ph + \iu \eta \\ \sin n\ph - \iu \eta & - \cos n\ph \end{pmatrix}, \quad \eta = \frac{\mu}{e(q)}.
\end{equation}
The eigenvalues of $H_n^{\mu}(q)$ are $E_{\pm}^\mu(q) = \pm e^{\mu}(q)$, where%
\footnote{One could also choose to put a different eigenvalue $\widetilde{e}(q, \mu)$ in front of the matrix in \eqref{Hnmu}, because the topology of the \Composite bundle depends only on the family of projectors $\set{P_{n, \pm}^{\mu}(q)}$ (which we will introduce in a moment). We will adhere to our definition of $e^{\mu}(q)$ in order to recover the model \eqref{Heffmu} when we set $n = 1$ and $e(q) = |q|$.} %
$e^{\mu}(q) := e(q) \sqrt{1+\eta^2}$.

Its eigenfunctions can be found by solving a system similar to \eqref{system}, obtained just by replacing $\ph$ with $n \ph$. After straightforward calculations, one eventually finds
\[ \phi_{n, \pm}^\mu(q) := \frac{1}{\sqrt{1+\alpha^2}} \left[ \phi_{n, \pm}(q) + \iu \alpha \phi_{n, \mp}(q) \right] \]
with the \emph{same} $\alpha$ as in \eqref{alpha}. One easily checks that the associated eigenprojectors are
\begin{equation} \label{nmuProjectors}
P_{n, \pm}^{\mu}(q) = \pm \frac{1}{2 \sqrt{1+\eta^2}} \begin{pmatrix} \cos n \ph \pm \sqrt{1+\eta^2} & \sin n \ph + \iu \eta \\ \sin n \ph - \iu \eta & - \cos n \ph \pm \sqrt{1+\eta^2}
\end{pmatrix}.
\end{equation}

Now notice that
\[ \partial_{|q|} \phi_{n, \pm} = 0, \quad \partial_{\ph} \phi_{n, \pm} = \frac{n}{2} \left( \pm \phi_{n, \mp} + \iu \phi_{n, \pm} \right). \]
Hence, in order to compute the Berry curvature of the \Composite bundle $\PB_{n, \pm}$, one only has to modify the expression for all the derivatives (and related scalar products with other derivatives) computed above substituting $\phi_{1,\pm}^\mu(q)$ with $\phi_{n, \pm}^\mu(q)$, and multiplying by $n$ the ones involving derivatives with respect to $\ph$. Notice that the dependence of $\eta$, and consequently of $P_{n, \pm}^{\mu}(q)$, on $e(q)$ does not affect this computation, provided the hypothesis $e(0)=0$ holds. Explicitly, this procedure yields to
\begin{equation} \label{omegan}
\omega_{n,\pm} = \pm \frac{n}{2} \left[ \partial_{|q|} \left( \frac{\mu}{\sqrt{e(q)^2+\mu^2}} \right) \di |q| \wedge \di \ph - \partial_{\mu} \left( \frac{\mu}{\sqrt{e(q)^2+\mu^2}} \right) \di \ph \wedge \di \mu \right],
\end{equation}
and correspondingly the ``inner'' vorticity of $\set{P_{n,\pm}(q)}$ equals
\[ \nv(P_{n, \pm}) = - \ch_1(\PB_{n, \pm}) = - \frac{1}{2 \pi} \int_{\mathcal{C}} \omega_{n,\pm} = \pm n, \]
as we wanted.

\begin{rmk}[{\bf The case $n = 0$}] \label{rmk:n=0}
When $n = 0$, the functions $\phi_{0, \pm}(q)$ are constant, and hence are defined on the whole disc  $U$. Correspondingly, the \Composite bundles $\PB_{0, \pm}$ are both isomorphic to the trivial line bundle $\left( U \times [-\mu_0, \mu_0] \right) \times \C$. Of course, the bundles $\PB_{0, \pm}$ do not correspond to families of \emph{singular} projectors; however, this notation will help us state our results in a neater way in the following.
\end{rmk}

%%%%%%%% SECTION 2.3 %%%%%%%%%%%%%%%%%%%%%%%%%%%%%%%%%%%%%%%%%%%%%%%%%%%%%%%%%%

\goodbreak

\subsection{Comparison with the pseudospin winding number} \label{sec:pseudospin}

In this Section we compare the \emph{eigenspace vorticity} with the \emph{pseudospin winding number} (PWN) which appears in the literature about graphene \cite{Park-Marzari,McCann_Falko2006}. While the former is defined in a wider context, it happens that these two indices agree whenever the latter is well-defined, including the relevant cases of monolayer and multilayer graphene. We also show that the pseudospin winding number is neither a Berry phase, as already noticed in \cite{Park-Marzari}, nor a topological invariant.

%%%%%%%% SECTION 2.3.1 %%%%%%%%%%%%%%%%%%%%%%%%%%%%%%%%%%%%%%%%%%%%%%%%%%%%%%%%%%

\subsubsection{\textbf{Definition of the pseudospin winding number.}}
We firstly rephrase the usual definition in a more convenient language. The starting point is the following
\begin{datum} \label{Datum}
For $\Hi = \C^2$, let $\set{P(k)} \subset \Bi(\Hi)$ be a family of orthogonal projectors defined on the circle $S^1 = \partial U$, where $U =\set{k \in \R^2 : |k -k_0| < r}$, for a suitable $r>0$. We will also assume that the range of each projector $P(k)$ is $1$-dimensional, because this is clearly the only interesting case.
\end{datum}

Let $\Psi: S^1 \rightarrow \C^2$ be a continuous map such that $\Psi(k) \in \Ran P(k)$ and $\Psi(k) \neq 0$ for every $k \in S^1$. Such a map does exist: Indeed, let $\mathcal{P}$ be the line bundle over $S^1$ corresponding to $\set{P(k)}$, whose total space is
\[ \mathcal{P} = \set{(k,v) \in S^1 \times \C^2 : \ v \in \Ran P(k)}. \]
The line bundle $\mathcal{P}$ is trivial, since every \emph{complex} line bundle over $S^1$ is so. Therefore there exists a global non-zero continuous section of $\mathcal{P}$, here denoted by $\Psi$.  Without loss of generality, we assume $\norm{\Psi(k)} \equiv 1$.

\noindent With respect to a fixed orthonormal basis $\set{e_1, e_2} \subset \C^2$ one writes
\begin{equation}\label{Components}
\Psi(k) = \psi_1(k) e_1 + \psi_2(k) e_2.
\end{equation}
Obviously, $\psi_1(k)$ and $\psi_2(k)$ can not be simultaneously zero. One makes moreover the following (sometimes implicit) assumption.
\begin{assumption}\label{Assumption pseudospin}
Assume that for every $k \in S^1 = \set{k \in \R^2 : |k -k_0| = r}$ the numbers $\psi_1(k)$ and $\psi_2(k)$ are \emph{both} non-zero.
\end{assumption}
\noindent Under this assumption, which implies $\psi_j(k) = |\psi_j(k)| \, \expo{\iu \theta_j(k)}$ with $\theta_j$ continuous, the map
\[ g: S^1 \rightarrow U(1), \qquad g(k) = \mathrm{phase} \left( \frac{\psi_2(k)}{\psi_1(k)} \right) = \expo{\iu (\theta_2(k)-\theta_1(k))} \]
is well-defined and continuous. Then the \emph{pseudospin winding number} $n\sub{w} = n\sub{w}(P)$ is defined as the degree of the continuous map $g$. In terms of polar coordinates for $q = k-k_0$, namely $q = (|q|, \theta_q)$, one has
\begin{equation}\label{Pseudospin def}
n\sub{w} = n\sub{w}(P) := \deg{g} =   \frac{1}{2 \pi \iu} \oint_{S^1} \di k \, \overline{g(k)} \,\, \frac{\de g}{\de \theta_q} (k) \, \in \Z.
\end{equation}

For example, when we take the eigenprojectors $P_{\pm}(k)$ of the effective tight-binding Hamiltonians \eqref{Effective Ham graphene} and \eqref{Effective Ham multigraphene} as our datum, the Assumption \ref{Assumption pseudospin} is satisfied, with respect to the canonical basis of $\C^2$, by the global section (for the $m$-multilayer graphene Hamiltonian \eqref{Effective Ham multigraphene})
\begin{equation}\label{Eigenfunction multilayer}
\Psi_{m, s}(q) =  \frac{1}{\sqrt{2}}  \begin{pmatrix} 1  \\ s \, \expo{\iu m \theta_q} \end{pmatrix}, \qquad  m \in \N^{\times},
\end{equation}
where $s \in \set{+,-}$ refers to the choice of the upper (resp. lower) eigenvalue. It is then straightforward to check that
\begin{equation}\label{Pseudospin graphene}
n\sub{w} = \begin{cases}
               1 & \text{for monolayer graphene,} \\
               2 & \text{for bilayer graphene,} \\
               m & \text{for $m$-multilayer graphene.}
           \end{cases}
\end{equation}

\begin{rmk}[Comparison with the Berry phase] \label{Rem Comparison with Berry phase}
We emphasise that, despite the formal similarity of the definitions, in general the pseudospin winding number is \emph{not} a Berry phase, as clarified by Park and Marzari \cite{Park-Marzari}. Indeed, $g$ is not a wave function, but the ratio of the components of a single ($\C^2$-valued) wave function with respect to a chosen orthonormal basis.

On the other hand, if in some particular model it happens that
\begin{equation}\label{Equatorial range}
|\psi_2(k)/\psi_1(k)| = |g(k)| \equiv 1 \mbox{ for all } k \in S^1
\end{equation}
(as indeed happens in the case \eqref{Eigenfunction multilayer}), then the holonomy of the Berry connection $\mathcal{A}$ (\ie the Berry phase) along the circle $S^1$ is
\[ \mathrm{Hol}\,{\mathcal{A}} = \mathrm{exp} \left\{ - \oint_{S^1} \inner{\Psi(k)}{\frac{\partial}{\partial \theta_q} \Psi(k)}_{\C^2} \right\} = \expo{-\iu \pi n\sub{w}}, \]
as can be checked by direct computation, assuming without loss of generality that the first component of $\Psi(k)$ is real, \ie $\psi_1(k) = |\psi_1(k)|$. In such a case, $n\sub{w}$ contains a more detailed information than the Berry phase, which is defined only modulo $2 \pi \Z$.
\end{rmk}

\begin{rmk}[Hartree-Fock corrections]
It has been recently shown \cite{Lewin} that, when many-electron Coulomb interactions in monolayer graphene are taken into account, the leading order correction in the framework of the Hartree-Fock theory amounts to replace the effective Hamiltonian \eqref{Effective Ham graphene} with
\[ \Hi\sub{mono}^{\mathrm{(HF)}}(q) = v\sub{eff}(q) \, |q| \begin{pmatrix}
                          0              &  \expo{-\iu \theta_q} \\
                          \expo{\iu \theta_q} &  0
                        \end{pmatrix}, \]
where $v\sub{eff}$ is an explicit function \cite[Lemma 3.2]{Lewin}. The corresponding eigenprojectors coincide with those of the Hamiltonian \eqref{Effective Ham graphene}. Therefore, the PWN equals the one computed in the tight-binding model.
\end{rmk}

%%%%%%%% SECTION 2.3.2 %%%%%%%%%%%%%%%%%%%%%%%%%%%%%%%%%%%%%%%%%%%%%%%%%%%%%%%%%%

\subsubsection{\textbf{Geometrical reinterpretation.}}
We find convenient to reinterpret the definition of $n\sub{w}$ in terms of projective geometry, in order to study the dependence of  \eqref{Pseudospin def} on the choice of the basis appearing in \eqref{Components}.

We consider the complex projective space $\C P^1$ (the set of all complex lines in $\C^2$ passing through the origin), denoting by $[\psi_1, \psi_2]$ the line passing through the point $(\psi_1, \psi_2) \in \C^2 \setminus \set{0}$. In view of the identification $\C P^1 \simeq S^2$ via the stereographic projection from $S^2$ to the one-point compactification of the complex plane, the points $S = [0,1]$ and $N = [1,0]$ in $\C P^1$ are called South pole and North pole%
\footnote{Notice that, since the basis $\set{e_1, e_2}$ is orthonormal, the points $\set{N,S}$ are antipodal on $S^2$.}%
, respectively. Explicitly, stereographic projection is given by the map
\begin{equation} \label{stereo}
\C P^1 \to \C \cup \{\infty\}, \quad [\psi_1, \psi_2] \mapsto \psi_1 / \psi_2.
\end{equation}

Since every rank-$1$ orthogonal projector is identified with a point in $\C P^1$ (its range), the Datum \ref{Datum} yields a continuous map
\begin{equation} \label{Map G}
\begin{array}{rccccl}
 G: & S^1 &\longrightarrow  &  \Bi(\C^2) &  \longrightarrow &\C P^1  \\
    &  k  &\longmapsto      &    P(k)    &  \longmapsto     & \Ran P(k)=
[\psi_1(k), \psi_2(k)]  \\
\end{array}
\end{equation}
where, in the last equality, the choice of a basis in $\C^2$ is understood to write out the coordinates of $(\psi_1(k),\psi_2(k))$. With this identification, Assumption \ref{Assumption pseudospin} is equivalent to the condition
\begin{equation}\label{Assumption reinterpreted}
\mbox{the range of the map $G$ does not contain the points $N$ and $S$ in } \C P^1.
\end{equation}
Thus, in view of the previous Assumption, the range of $G$ is contained in a tubular neighbourhood of the equator $S^1_{\rm eq}$ of $S^2 \simeq \C P^1$, and therefore the degree of $G$ is well-defined. The latter is, up to a sign, the pseudospin winding number, \ie
\begin{equation} \label{nw=degG}
n\sub{w} = - \deg G, \qquad G: S^1 \rightarrow \mathrm{Tub}\, S^1_{\rm eq} \subset S^2.
\end{equation}
Indeed, clearly $G(k) = [\psi_1(k), \psi_2(k)] = [\psi_1(k) / \psi_2(k), 1]$; in view of our convention \eqref{stereo}, we deduce that
\[ \deg G = \deg g^{-1} = - \deg g = - n\sub{w} \]
as claimed.

\begin{figure}[ht]
\centering
\xy
(-7,0); (7,0) **\dir{-} ?>*\dir{>};
(70,0); (40,0), {\ellipse(,.2){.}};
(70,0); (40,0), {\ellipse(,){-}};
(18,-20); (40,-20), {\ellipse(,.1){--}};
(18,20); (40,20), {\ellipse(,.1){--}};
(12,10); (68,-10) **[red]\crv{~*=<\jot>{.} (68,5)};
(12,-10); (68,10) **[red]\crv{~*=<\jot>{.} (12,5)};
(12,10); (40,10), {\ellipse(,.2):a(220),^,:a(400){-}};
(12,-10); (40,-10), {\ellipse(,.2):a(220),^,:a(400){-}};
(-70,0); (-40,0), {\ellipse(,.2){-}};
(0,3)*{G}; (74,-.5)*{S^1_{\rm eq}}; (8,-15)*{G(S^1)};
(23,14)*{\mathrm{Tub}\, S^1_{\rm eq}};
(-40,-10)*{S^1};
(40,30)*{\bullet}; (43,32)*{N};
(40,-30)*{\bullet}; (43,-32)*{S};
(30,-10)*{\blu{\bullet}}; (33,-10)*{\blu{N'}};
(50,10)*{\blu{\bullet}}; (47,10)*{\blu{S'}};
\endxy
\caption{An example of map $G \colon S^1 \to \mathrm{Tub}\, S^1_{\rm eq} \subset S^2$.}
\label{fig:mapG}
\end{figure}

This reinterpretation shows that $n\sub{w}$ is only a \emph{conditional} topological invariant, in the sense that it is invariant only under continuous deformation of the Datum \ref{Datum} \emph{preserving Assumption \eqref{Assumption reinterpreted}}. The integer $n\sub{w}$ cannot be invariant under arbitrary continuous deformations: Indeed, the map $G$ in \eqref{Map G} can be continuosly deformed to a map $G\sub{South}$ which is constantly equal to the South pole, and then to a map $G_*$ which is constantly equal to an arbitrary point of the equator. Since $\deg(G_*)=0$, we conclude that \emph{the pseudospin winding number is not invariant under arbitrary continuous deformations of the family of projectors} appearing in Datum \ref{Datum}, or equivalently is not invariant under arbitrary continuous deformations of the corresponding Hamiltonian. In other words, $n\sub{w}$ is a  conditional topological invariant.

Finally, we point out that in general $n\sub{w}$ is not independent of the choice of the basis appearing in \eqref{Components}, which corresponds to a choice of antipodal points $\set{N,S}$ in $S^2$. Indeed, in the example in  Figure \ref{fig:mapG} one sees that $n\sub{w} = 2$ with respect to the choice $\set{N,S}$, while the choice $\set{N', S'}$ yields  $n'\sub{w} = 0$.

There exists a single case in which $n\sub{w}$ is independent of the basis (up to a sign), namely if
\begin{equation}\label{Equatorial range}
\mbox{the range of the map $G$ is contained in a maximum circle $E$ in $S^2$.}
\end{equation}
Indeed, in such a case one can identify $E$ with the equator, thus inducing a canonical choice of the poles $\set{N,S}$, up to reordering (\ie the only other possible choice is $N^\prime = S$ and $S^\prime =N$). This restrictive condition is indeed satisfied in the case of multilayer graphene, compare \eqref{Eigenfunction multilayer} and \eqref{Equatorial range} recalling that the condition $|\psi_1(k)| - |\psi_2(k)|= 0$ corresponds to being on the equator.

In summary, the PWN is well-defined and independent of the basis only under the restrictive assumptions \eqref{Assumption reinterpreted} and \eqref{Equatorial range}. These assumptions hold in the case of (multilayer) tight-binding graphene, but one cannot expect that they hold true in more general situations (\eg deformed graphene, topological insulators, \ldots). Indeed, the Haldane Hamiltonian $H\sub{Hal}(k)$ \cite{Haldane88}, which has been considered a paradigmatic model for many interesting effects in solid state physics, provides an example in which hypothesis \eqref{Equatorial range} does not hold. The Hamiltonian $H\sub{Hal}(k)$ is an effective $2 \times 2$ Hamiltonian, modelling a honeycomb crystal. It depends on several parameters: $t_1$ and $t_2$ which are hopping energies, $\phi$ which plays the role of an external magnetic flux, and $M$ which is an on-site energy. If $|t_2/t_1|<1/3$ and $M=\pm 3\sqrt{3} t_2 \sin(\phi)$, then the two bands of the Hamiltonian $H\sub{Hal}(k)$ touch at (at least) one point $k_0$ in the Brillouin zone. We checked that, for the values of the parameters $t_1 = 1$, $t_2 = (1/4) t_1$, $\phi=\pi/8$ and $M=3\sqrt{3} t_2 \sin(\phi)$ (in suitable units), the image of the map $G\sub{Hal}(k) := \Ran P\sub{Hal}(k)$, where $P\sub{Hal}(k)$ is the spectral projection on the upper band of $H\sub{Hal}(k)$ and $k$ varies in a circle around $k_0$, does \emph{not} lie on a maximum circle on the sphere $S^2 \simeq \C P^1$. Consequently, in the Haldane model, at least for these specific values of the parameters, the PWN is ill-defined.

%%%%%%%% SECTION 2.2.3 %%%%%%%%%%%%%%%%%%%%%%%%%%%%%%%%%%%%%%%%%%%%%%%%%%%%%%%%%%

\subsubsection{\textbf{Comparison of the two concepts.}}

In this Subsection, we are going to show that our eigenspace vorticity provides a more general and flexible definition of a topological invariant, which agrees with the PWN whenever the latter is well-defined, thus revealing its hidden geometric nature. In order to obtain, in the case $\Hi = \C^2$, a value of  $n\sub{v}$ which does not depend on the choice of the deformation, we will focus on a \virg{distinguished} class of deformations, namely those corresponding to weakly hemispherical maps (Definition \ref{Def hemispherical map}). 

Advantages and disadvantages of the two indices are easily noticed. As for the pseudospin winding number, its definition depends only on the \emph{undeformed} family of projectors, but it requires Assumption \ref{Assumption pseudospin} to be satisfied, with respect to a suitable basis of $\C^2$. Moreover, \eqref{Equatorial range} must also hold true for the definition of the PWN to be base-independent. Viceversa, the definition of the eigenspace vorticity  requires the construction of a family of \emph{deformed} projectors, but  it does not require any special assumption on the unperturbed family of projectors.

As the reader might expect, we can prove that the eigenspace vorticity and the pseudospin winding number coincide whenever both are defined. This holds true, in particular, in the case of monolayer and multilayer graphene.

Firstly, we give an alternative interpretation of the eigenspace vorticity as the degree of a certain map. This will make the comparison between the two indices more natural.

\begin{lemma} \label{nvdegree}
Let $\set{P(k)}_{k \in R \setminus \{k_0\}}$ be a family of rank-1 projectors as in Datum \ref{DatumProj}, with $\Hi = \C^2$, and let $\set{P^{\mu}(k)}_{(k,\mu) \in B}$ be a deformation of it, as described in Section \ref{sec:vorticity}. Let $\nv \in \Z$ be the eigenspace vorticity \eqref{n=ch1(L)} of the deformed family $\set{P^{\mu}(k)}_{{(k,\mu) \in B}}$. Define the map $\widetilde{G} \colon \mathcal{C} \to \C P^1$ by $\widetilde{G}(k, \mu) := \Ran P^{\mu}(k)$, for $(k,\mu) \in \mathcal{C}$. Then
\[ \nv = - \deg \widetilde{G}. \]
\end{lemma}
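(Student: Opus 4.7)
The plan is to identify $\mathcal{L}_s \to \mathcal{C}$ as the pullback via $\widetilde{G}$ of the tautological line bundle $\tau \to \mathbb{C}P^1$, and then to derive the statement from the naturality of Chern classes. By construction, the fibre of $\mathcal{L}_s$ over $(k,\mu) \in \mathcal{C}$ is $\mathrm{Ran}\,P^\mu(k) \subset \mathbb{C}^2$, which is precisely the fibre of $\tau$ over $\widetilde{G}(k,\mu) = [\mathrm{Ran}\,P^\mu(k)] \in \mathbb{C}P^1$. Hence $\mathcal{L}_s \simeq \widetilde{G}^*\tau$ as smooth complex line bundles over $\mathcal{C}$, the isomorphism being the obvious one on fibres.

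The key step is then naturality: $c_1(\mathcal{L}_s) = \widetilde{G}^* c_1(\tau)$ in $H^2(\mathcal{C};\mathbb{Z})$. Since $\mathcal{C}$ is a closed oriented surface (topologically $S^2$, oriented as in Figure \ref{fig:cylinder}) and $\mathbb{C}P^1 \simeq S^2$ carries its canonical complex orientation, integration along $\widetilde{G}$ yields
\[ \int_{\mathcal{C}} c_1(\mathcal{L}_s) \;=\; \int_{\mathcal{C}} \widetilde{G}^{\,*} c_1(\tau) \;=\; \deg \widetilde{G} \cdot \int_{\mathbb{C}P^{1}} c_1(\tau) \;=\; -\deg \widetilde{G}, \]
using the well-known fact that the tautological line bundle on $\mathbb{C}P^1$ has first Chern number $-1$. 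On the other hand, from the very definition \eqref{n=ch1(L)} and the relation $c_1(\mathcal{L}_s) = \frac{1}{2\pi}[\omega_s]$ between the Chern class and the Berry curvature, one has
\[ \nv \;=\; -\frac{1}{2\pi}\int_{\mathcal{C}} \omega_s \;=\; -\int_{\mathcal{C}} c_1(\mathcal{L}_s) \;=\; -\bigl(-\deg \widetilde{G}\bigr), \]
and a sign check on one of the canonical models (say, Section \ref{sec:n=1}, where $\nv(P_{1,+}) = +1$ corresponds to $\deg\widetilde{G} = -1$ under the stereographic convention \eqref{stereo}) fixes the orientation conventions so as to produce the stated identity $\nv = -\deg \widetilde{G}$.

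Alternatively, one may proceed by an explicit differential-geometric computation, expressing the Fubini–Study form $\omega_{\mathrm{FS}}$ on $\mathbb{C}P^1$ (normalized so that $\int_{\mathbb{C}P^1}\omega_{\mathrm{FS}} = \pm 2\pi$) and checking that $\omega_s = \widetilde{G}^{\,*}\omega_{\mathrm{FS}}$ on $\mathcal{C}$. This can be done by choosing, in a neighbourhood of a regular value, a local unit section $\Psi(k,\mu) \in \mathrm{Ran}\,P^\mu(k)$ and rewriting the Berry curvature \eqref{Berry curvature} as $\omega_s = i\,d\langle\Psi,d\Psi\rangle$, which under the identification $[\Psi] = \widetilde{G}$ is manifestly the pullback of the Fubini–Study form. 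The degree formula $\int_{\mathcal{C}}\widetilde{G}^{\,*}\omega_{\mathrm{FS}} = \deg\widetilde{G} \cdot \int_{\mathbb{C}P^1}\omega_{\mathrm{FS}}$ then yields the claim directly, without invoking classifying space language.

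The main obstacle here is purely bookkeeping: the sign of $\nv$ is conventional (as already noted after Definition \ref{Def:vorticity}), the orientation of $\mathcal{C}$ was fixed in Figure \ref{fig:cylinder}, and the orientation of $\mathbb{C}P^1$ is inherited from its complex structure; reconciling all three conventions — together with the choice \eqref{stereo} of stereographic projection — is what produces the minus sign in the statement $\nv = -\deg\widetilde{G}$. Once a single consistent calibration is performed on the $1$-canonical model of Section \ref{sec:n=1}, the general statement follows by the naturality argument above.
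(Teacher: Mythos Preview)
Your proof is correct and follows essentially the same route as the paper's: identify $\mathcal{L}_s|_{\mathcal{C}}$ as the pullback $\widetilde{G}^*\tau$ of the tautological line bundle, invoke naturality of Chern classes to write the Berry curvature as $\widetilde{G}^*\omega_{\mathrm{FS}}$, and then apply the integral degree formula $\int_{\mathcal{C}}\widetilde{G}^*\omega_{\mathrm{FS}} = \deg\widetilde{G}\int_{\mathbb{C}P^1}\omega_{\mathrm{FS}}$. The paper normalizes $\omega_{\mathrm{FS}}$ so that $\frac{1}{2\pi}\int_{\mathbb{C}P^1}\omega_{\mathrm{FS}}=1$ and reads off the sign directly, whereas you route the sign through $\int c_1(\tau)=-1$ and then calibrate on the $1$-canonical model; both are legitimate ways of handling the same bookkeeping, and your alternative explicit check that $\omega_s=\widetilde{G}^*\omega_{\mathrm{FS}}$ is exactly what the paper mentions in a footnote as a bypass of the classifying-space language.
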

\begin{proof}
The integral formula for the degree of a smooth map $F \colon M \to N$ between manifolds of the same dimension \cite[Theorem 14.1.1]{Dubrovin} states that, if $\omega$ is a top-degree form on $N$, then
\begin{equation} \label{DegreeIntegralFormula}
\int_{M} F^* \omega = \deg F \int_{N} \omega.
\end{equation}

Let $\omega\sub{FS}$ be the \emph{Fubini-Study} $2$-form on $\C P^1$, defined as
\[ \omega\sub{FS}(\zeta) = \iu \overline{\partial} \partial \ln \left( 1 + |\zeta|^2 \right) \di \zeta \wedge \di \overline{\zeta} = \frac{\iu}{(1 + |\zeta|^2)^2} \di \zeta \wedge \di \overline{\zeta} \]
on the open subset $\C P^1 \setminus \set{S} = \set{ [\psi_1, \psi_2] \in \C P^1: \psi_1 \ne 0 } \simeq \C$ with complex coordinate $\zeta = \psi_2 / \psi_1$ (here $\partial = \partial / \partial \zeta$ and $\overline{\partial} = \partial / \partial \overline{\zeta}$). One easily checks that
\[ \frac{1}{2 \pi} \int_{\C P^1} \omega\sub{FS} = 1. \]

Moreover, it is also known \cite[Section 3.3.2]{Voisin} that $(1/2 \pi) \omega\sub{FS}$ is the first Chern class $\mathrm{Ch}_1(\mathcal{S})$ of the \emph{tautological bundle} $\mathcal{S}$ over $\C P^1$, whose fibre over the point representing the line $\ell \subset \C^2$ is the line $\ell$ itself. Instead, the bundle $\calL$ associated with the family $\set{P^{\mu}(k)}$ has the range of the projector as its fibre over $(k,\mu) \in \mathcal{C}$; this means by definition that it is the pullback via $\widetilde{G}$ of the tautological bundle. By naturality of the Chern classes, we deduce that the Berry curvature $2$-form $\omega$, defined as in \eqref{Berry curvature}, is given by%
\footnote{Notice that, without invoking the naturality of Chern classes, the equality $\omega = \widetilde{G}^* \omega\sub{FS}$ can also be explicitly checked by a long but straightforward computation.}%
\[ \omega = 2 \pi \, \mathrm{Ch}_1(\calL) = 2 \pi \, \mathrm{Ch}_1(\widetilde{G}^* \mathcal{S}) = 2 \pi \, \widetilde{G}^* \mathrm{Ch}_1(\mathcal{S}) = \widetilde{G}^* \omega\sub{FS}. \]
This fact, together with the formula \eqref{DegreeIntegralFormula}, yields to
\[ \nv = - \frac{1}{2 \pi} \int_{\mathcal{C}} \omega = - \frac{1}{2 \pi} \int_{\mathcal{C}} \widetilde{G}^* \omega\sub{FS} = - \deg \widetilde{G} \, \left(\frac{1}{2 \pi} \int_{\C P^1} \omega\sub{FS} \right) = - \deg \widetilde{G} \]
as claimed.
\end{proof}

We now proceed to the proof of the equality between the pseudospin winding number and the eigenspace vorticity, provided \eqref{Assumption reinterpreted} holds true.
We consider also cases, \eg perturbed graphene, in which a canonical orthonormal basis in $\C^2$ is provided by an unperturbed or reference Hamiltonian; so, condition \eqref{Equatorial range} is not assumed. However, whenever the Hamiltonian is such that  \eqref{Equatorial range}  holds true, then Step 1 in the following proof is redundant.

The class of deformations which we want to use to compute the eigenspace vorticity is defined as follows.

\begin{df}[Weakly hemispherical map]\label{Def hemispherical map}
A map $F \colon \mathcal{C} \to S^2$ is called \textbf{hemispherical} (with respect to a choice of an equator $S^1\sub{eq} \subset S^2$) if $F(S^1) \subset S^1\sub{eq}$ and $F^{-1}(S^1\sub{eq}) \subset S^1$, where $S^1 = \mathcal{C} \cap \set{\mu=0}$. Equivalently, $F$ is hemispherical if it maps the ``upper half'' of the cylinder (namely $\mathcal{C}_+ := \mathcal{C} \cap \set{\mu > 0}$) to the northern hemisphere $S^2_+$, the ``equator'' $S^1$ into $S^1\sub{eq}$, and the ``lower half'' of the cylinder (namely $\mathcal{C}_- := \mathcal{C} \cap \set{\mu < 0}$) to the southern hemisphere $S^2_-$.

A map $F \colon \mathcal{C} \to S^2$ is called \textbf{weakly hemisperical} if $F(S^1)$ is contained in a tubular neighbourhood $\mathrm{Tub} \, S^1\sub{eq}$ of the equator, and it is homotopic to a hemispherical map via the \emph{retraction along meridians}.
\end{df}

The retraction along meridians is defined as follows. Choose two open neighbourhoods $O_N$ and $O_S$ of the North and South pole, respectively, which do not intersect the tubular neighbourhood $\mathrm{Tub} \, S^1\sub{eq}$ containing $F(S^1)$. Let $\rho_t \colon S^2 \to S^2$ be the homotopy which as $t$ goes from $0$ to $1$ expands $O_N$ to the whole northern hemisphere and $O_S$ to the whole southern hemisphere, while keeping the equator $S^1\sub{eq}$ fixed (compare Figure \ref{fig:rhot}). Then let $F' := \rho_1 \circ F$; the maps $\rho_t \circ F$ give an homotopy between $F$ and $F'$. Then $F$ is weakly hemispherical if $F'$ is hemispherical.

\begin{figure}[ht]
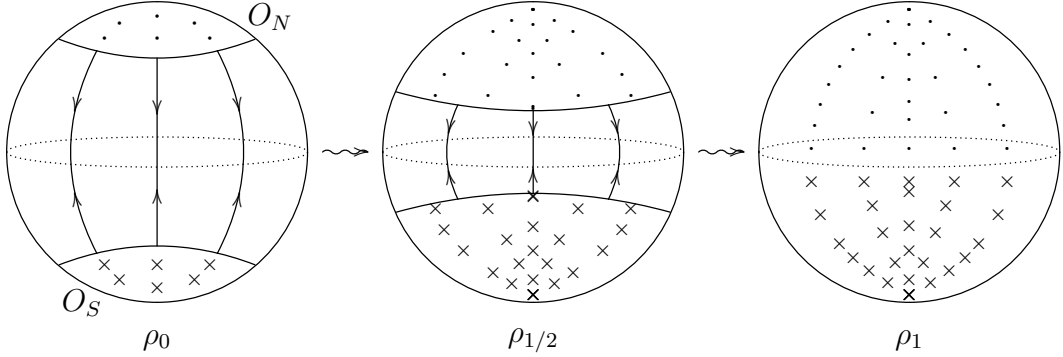

\centering
\xy
%%%%%% \rho_0
(-50,-25)*{\rho_0};
(-35,18)*{O_N}; (-60,-20)*{O_S};
(-70,0); (-50,0), {\ellipse(,){-}};
(-70,0); (-50,0), {\ellipse(,.1){.}};
(-63,15); (-37,15) **\crv{(-50,10)};
(-63,-15); (-37,-15) **\crv{(-50,-10)};
(-58,13.5); (-58,-13.5) **\crv{(-65,0)} ?(.7)*\dir{<} ?(.3)*\dir{>};
(-50,12.5); (-50,-12.5) **\dir{-} ?(.7)*\dir{<} ?(.3)*\dir{>};
(-42,13.5); (-42,-13.5) **\crv{(-35,0)} ?(.7)*\dir{<} ?(.3)*\dir{>};
(-50,15)*{\cdot}; (-50,18)*{\cdot}; (-57,15)*{\cdot};
(-55,17)*{\cdot}; (-43,15)*{\cdot}; (-45,17)*{\cdot};
(-50,-15)*{\scriptstyle{\times}}; (-50,-18)*{\scriptstyle{\times}}; (-57,-15)*{\scriptstyle{\times}}; (-55,-17)*{\scriptstyle{\times}}; (-43,-15)*{\scriptstyle{\times}}; (-45,-17)*{\scriptstyle{\times}};
(-28,0); (-22,0) **\dir{~} ?>*\dir{>};
%%%%%% \rho_{1/2}
(0,-25)*{\rho_{1/2}};
(-20,0); (0,0), {\ellipse(,){-}};
(-20,0); (0,0), {\ellipse(,.1){.}};
(-18.2,8); (18.2,8) **\crv{(0,3)};
(-18.2,-8); (18.2,-8) **\crv{(0,-3)};
(-10,6.3); (-10,-6.3) **\crv{(-13,0)} ?(.7)*\dir{<} ?(.3)*\dir{>};
(0,5.5); (0,-5.5) **\dir{-} ?(.7)*\dir{<} ?(.3)*\dir{>};
(10,6.3); (10,-6.3) **\crv{(13,0)} ?(.7)*\dir{<} ?(.3)*\dir{>};
(0,19); (-13,7.5) **\crv{~*=<10pt>{.} (-12,13)};
(0,19); (-6,7.5) **\crv{~*=<10pt>{.} (-4,10)};
(0,17); (0,6) **\crv{~*=<10pt>{.} (0,0)};
(0,19); (6,7.5) **\crv{~*=<10pt>{.} (4,10)};
(0,19); (13,7.5) **\crv{~*=<10pt>{.} (12,13)};
(0,-19); (-13,-7.5) **\crv{~*=<10pt>{\scriptstyle{\times}} (-12,-13)};
(0,-19); (-6,-7.5) **\crv{~*=<10pt>{\scriptstyle{\times}} (-4,-10)};
(0,-17); (0,-6) **\crv{~*=<10pt>{\scriptstyle{\times}} (0,0)};
(0,-19); (6,-7.5) **\crv{~*=<10pt>{\scriptstyle{\times}} (4,-10)};
(0,-19); (13,-7.5) **\crv{~*=<10pt>{\scriptstyle{\times}} (12,-13)};
(22,0); (28,0) **\dir{~} ?>*\dir{>};
%%%%%% \rho_1
(50,-25)*{\rho_1};
(30,0); (50,0), {\ellipse(,){-}};
(30,0); (50,0), {\ellipse(,.1){.}};
(50,19); (37,.5) **\crv{~*=<10pt>{.} (38,13)};
(50,19); (44,.5) **\crv{~*=<10pt>{.} (46,10)};
(50,17); (50,.5) **\crv{~*=<10pt>{.} (50,0)};
(50,19); (56,.5) **\crv{~*=<10pt>{.} (54,10)};
(50,19); (63,.5) **\crv{~*=<10pt>{.} (62,13)};
(50,-19); (37,-4) **\crv{~*=<10pt>{\scriptstyle{\times}} (38,-13)};
(50,-19); (44,-4) **\crv{~*=<10pt>{\scriptstyle{\times}} (46,-10)};
(50,-17); (50,-4) **\crv{~*=<10pt>{\scriptstyle{\times}} (50,0)};
(50,-19); (56,-4) **\crv{~*=<10pt>{\scriptstyle{\times}} (54,-10)};
(50,-19); (63,-4) **\crv{~*=<10pt>{\scriptstyle{\times}} (62,-13)};
\endxy
\caption{The retraction $\rho_t$ for $t = 0$, $t=1/2$ and $t=1$.}
\label{fig:rhot}
\end{figure}

\begin{prop}\label{Prop Equivalence ALTERNATIVE}
Let $\set{P(k)}_{k \in \partial U}$ be a family of projectors as in Datum \ref{Datum} and let $G \colon S^1 \to \C P^1$ be the corresponding map, \ie $G(k) = \Ran P(k)$ as in \eqref{Map G}. Suppose that Assumption \ref{Assumption pseudospin} (or equivalently the condition \eqref{Assumption reinterpreted} on $G$) holds. Let $\set{P^{\mu}(k)}_{(k,\mu) \in \mathcal{C}}$ be a deformed family of projectors, as in Section \ref{sec:vorticity}, and $\widetilde{G} \colon \mathcal{C} \to \C P^1$ be defined by $\widetilde{G}(k, \mu) := \Ran P^{\mu}(k)$. (Clearly $\widetilde G(k,0) = G(k)$ for $k \in S^1$.) Assume that $\widetilde{G}$ is weakly hemispherical. Then, up to a reordering of the basis involved in the definition of the pseudospin winding number, one has
\[ n\sub{w}(P) = n\sub{v}(P). \]
\end{prop}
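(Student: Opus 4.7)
The strategy is to rewrite both integers as degrees of maps into $\C P^1 \simeq S^2$ and then compare them via the weakly hemispherical structure. By Lemma \ref{nvdegree} we have $\nv(P) = - \deg \widetilde G$, where $\widetilde G \colon \mathcal{C} \to \C P^1$ is given by $\widetilde G(k,\mu) = \Ran P^{\mu}(k)$, and by the reinterpretation \eqref{nw=degG} we have $n\sub{w}(P) = - \deg G$, where $G = \widetilde G|_{S^1} \colon S^1 \to \C P^1$ and $S^1 = \mathcal{C} \cap \set{\mu = 0}$. Since $\mathcal{C}$ is the boundary of the solid cylinder $U \times [-\mu_0,\mu_0]$, it is homeomorphic to $S^2$, so $\deg \widetilde G$ is a well-defined integer. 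The proposition therefore reduces to showing $\deg \widetilde G = \deg G$, up to a sign which corresponds to the orientation of the $\mu$-axis and hence is absorbed by a reordering of the basis $\set{e_1,e_2}$ (which exchanges North and South poles in $\C P^1$).

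First, I would use the weakly hemispherical hypothesis to replace $\widetilde G$ by a genuine hemispherical map. By Definition \ref{Def hemispherical map} there is a homotopy $t \mapsto \rho_t \circ \widetilde G$ from $\widetilde G$ to $\widetilde G' := \rho_1 \circ \widetilde G$, which is hemispherical with respect to the fixed equator $S^1\sub{eq}$. Since the degree is a homotopy invariant, $\deg \widetilde G = \deg \widetilde G'$. The restriction of the retraction to $S^1\sub{eq}$ is the identity, and $G(S^1) \subset \mathrm{Tub}\, S^1\sub{eq}$ can be pushed onto $S^1\sub{eq}$ along meridians without changing its winding number; therefore $\widetilde G'|_{S^1}$ and $G$ are homotopic as maps $S^1 \to \C P^1$, so $\deg \widetilde G'|_{S^1} = \deg G$.

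It remains to prove the key geometric fact: for a hemispherical map $F \colon \mathcal{C} \to S^2$, one has $\deg F = \deg(F|_{S^1})$, where on the right the target is $S^1\sub{eq}$ with its induced orientation. I would prove this by a preimage count. Both $\mathcal{C}_+$ and $S^2_+$ are closed 2-discs with boundaries $S^1$ and $S^1\sub{eq}$ respectively, and the hemispherical condition $F^{-1}(S^1\sub{eq}) \subset S^1$, $F(\mathcal{C}_+) \subset S^2_+$ means $F|_{\mathcal{C}_+}$ is a map of pairs $(\mathcal{C}_+, S^1) \to (S^2_+, S^1\sub{eq})$. Pick a regular value $p$ in the interior of $S^2_+$; all its preimages under $F$ lie in $\mathcal{C}_+$ (since $F(\mathcal{C}_-) \subset S^2_-$ and $F^{-1}(S^1\sub{eq}) \subset S^1$), so the signed count of preimages equals both $\deg F$ (when $F$ is viewed as $S^2 \to S^2$) and the relative degree of $F|_{\mathcal{C}_+}$ as a map of discs. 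By the standard fact that the relative degree of a map of discs equals the winding number of the induced map of boundaries, this equals $\deg(F|_{S^1})$. Combining the three steps gives $\nv(P) = -\deg \widetilde G = -\deg \widetilde G' = -\deg \widetilde G'|_{S^1} = -\deg G = n\sub{w}(P)$.

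The main obstacle is the last step: identifying the degree of a self-map of $S^2$ with the degree of its equatorial restriction. The weakly hemispherical assumption is designed precisely so that contributions from $\mathcal{C}_+$ and $\mathcal{C}_-$ do not mix or cancel, allowing a clean reduction to a one-dimensional winding-number computation on the equator. The secondary difficulty is bookkeeping for orientations: a careful check of whether the induced orientation on $S^1 = \mathcal{C} \cap \set{\mu=0}$ (as boundary of $\mathcal{C}_+$) matches the orientation on $S^1\sub{eq}$ determines the sign, which is the source of the ``up to reordering of the basis'' caveat in the statement.
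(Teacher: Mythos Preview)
Your proposal is correct and follows essentially the same route as the paper: reduce to a hemispherical map via the meridian retraction, pick a regular value in the open northern hemisphere so that all preimages lie in $\mathcal{C}_+$, and then invoke the fact that the degree of a map between manifolds with boundary equals the degree of its boundary restriction (the paper cites \cite[Theorem 13.2.1]{Dubrovin} for this ``relative degree equals boundary winding number'' step). Your discussion of orientations and the basis-reordering caveat is slightly more explicit than the paper's, but the argument is the same.
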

\begin{proof}
The proof will consist in modifying suitably the functions $G$ and $\widetilde{G}$ (without leaving their respective homotopy classes), in order to compare their degrees. We divide the proof of this statement into a few steps.

{\it Step 1: choice of suitable maps $G$ and $\widetilde{G}$}. Consider the composition $\widetilde{G}' := \rho_1 \circ \widetilde{G}$, which by hypothesis can be assumed to be a hemispherical map (up to a reordering of the poles, \ie of the basis for the definition the PWN). As the degree of the map $\widetilde{G}$ depends only on the homotopy class of $\widetilde{G}$, we have that $\deg \widetilde{G} = \deg \widetilde{G}'$. The analogous statement holds also for $G$ and $G' := \rho_1 \circ G$. In the following, we drop the ``primes'' and assume that $G$ is such that $G(S^1) \subseteq S^1\sub{eq}$ and that $\widetilde{G}$ is hemispherical.

{\it Step 2: $\deg \widetilde{G} = \deg G$}. The map $\widetilde{G} \big|_{\mathcal{C}_+} \colon \mathcal{C}_+ \to S^2_+$ is a map between manifolds with boundary, mapping $\partial \mathcal{C}_+ = S^1$ to $\partial S^2_+ = S^1\sub{eq}$. By taking a regular value in $S^2_+$ to compute the degree of $\widetilde{G}$, we deduce from our hypotheses (Step 1) that the points in its preimage all lie in the upper half of the cylinder, so that $\deg \widetilde{G} = \deg \left( \widetilde{G} \big|_{\mathcal{C}_+} \right)$. On the other hand, the degree of a map between manifolds with boundary coincides with the degree of its restriction to the boundaries themselves \cite[Theorem 13.2.1]{Dubrovin}. We conclude that
\[ \deg \widetilde{G} = \deg \left( \widetilde{G} \big|_{\mathcal{C}_+} \right) = \deg \left( \widetilde{G} \big|_{\partial \mathcal{C}_+} \right) = \deg \left( \widetilde{G} \big|_{S^1} \right) = \deg G \]
as claimed.

{\it Step 3: conclusion}. To sum up, putting together Step 1 and Step 2, by \eqref{nw=degG} and Lemma \ref{nvdegree} we conclude that
\[ \nv = - \deg \widetilde{G} = - \deg G = n\sub{w} \]
and this ends the proof of the Proposition.
\end{proof}

Finally, we conclude that $n\sub{v}$ is intrinsically defined when $\Hi = \C^2$. Indeed,  the ambiguity in the choice of the deformation may be removed by choosing a deformation which corresponds to a weakly hemispherical map. Then the value of $n\sub{v}$ is independent of the choice of a specific deformation among this class, in view of the following Corollary of the proof.

\begin{crl} \label{Cor_Vorticity_hemispher}
Let  $\set{P^{\mu}(k)}_{(k,\mu) \in B}$ and $\{\widehat{P}^{\mu}(k) \}_{(k,\mu) \in B }$ be two deformations of the family $\set{P(k)}_{k \in {R} \setminus \set{k_0}}$, both corresponding to a weakly hemisperical map with respect to an equator $S^1\sub{eq} \subset S^2$. Then
\[
n\sub{v}(P) = n\sub{v}(\widehat{P}).
\]
\end{crl}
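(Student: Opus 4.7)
The plan is to extract the content of the proof of Proposition \ref{Prop Equivalence ALTERNATIVE} and observe that, in that argument, the only information used (beyond weak hemisphericity and the choice of equator) is the restriction of the deformation map to the circle $S^1 = \mathcal{C} \cap \set{\mu = 0}$, which is determined entirely by the undeformed family $\set{P(k)}$. Concretely, let $\widetilde{G}, \widehat{G} \colon \mathcal{C} \to \C P^1 \simeq S^2$ be the maps defined by $\widetilde{G}(k,\mu) := \Ran P^{\mu}(k)$ and $\widehat{G}(k,\mu) := \Ran \widehat{P}^{\mu}(k)$. Since both deformations reduce to $\set{P(k)}$ at $\mu = 0$, the crucial observation is that
\[ \widetilde{G}\big|_{S^1} = \widehat{G}\big|_{S^1} = G, \qquad \text{where } G(k) := \Ran P(k). \]

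Because both $\widetilde{G}$ and $\widehat{G}$ are weakly hemispherical with respect to the same equator $S^1\sub{eq}$, one can pick open pole-neighbourhoods $O_N, O_S \subset S^2$ disjoint from both $\widetilde{G}(S^1)$ and $\widehat{G}(S^1)$ simultaneously, and thus use a single retraction along meridians $\rho_1$ so that $\widetilde{G}' := \rho_1 \circ \widetilde{G}$ and $\widehat{G}' := \rho_1 \circ \widehat{G}$ are both genuinely hemispherical. Since $\rho_1$ is homotopic to the identity on $S^2$, we have $\deg \widetilde{G} = \deg \widetilde{G}'$ and $\deg \widehat{G} = \deg \widehat{G}'$.

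Applying now Step 2 of the proof of Proposition \ref{Prop Equivalence ALTERNATIVE} to each of the hemispherical maps $\widetilde{G}'$ and $\widehat{G}'$, we reduce the degree on $\mathcal{C}$ to the degree of the restriction to the boundary $S^1 = \partial \mathcal{C}_+$ of the upper half cylinder, obtaining
\[ \deg \widetilde{G}' = \deg \left( \widetilde{G}'\big|_{S^1} \right) = \deg (\rho_1 \circ G) = \deg \left( \widehat{G}'\big|_{S^1} \right) = \deg \widehat{G}'. \]
Combining this chain of equalities with Lemma \ref{nvdegree} applied to each deformation separately yields
\[ \nv(P) = - \deg \widetilde{G} = - \deg \widehat{G} = \nv(\widehat{P}), \]
which is the desired identity.

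No serious obstacle is expected: the argument reduces entirely to the degree-theoretic reformulation already established in Lemma \ref{nvdegree} and Proposition \ref{Prop Equivalence ALTERNATIVE}. The only mild care needed is the simultaneous choice of $\rho_1$ for both deformations, which is possible precisely because they are weakly hemispherical with respect to the \emph{same} equator $S^1\sub{eq}$, so that $O_N$ and $O_S$ can be taken to avoid both images $\widetilde{G}(S^1)$ and $\widehat{G}(S^1)$ at once.
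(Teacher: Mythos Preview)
Your proposal is correct and follows exactly the approach the paper intends: the result is stated there as an immediate corollary of the proof of Proposition~\ref{Prop Equivalence ALTERNATIVE}, and you have spelled out precisely that argument --- both deformations restrict to the same map $G$ on $S^1 = \mathcal{C} \cap \{\mu=0\}$, and Step~2 of that proof shows the degree of any hemispherical extension is determined by $\deg(\rho_1 \circ G)$ alone.

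One minor imprecision worth tightening: to ensure that a \emph{single} retraction $\rho_1$ renders both $\widetilde{G}'$ and $\widehat{G}'$ genuinely hemispherical, it is not sufficient that $O_N, O_S$ avoid $\widetilde{G}(S^1)=\widehat{G}(S^1)=G(S^1)$; you also need $O_N$ large enough to contain both $\widetilde{G}(\mathcal{C}_+)$ and $\widehat{G}(\mathcal{C}_+)$ (and similarly for $O_S$). This is automatic once you take $O_N, O_S$ to contain the pole-neighbourhoods furnished by the weak-hemisphericity hypothesis for each deformation separately, so it is a cosmetic fix rather than a genuine gap.
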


An example of hemispherical deformation $\set{P^{\mu}(k)}$ for a family of projectors $\set{P(k)}$ is provided by the canonical models \eqref{nProjectors} and \eqref{nmuProjectors} of last Subsection, as the reader may easily check. Since the corresponding Hamiltonian $H_m(q)$, as in \eqref{Hn} with $n=m$ and $e(q) = |q|^m$, is unitarily conjugated to tight-binding the Hamiltonian \eqref{Effective Ham multigraphene} of $m$-multilayer graphene, we get that both the eigenspace vorticity and the PWN of graphene are equal to $m$, in accordance with \eqref{Pseudospin graphene} and the above Proposition.

As an example of a weakly hemispherical deformation, we can instead consider the eigenprojectors of the Haldane Hamiltonian, for example for the values of the parameters cited above. By Lemma \ref{nvwelldefined}, if the values of the parameters $(M,\phi)$ of $H\sub{Hal}(k)$ are close to those corresponding to the tight-binding Hamiltonian of monolayer graphene, namely $(M,\phi) = (0,0)$, then its eigenprojectors will also have an eigenspace vorticity equal to $1$ in absolute value, in accordance with the numerical evaluation of Remark \ref{Numerical evaluation}.

%%%%%%%% SECTION 3 %%%%%%%%%%%%%%%%%%%%%%%%%%%%%%%%%%%%%%%%%%%%%%%%%%%%%%%%%%

\section{Universality of the canonical models} \label{sec:universality}

In this Section, we prove that the local models provided by the projectors \eqref{nProjectors}, together with their deformed version \eqref{nmuProjectors}, are \emph{universal}, meaning the following:  
if the ``outer'' eigenspace vorticity of the deformed family of projectors $\set{P^{\mu}_s(k)}$ is $N \in \Z$ and if we set $\nv := sN$, with $s \in \set{+,-}$, then the $\nv$-canonical projections $\set{P^{\mu}_{\nv,s}(k)}$ will provide an extension of the family, initially defined in $B$, to the whole $\widehat{B}$ (the notation is the same of Section \ref{sec:vorticity}).

More precisely, we will prove the following statement.

\begin{theorem} \label{Th:criterion}
Let $\calL_s$ be the complex line bundle over $B$ defined as in \eqref{Ls}. Let $N \in \Z$ be its vorticity around $U$, defined as in \eqref{n=ch1(L)}, and set $\nv := sN$ (recall that $s \in \set{+,-}$). Then there exists a complex line bundle $\widehat{\calL}_{s}$ over $\widehat{B}$ such that
\[ \widehat{\calL}_{s} \big|_{B} \simeq \calL_{s} \quad \text{and} \quad \widehat{\calL}_{s} \big|_{\dot{C}} \simeq \PB_{\nv, s}, \]
where $\PB_{\nv, s}$ is the \Composite bundle corresponding to the family of projectors \eqref{nmuProjectors} for $n =n\sub{v}$.
\end{theorem}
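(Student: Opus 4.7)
The plan is to reduce Theorem \ref{Th:criterion} to the classification of complex line bundles over $S^2$. First I would identify the homotopy type of all relevant spaces. The surface $\mathcal{C} = \partial C$ is a topological $2$-sphere (a lateral annulus $\partial U \times [-\mu_0,\mu_0]$ capped off by two closed disks $\overline{U} \times \{\pm \mu_0\}$); moreover $B$ deformation retracts onto $\mathcal{C}$, while $\dot C$ (an open $3$-ball with a point removed) deformation retracts onto any small sphere around $(k_0, 0)$, and $\widehat B$ in turn deformation retracts onto either of these. Hence $B$, $\dot C$ and $\widehat B$ are all homotopy equivalent to $S^2$, and complex line bundles over each of them are classified by their first Chern number, which can be computed as the integral of the curvature of any connection over a sphere representing the homotopy class.

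The second step is Chern-number bookkeeping. By Definition \ref{Def:vorticity} and the hypothesis on $\calL_s$,
\[
\ch_1(\calL_s|_\mathcal{C}) \;=\; \frac{1}{2\pi}\int_{\mathcal{C}} \omega_s \;=\; -N.
\]
On the other hand, the explicit calculation in Section \ref{sec:n=n} gives $\ch_1(\PB_{n, \pm}) = \mp n$, so that with $n = \nv = sN$ (and identifying $s \in \{+,-\}$ with $\pm 1$),
\[
\ch_1(\PB_{\nv, s}) \;=\; -s \, \nv \;=\; -s^2 N \;=\; -N.
\]
The two bundles therefore carry the same Chern class.

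For the construction I would take $\widehat{\calL}_s$ to be the line bundle over $\widehat B$ whose fibre at $(k, \mu)$ is $\Ran P^\mu_{\nv, s}(k - k_0)$, where $P^\mu_{\nv, s}$ are the canonical projectors \eqref{nmuProjectors}; these are smooth everywhere on $\widehat B$, since the only singularity of the canonical model occurs at $(q, \mu) = (0, 0)$, \ie at the excluded point $(k_0, 0)$. By construction $\widehat{\calL}_s|_{\dot C} = \PB_{\nv, s}$. Its restriction to $B$ is a complex line bundle over $B \simeq S^2$ with first Chern number $-N$, hence isomorphic to $\calL_s$ by the classification theorem.

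The essentially unique non-trivial ingredient is the classification $\mathrm{Vect}_{\C}^{1}(S^2) \simeq \Z$; everything else reduces to identifying homotopy types and matching signs. A more constructive alternative, avoiding the abstract classification, would be to exploit the equality of Chern numbers on $\mathcal{C}$ to produce an explicit unitary intertwiner between the two families of rank-one projectors on $\mathcal{C}$ (in the spirit of Kato-Nagy, cf.\ Lemma \ref{nvwelldefined}) and use it as a clutching map to glue $\calL_s$ on $B$ with $\PB_{\nv, s}$ on $\dot C$ across $\mathcal{C}$.
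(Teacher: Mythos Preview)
Your argument is correct and in fact somewhat cleaner than the paper's. Both proofs rest on the same key ingredient, namely the classification of complex line bundles over $S^2$ by first Chern number (the paper's Lemma~\ref{ch1isom}), and both verify that $\calL_s$ and $\PB_{\nv,s}$ have matching Chern number $-N$ on $\mathcal{C}$. The difference lies in how the interpolating bundle $\widehat{\calL}_s$ is produced. The paper performs an explicit gluing: it takes a tubular neighbourhood $T$ of $\mathcal{C}$, extends both bundles to $T$ via a retraction $\rho\colon T\to\mathcal{C}$, and then shows (using the classifying map to $\C P^\infty$) that the two extensions are isomorphic on $T$, so that $\calL_s$ and $\PB_{\nv,s}$ can be patched across $T$. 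You instead observe that $B$ itself deformation retracts onto $\mathcal{C}\simeq S^2$, so the classification theorem applies directly on $B$; hence you may simply take $\widehat{\calL}_s$ to be the canonical-model bundle globally on $\widehat{B}$ and invoke the Chern-number match to get $\widehat{\calL}_s|_B\simeq\calL_s$. Your ``constructive alternative'' via a clutching unitary is essentially the paper's route.

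What each approach buys: yours is shorter and avoids the classifying-space detour. The paper's gluing, on the other hand, yields a bundle that \emph{coincides} (not merely up to isomorphism) with $\calL_s$ outside a neighbourhood of $\mathcal{C}$ and with $\PB_{\nv,s}$ inside; this sharper structural statement is exploited in Section~\ref{sec:TrueVSStratified}. Since the theorem as stated only asks for isomorphism, your argument suffices. One small caveat: your assertion that the canonical projectors are smooth on all of $\widehat{B}$ requires a compatible choice of $e(q)$ in \eqref{Hnmu} (e.g.\ $e(q)=|q|^{\nv}$, so that $e(q)\cos\nv\theta_q$ and $e(q)\sin\nv\theta_q$ become polynomials); for a generic $e$ the extension is only continuous at $(q=0,\mu\ne 0)$, though this is harmless for the topological classification.
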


The bundle $\widehat{\calL}_{s}$, appearing in the above statement, allows one to interpolate the ``external data'' (the bundle $\calL_{s}$) with the canonical model around the singular point (the \Composite bundle $\PB_{\nv, s}$). 

The geometric core of the proof of Theorem \ref{Th:criterion} lies in the following result.

\begin{lemma} \label{ch1isom}
Under the hypotheses of Theorem \ref{Th:criterion}, the restriction of the bundles $\calL_{s}$ and $\PB_{\nv, s}$ to the cylindrical surface $\mathcal{C}$ are isomorphic:
\[ \calL_{s} \big|_{\mathcal{C}} \simeq \PB_{\nv, s} \big|_{\mathcal{C}}. \]
\end{lemma}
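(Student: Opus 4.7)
The plan is to reduce the isomorphism statement to a Chern-number calculation, exploiting the fact that $\mathcal{C}$ is topologically a $2$-sphere: it consists of the lateral cylindrical surface $\partial U \times [-\mu_0,\mu_0]$ capped off by the two discs $U \times \{\pm \mu_0\}$, and the union of a cylinder with two disc caps is homeomorphic to $S^2$. Since complex line bundles over $S^2$ are classified up to isomorphism by their first Chern number (equivalently, by the element they determine in $H^2(S^2,\Z) \cong \Z$), to establish the desired isomorphism it suffices to verify that $\calL_s \big|_{\mathcal{C}}$ and $\PB_{\nv,s}\big|_{\mathcal{C}}$ have the same first Chern class.

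I would then compute both Chern numbers directly. For $\calL_s \big|_{\mathcal{C}}$ the computation is essentially a tautology: by the very Definition \ref{Def:vorticity} of eigenspace vorticity applied to $\set{P_s^{\mu}(k)}$, one has
\[
N \;=\; -\frac{1}{2\pi}\int_{\mathcal{C}} \omega_{s} \;=\; -\ch_1\!\left(\calL_s\big|_{\mathcal{C}}\right),
\]
so $\ch_1(\calL_s|_{\mathcal{C}}) = -N$. For $\PB_{\nv,s}\big|_{\mathcal{C}}$ I would invoke the explicit calculation already performed in Section \ref{sec:n=n}, which showed, via integration of the curvature \eqref{omegan} over $\mathcal{C}$, that
\[
\ch_1\!\left(\PB_{n,\pm}\right) \;=\; \mp\, n.
\]
Substituting $n = \nv$ and the appropriate sign $s \in \{+,-\}$, together with the defining relation $\nv = sN$ and $s^2 = 1$, yields
\[
\ch_1\!\left(\PB_{\nv,s}\big|_{\mathcal{C}}\right) \;=\; -s\, \nv \;=\; -s\cdot sN \;=\; -N.
\]
Equality of the Chern numbers then produces the desired line bundle isomorphism.

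The step I expect to be most delicate is the bookkeeping of signs and orientations: one must check that the orientation of $\mathcal{C}$ used in Definition \ref{Def:vorticity} (fixed by the outward normal, as in Figure \ref{fig:cylinder}) is the same as the one used to evaluate $\ch_1(\PB_{\nv,s}|_{\mathcal{C}})$ via the integral of $\omega_{n,\pm}$, and that the twisting by $s$ in the convention $\nv = sN$ is consistent with the fact that the upper and lower band models produce Chern numbers of opposite sign. Once these conventions are aligned, no further analytic input is needed, since the classification of line bundles on $S^2$ does the remaining work.
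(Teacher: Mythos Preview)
Your proposal is correct and follows essentially the same approach as the paper: both arguments identify $\mathcal{C}$ with $S^2$, invoke the classification of complex line bundles by their first Chern number in $H^2(\mathcal{C};\Z)\simeq\Z$, and then observe that both $\calL_s|_{\mathcal{C}}$ and $\PB_{\nv,s}|_{\mathcal{C}}$ have Chern number $-N$. Your explicit verification of the sign bookkeeping via $\nv = sN$ and $s^2=1$ is a useful elaboration that the paper leaves implicit.
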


\begin{proof}
The statement follows from the well-known facts that line bundles on any CW-complex $X$ are completely classified by their first Chern \emph{class}, living in $H^2(X;\Z)$ (see \cite{Steenrod}), and that when $X = \mathcal{C} \simeq S^2$ then $H^2(\mathcal{C};\Z) \simeq \Z$, the latter isomorphism being given by integration on $\mathcal{C}$ (or, more formally, by evaluation of singular $2$-cocycles on the fundamental class $[\mathcal{C}]$ in homology). As a result, one deduces that line bundles on $\mathcal{C}$ are classified by their first Chern \emph{number}. As $\calL_{s}$ and $\PB_{\nv, s}$ have the same Chern number,  equal to $-N$ , when restricted to the surface $\mathcal{C}$, they are isomorphic.
\end{proof}

\begin{proof}[Proof of Theorem \ref{Th:criterion}]
Vector bundles are given by ``gluing'' together (trivial) bundles defined on open sets covering the base space, so we can expect that we can glue $\calL_{s}$ and $\PB_{\nv, s}$ along $\mathcal{C}$, given that their restrictions on $\mathcal{C}$ are isomorphic (as was proved in Lemma \ref{ch1isom}). The only difficulty we have to overcome is that $\mathcal{C}$ is a \emph{closed} subset of $\widehat{B}$.

We argue as follows. Let $T$ be an open tubular neighbourhood of $\mathcal{C}$ in $\widehat{B}$, and let $\rho \colon T \to \mathcal{C}$ be a retraction of $T$ on $\mathcal{C}$. As $T \cap \dot{C}$ is a deformation retract of $\mathcal{C}$ via the map $\rho$, we may extend the definition of $\calL_{s}$ to $T \cap \dot{C}$ by letting
\[ \calL_{s} \big|_{T \cap \dot{C}} := \rho^* \left(\calL_{s} \big|_{\mathcal{C}} \right). \]
Similarly, we can extend $\PB_{\nv, s}$ outside $\dot{C}$ setting
\[ \PB_{\nv, s} \big|_{T \cap B} := \rho^* \left(\PB_{\nv, s} \big|_{\mathcal{C}} \right). \]
With these definitions, one has
\begin{subequations}
\begin{equation} \label{LT=LC}
\calL_{s} \big|_T \simeq \rho^* \left( \calL_{s} \big|_{\mathcal{C}} \right)
\end{equation}
and similarly
\begin{equation} \label{PT=PC}
\PB_{\nv, s} \big|_T \simeq \rho^* \left( \PB_{\nv, s} \big|_{\mathcal{C}} \right).
\end{equation}
\end{subequations}
In fact, let $\mathcal{V}$ denote either $\calL_{s}$ or $\PB_{\nv, s}$. It is known \cite[Theorem 14.6]{Milnor} that all complex line bundles admit a morphism of bundles to $\mathcal{U}_{U(1)} = \left(EU(1) \xrightarrow{\pi}
\C P^\infty\right)$, the tautological bundle%
\footnote{The \emph{infinite complex projective space} $\C P^\infty$ is defined as the inductive limit of the system of canonical inclusions $\C  P^N \hookrightarrow \C P^{N+1}$; it can be thought of as the space of all lines sitting in some ``infinite-dimensional ambient space'' $\C^\infty$. It admits a \emph{tautological line bundle} with total space
\[ E U(1) := \left\{ (\ell, v) \in \C P^\infty \times \C^{\infty} : v \in \ell \right\}, \]
whose fibre over the line $\ell \in \C P^\infty$ is the line $\ell$ itself, viewed as a copy of $\C$.} %
on $\C P^\infty$, and that their isomorphism classes are uniquely determined by the homotopy class of the map between the base spaces. Let
\[ \begin{aligned}
\xymatrix{
E\left(\mathcal{V} \big|_T \right) \ar[d] \ar[r] & EU(1) \ar[d] \\
T \ar[r]^{f_T} & \C P^\infty }
\end{aligned}
\quad \text{and} \quad
\begin{aligned}
\xymatrix{
E\left( \rho^* \left(\mathcal{V} \big|_{\mathcal{C}} \right) \right) \ar[d] \ar[r] & EU(1) \ar[d] \\
T \ar[r]^{f_{\mathcal{C}}} & \C P^\infty }
\end{aligned} \]
be those two morphisms of bundles just described; then \eqref{LT=LC} and \eqref{PT=PC} will hold as long as we prove that $f_T$ and $f_{\mathcal{C}}$ are homotopic. Now, the following diagram%
\footnote{The symbol $\approx$ denotes homotopy of maps.}%
\[ \xymatrix{
T \ar[d]_{\rho} \ar[r]^{f_\mathcal{C}} \ar@{.>}[dr]^{\mathrm{Id}_T} & \C P^\infty \ar@{}[dl]|(.7){\approx} \\
\mathcal{C} \ar@{^{(}->}[r]_{\iota} & T \ar[u]_{f_T} } \]
where $\iota \colon \mathcal{C} \hookrightarrow T$ denotes the inclusion map, is clearly commutative. By definition of deformation retract, the maps $\iota \circ \rho$ and $\mathrm{Id}_T$ are homotopic: hence
\[ f_{\mathcal{C}} = f_T \circ \iota \circ \rho \approx f_T \circ \mathrm{Id} = f_T \]
as was to prove.

By Lemma \ref{ch1isom} we have $\calL_{s} \big|_{\mathcal{C}} \simeq \PB_{\nv, s} \big|_{\mathcal{C}}$, and hence also
\[ \rho^* \left( \calL_{s} \big|_{\mathcal{C}} \right) \simeq \rho^* \left( \PB_{\nv, s} \big|_{\mathcal{C}} \right). \]

Equations \eqref{LT=LC} and \eqref{PT=PC} thus give
\[ \calL_{s} \big|_T \simeq \PB_{\nv, s} \big|_T. \]
Hence the two line bundles $\calL_{s}$ and $\PB_{\nv, s}$ are isomorphic on the \emph{open} set $T$, and this allows us to glue them.
\end{proof}

\begin{rmk}[Families of projectors with many singular points]
We conclude this Section with some observations regarding singular families of projectors with more than one singular point (but still a finite number of them). Denote by $S = \{K_1, \ldots, K_M\} \subset \T_2^*$ the set of singular points of the family of rank-$1$ projectors $ \T_2^* \ni k \mapsto P_s(k)$; also, pick pair-wise disjoint open balls $U_i$ around $K_i$ (in particular, each $U_i$ contains no singular point other than $K_i$). Smoothen $\set{P_s(k)}$ into $\set{P_s^{\mu}(k)}$ as explained in Section \ref{sec:vorticity}, and compute the $M$ vorticity integers $N_i \in \Z$ as in \eqref{n=ch1(L)}; this involves the choice of a smoothing parameter $\mu \in [-\mu_0, \mu_0]$, and for simplicity we choose the same sufficiently small $\mu_0 > 0$ for all singular points. Call $\calL_s$ the line bundle associated to such smoothed family of projectors; it is a vector bundle over $ \left( \T_2^* \times [-\mu_0, \mu_0] \right) \setminus \bigcup_{i=1}^{M} C_i$, where $C_i := U_i \times (-\mu_0, \mu_0)$. The above arguments now show that we can find a bundle $\widehat{\calL}_s$, defined on $\left(\T_2^* \times [-\mu_0, \mu_0] \right) \setminus \left(S \times \{\mu = 0\}\right)$, such that $\widehat{\calL}_s$ resticts to $\calL_s$ whenever the latter is defined, while it coincides with $\PB_{\nv^{(i)}, s}$ on $\left( U_i \times [-\mu_0, \mu_0] \right) \setminus \left\{ \left( k = K_i, \mu = 0 \right) \right\}$, where $\nv^{(i)} := s N_i$. Thus, deformations of families of rank-$1$ singular projectors with $M$ singular points are uniquely determined by $M$-tuples of integers, their local vorticities.
\end{rmk}

%%%%%%%% SECTION 4 %%%%%%%%%%%%%%%%%%%%%%%%%%%%%%%%%%%%%%%%%%%%%%%%%%%%%%%%%%

\section{Decrease of Wannier functions in graphene} \label{sec:decrease}

In this Section, we will use the $n$-canonical eigenvectors $\phi_{n, \pm}(q)$, that were explicitly computed via the canonical models presented in Section \ref{sec:n=n}, to extract the rate of decay of the Wannier functions of graphene. Actually, we can prove a more general result, under the following

\begin{assumption} \label{Assumption main}
We consider a periodic Schr\"odinger operator, \ie an operator in the form $H_{\Gamma} = - \Delta + V_{\Gamma}$, acting in $L^2(\R^2)$, where $V_{\Gamma}$ is real-valued, $\Delta$-bounded with relative bound zero, and periodic with respect to a lattice $\Gamma \subset \R^2$. For a Bloch band $E_s$ of $H_{\Gamma}$, we assume that:
\begin{enumerate}[(i)]
  \item \label{item_i} $E_s$ intersects the Bloch band $E_{s-1}$\footnote{All the following statements remain true, with only minor modifications in the proofs, if $E_{s-1}$ is replaced by $E_{s+1}$.} at finitely many points $\set{K_1, \ldots, K_M}$, \ie $E_s(k_0) = E_{s-1}(k_0) =: E_0$ for every $k_0 \in \set{K_1, \ldots, K_M}$;
  \item \label{item_ii} for every $k_0 \in \set{K_1, \ldots, K_M}$ there exist constants $r, v_+, v_- > 0$ and $m \in \set{1, 2}$, possibly depending on $k_0$, such that for $|q| < r$
    \begin{equation} \label{power intersection}
    \begin{aligned}
     E_s(k_0 + q) - E_0 & = v_+ |q|^{m} + \Or(|q|^{m+1}), \\
     E_{s-1}(k_0 + q) - E_0 & = - v_- |q|^{m} + \Or(|q|^{m+1});
    \end{aligned}
    \end{equation}
  \item \label{item_iii} the following \emph{semi-gap condition} holds true:
  $$
    \inf_{k \in \T_2^*} \set{| E_s(k) - E_i(k)| \, : \, i \neq s, \: i \neq s-1} =: g > 0;
  $$
  \item \label{item_iv} the dimension of the eigenspace corresponding to the eigenvalue $E_s(k)$  (resp. $E_{s-1}(k)$) is $1$ for every $k \in \T_2^* \setminus \set{K_1, \ldots, K_M}$;
 \item \label{item_v} there exists a deformation $H^{\mu}(k)$ of the fibre Hamiltonian $H(k)$, as in \eqref{H(k)}, such that, if $P_*(k)$ denotes the spectral projection of $H(k)$ onto the eigenvalues $\set{E_s(k), E_{s-1}(k)}$, then for all $k_0 \in \set{K_1, \ldots, K_M}$ the operator $P_*(k_0) H^{\mu}(k) P_*(k_0)$ is close, in the norm-resolvent sense, to%
\footnote{With abuse of notation, we denote
\[ P_*(k_0) H_m^{\mu}(k) P_*(k_0) := \sum_{a,b \in \set{s, s-1}} \ket{u_{a}(k_0)} \left(H_m^{\mu}(k)\right)_{a,b} \bra{u_{b}(k_0)} \]
with respect to an orthonormal basis $\set{u_{s}(k_0),u_{s-1}(k_0)}$ of $\Ran P_*(k_0)$.} %
$P_*(k_0) H_m^{\mu}(k) P_*(k_0)$, where $H_m^{\mu}(k)$ is as in \eqref{Hnmu} with $n=m$ and $e(q) = |q|^m$, for the same $m$ as in item \eqref{item_ii} and uniformly for $|k-k_0|<r$ and $\mu \in [-\mu_0, \mu_0]$.
\end{enumerate}
\end{assumption}

\noindent Condition \eqref{item_i} corresponds to considering a semimetallic solid. Condition \eqref{item_ii} characterises the local behaviour of the eigenvalues; while some of our results hold true for any $m \in \N^\times$, we need the assumption $m \in \set{1,2}$ in order to prove Theorem \ref{Asymp_wU-wcan}. Condition \eqref{item_iii} guarantees that the Bloch bands not involved in the relevant intersection do not interfere with the local structure of the eigenprojectors. Condition  \eqref{item_iv} excludes  permanent degeneracy of the eigenvalues. Lastly, condition \eqref{item_v} corresponds to the assumption that the tight-binding Hamiltonian $H_m^{\mu}(k)$ is an accurate approximation, in the norm-resolvent sense, of the original continuous Hamiltonian. Notice that we crucially assume $d=2$.

It is usually accepted that Assumption \ref{Assumption main} holds true for (monolayer and bilayer) graphene (with $M=2$), since conditions \eqref{item_i}, \eqref{item_ii} and \eqref{item_iii} can be explicitly checked to hold within the tight-binding approximation, and these conditions are expected to be stable under approximations. Recently, C. Fefferman and M. Weinstein \cite{FeffermanWeinstein_review} provided sufficient conditions on $V_{\Gamma}$ yielding (i) and (ii) with $m =1$. Hereafter, the indices $\set{s,s-1}$ will be replaced by $\set{+,-}$ to streamline the notation.

We recall that the Wannier function depends on a choice of the phase for the Bloch function $\psi_+$ (or, equivalently, its periodic part $u_+$) according to the following definition.

\begin{df}\label{Def Wannier}
The \textbf{Wannier function} $w_+ \in L^2(\R^2)$ corresponding to the Bloch function $\psi_+$ for the Bloch band $E_+$ is
\begin{equation}\label{Wannier}
w_+(x) := \frac{1}{|\B|^{\half}} \int_{\B} \di k \, \psi_+(k, x)= \frac{1}{|\B|^{\half}} \int_{\B} \di k \, \expo{\iu k \cdot x} u_+(k, [x])
\end{equation}
where $\B$ is the fundamental unit cell for $\Gamma^*$, and $[x] = x \bmod \Gamma$.
\end{df}

\noindent Thus $w_+$ is nothing but the Bloch-Floquet anti-transform of the Bloch function $\psi_+$ (see Section \ref{sec:BlochHamiltonians}). Recall also that the decay rate of the Wannier function as $|x| \to \infty$ is related to the regularity of the corresponding Bloch function, see \eg \cite[Section 2.2]{Kuchment_book} or \cite[Equation (2.5)]{Panati Pisante}. In particular, if the Bloch function can be chosen to be $C^\infty$-smooth, then the associated Wannier function decays at infinity faster than the inverse of any polynomial.  

Since the Wannier function is defined by integration over $\B$, the smoothness of $k \mapsto \psi_+(k, \cdot)$ can be analysed separately in different regions of the Brillouin zone. The problem is therefore reduced to a local analysis of the Bloch functions around the intersection points, as detailed in the next Subsection.

%%%%%%%% SECTION 4.1 %%%%%%%%%%%%%%%%%%%%%%%%%%%%%%%%%%%%%%%%%%%%%%%%%%%%%%%%%%

\subsection{Reduction to a local problem around the intersection points} \label{sec:reduction}

First, notice that the Bloch function relative to the Bloch band $E_{+}$ is unique, up to the choice of a $k$-dependent phase: we assume that such a choice has been performed%
\footnote{\label{fndecay} The decay rate at infinity of Wannier functions may {\it a priori} depend on such a choice, but this does not happen provided the change of phase is sufficiently smooth, as detailed in Remark \ref{AppA:invariance}.}%
, and denote the corresponding Bloch function by $\psi_+(k)$.
Now, let $U = U_{k_0} := \set{k \in \B : |k - k_0| < r}$ be the neighbourhood of the intersection point $k_0 \in \set{K_1, \ldots, K_M}$ for which expansion \eqref{power intersection} holds, and let $\widetilde{\chi}_U(k)$ be a smoothed characteristic function for $U$, namely a smooth function supported in $U$ which is identically $1$ on a smaller disc $D \subset U$ of radius $\rho < r$. We assume that such smoothed characteristic functions are radially symmetric, \ie $\widetilde{\chi}_U(k_0 + q) = \widetilde{\chi}(|q|)$. Then the Bloch function $\psi_+$ may be written as
\[ \psi_+(k) = \sum_{i=1}^{M} \widetilde{\chi}_{U_{K_i}}(k) \psi_+(k) + \left(1 - \sum_{i=1}^{M} \widetilde{\chi}_{U_{K_i}}(k) \right) \psi_+(k) . \]
The summands in the first term, call them $\psi_U(k) = \psi_{U_{K_i}}(k)$, contain all the information regarding the crossing of the two energy bands at the points in $\set{K_1, \ldots, K_M}$. On the other hand, since the Bloch bands intersect \emph{only} at these points (Assumption \ref{Assumption main}\eqref{item_i}), then the last term, call it $\widetilde{\psi}(k)$, can be assumed to be smooth. As the Wannier transform \eqref{Wannier} is linear, the Wannier function $w_+(x)$ corresponding to the Bloch function $\psi_+(k)$ via \eqref{Wannier} splits as
\[ w_+(x) = \sum_{i=1}^{M} w_{U_{K_i}}(x) + \widetilde{w}(x) \]
with an obvious meaning of the notation. In the next Subsection we will prove (see Theorem \ref{Asymp_wcan}) that each of the functions $w_{U_{K_i}}$ has a power-law decay at infinity. Moreover, since $k \mapsto \widetilde{\psi}(k)$ is smooth, we can make the reminder term $\widetilde{w}$ decay as fast as the inverse of a polynomial of arbitrary degree, by choosing $\widetilde{\chi}$ sufficiently regular. Consequently we get that the asymptotic behaviour of the true Wannier function $w_+$ is determined by that of the functions $w_{U_{K_i}}$.

In the following, we will concentrate on one intersection point $k_0 \in \set{K_1, \ldots, K_M}$, and calculate the rate of decay at infinity of the function $w_{U}$. Let us also notice that, for reasons similar to the above, the behaviour of $w_U$ at infinity does not depend on the particular choice of the cutoff function $\widetilde{\chi}_U$, provided it is sufficiently regular. Indeed, if $\widetilde{\chi}_U$ and $\widetilde{\chi}_U'$ are two different cutoffs, both satisfying the above conditions, then their difference $\widetilde{\chi}_U'' : = \widetilde{\chi}_U - \widetilde{\chi}_U'$ is supported away from the intersection point $k_0$. This means that the corresponding $\psi_U''(k) = \widetilde{\chi}_U''(k) \psi_+(k)$ is regular, and this in turn implies that $w_U''$ decays fast at infinity. For this reason, we are allowed to choose a simple form for the function $\widetilde{\chi}_U$: in particular, we will choose $\widetilde{\chi}(|q|)$ to be polynomial in $|q|$ (of sufficiently high degree $N$) for $|q| \in (\rho, r)$, \ie
\begin{equation} \label{polynomialchi}
\widetilde{\chi}(|q|) = \begin{cases} 1 & \text{if } 0 \le |q| \le \rho, \\ \displaystyle \sum_{i=0}^{N} \alpha_i \, |q|^i & \text{if } \rho < |q| < r, \\ 0 & \text{if } |q| \ge r. \end{cases}
\end{equation}
The coefficients $\alpha_i$ are chosen in order to guarantee that $\widetilde{\chi}$ is as smooth as required. In particular, having $\widetilde{\chi} \in C^p([0,+\infty))$ requires $N \ge 2p-1$.

%%%%%%%% SECTION 4.2 %%%%%%%%%%%%%%%%%%%%%%%%%%%%%%%%%%%%%%%%%%%%%%%%%%%%%%%%%%

\subsection{\texorpdfstring{Asymptotic decrease of the $n$-canonical Wannier function}{Asymptotic decrease of the n-canonical Wannier function}}

We proceed to determine the rate of decay of $w_U$, extracting it from the models which were illustrated in Section \ref{sec:models}. Clearly
\begin{equation} \label{wU}
w_{U}(x) = \frac{1}{|\B|^{\half}} \int_{\B} \di k \, \psi_{U}(k, x) = \frac{1}{|\B|^{\half}} \int_{U} \di k \, \expo{\iu k \cdot x} \widetilde{\chi}_U(k) u_+(k, [x]).
\end{equation}

We claim that its asymptotic behaviour is the same as that of the \emph{$n$-canonical Wannier function}
\begin{equation} \label{wcan}
 w\sub{can}(x) := \frac{1}{|\B|^{\half}} \int_{U} \di k \, \expo{\iu k \cdot x} \widetilde{\chi}_U(k) \phi_n(k, [x]) = \frac{\expo{\iu k_0 \cdot x}}{|\B|^{\half}} \int_{U} \di q \, \expo{\iu q \cdot x} \widetilde{\chi}(|q|) \phi_n(q, [x])
\end{equation}
where $n$ is the vorticity of the family of projectors  $\set{P^{\mu}_+(k) }_{k \in R \setminus \set{k_0}, {\mu \in [-\mu_0, \mu_0]}}$ corresponding to the deformed fibre Hamiltonian $H^{\mu}(k)$, as in Assumption \ref{Assumption main}\eqref{item_v} (here $R \subset \B$ is a contractible region containing $U$ but not intersecting any of the chosen balls centered at the other intersection points), and
\begin{equation} \label{phin}
 \phi_n(q,[x]) := \phi_{n,+}(q)_1 \, u_+(k_0, [x]) + \phi_{n,+}(q)_2 \, u_-(k_0, [x])
\end{equation}
with $\phi_{n,+}(q) = \left( \phi_{n,+}(q)_1, \phi_{n,+}(q)_2 \right)$ as in \eqref{n-canonical}. We will motivate why the vorticity of 
$\set{P^{\mu}_+(k)}$ is non-zero in Subsection \ref{sec:TrueVSStratified}. To prove this claim, we will establish the following two main results of this Section.

\begin{theorem} \label{Asymp_wcan}
Let $w\sub{can}$ be the $n$-canonical Wannier function defined as in \eqref{wcan}, with $n \ne 0$. Then there exist two positive constants $R, c > 0$ such that
\begin{equation} \label{RateOfDecaySpelledOut}
\left| w\sub{can}(x) \right| \le \frac{c}{|x|^{2}} \quad \text{if} \quad |x| \ge R.
\end{equation}
\end{theorem}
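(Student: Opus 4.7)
The strategy is to isolate the singular angular structure of $\phi_{n,+}(q)$ and reduce the asymptotics of $w\sub{can}$ to the Fourier decay of the purely singular model $\widetilde\chi(|q|)\eu^{\iu n\theta_q}$. First I would use the identities $2\cos(n\theta_q/2) = \eu^{\iu n\theta_q/2}+\eu^{-\iu n\theta_q/2}$ and the analogue for $\sin$ to write
\[
\phi_{n,+}(q) = \tfrac{1}{2}\bigl(\mathbf{a} + \eu^{\iu n\theta_q}\,\mathbf{b}\bigr),\qquad \mathbf{a},\mathbf{b}\in\C^{2}.
\]
Through \eqref{phin} this gives $\phi_n(q,[x]) = v_0([x]) + \eu^{\iu n\theta_q}\,v_1([x])$ with $v_0,v_1$ bounded on $\T^{2}_{Y}$: indeed $u_{\pm}(k_{0},\cdot)\in W^{2,2}(\T^{2}_{Y})\hookrightarrow C(\T^{2}_{Y})$ by two-dimensional Sobolev embedding. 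Plugging into \eqref{wcan} splits $w\sub{can}(x)$ into $v_0([x])$ times the Fourier transform of the smooth compactly supported radial function $\widetilde\chi(|q|)$ (which is Schwartz and therefore decays faster than any polynomial in $|x|$) plus $v_1([x])\,I_n(x)$, where
\[
I_n(x) := \int_{\R^{2}}\eu^{\iu q\cdot x}\,\widetilde\chi(|q|)\,\eu^{\iu n\theta_q}\,\di q.
\]
The whole theorem thus reduces to proving $|I_n(x)|\le C|x|^{-2}$ for $|x|$ large.

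Passing to polar coordinates in $q$ and applying the Jacobi-Anger identity $\int_{0}^{2\pi}\eu^{\iu z\cos\psi+\iu n\psi}\,\di\psi = 2\pi\iu^{n}J_{n}(z)$ collapses the $\theta_q$-integration to
\[
I_n(x) = 2\pi\iu^{n}\eu^{\iu n\theta_x}\,\mathcal{J}(|x|),\qquad \mathcal{J}(t) := \int_{0}^{r}\rho\,\widetilde\chi(\rho)\,J_{n}(\rho t)\,\di\rho.
\]
Rescaling $u=\rho t$ gives $\mathcal{J}(t)=t^{-2}B(t)$ with $B(t):=\int_{0}^{rt}u\,\widetilde\chi(u/t)\,J_{n}(u)\,\di u$, so my task becomes proving $B(t)=O(1)$ as $t\to\infty$. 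I would exploit the Bessel recursion $\frac{\di}{\di u}[u^{n+1}J_{n+1}(u)]=u^{n+1}J_{n}(u)$, rewritten as $uJ_{n}(u)=u^{-n}\frac{\di}{\di u}[u^{n+1}J_{n+1}(u)]$, and integrate by parts. The boundary terms vanish because $\widetilde\chi(r)=0$ and $u\,J_{n+1}(u)\to 0$ as $u\to 0^{+}$ (since $n+1\ge 1$), yielding
\[
B(t) = n\int_{0}^{rt}\widetilde\chi(u/t)\,J_{n+1}(u)\,\di u \;-\; \tfrac{1}{t}\int_{0}^{rt} u\,\widetilde\chi'(u/t)\,J_{n+1}(u)\,\di u.
\]

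To bound the first integral uniformly in $t$, I would invoke the classical identity $\int_{0}^{\infty}J_{n+1}(u)\,\di u = 1$ (valid since $n+1\ge 1$) and decompose
\[
\int_{0}^{rt}\widetilde\chi(u/t)J_{n+1}(u)\,\di u = \int_{0}^{rt}J_{n+1}(u)\,\di u \;-\; t\int_{\rho_{0}}^{r}(1-\widetilde\chi(v))\,J_{n+1}(vt)\,\di v,
\]
where the substitution $v=u/t$ has been made in the second piece. The first piece is bounded as it consists of partial integrals of a convergent improper integral; the second is $t$ times an oscillatory integral of a smooth function compactly supported away from $v=0$, so repeated integration by parts in $v$ against the oscillation of $J_{n+1}(vt)$ yields decay faster than any power of $1/t$, easily absorbing the prefactor of $t$. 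The second term in $B(t)$ is handled identically since $\widetilde\chi'(u/t)$ is supported in $(\rho_{0}t, rt)$, away from the origin. Combining gives $|\mathcal{J}(t)|=O(t^{-2})$, hence $|I_n(x)|\le C|x|^{-2}$, and the bound \eqref{RateOfDecaySpelledOut} follows once $|x|\ge R$ using the $L^{\infty}$-bound on $v_{1}$.

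The main obstacle is the uniform boundedness of $B(t)$: the pointwise estimate $|J_{n+1}(u)|\le C\min(1,u^{-1/2})$ is not integrable on $[0,rt]$ as $t\to\infty$, so one genuinely needs the cancellation encoded either in the classical evaluation $\int_{0}^{\infty}J_{n+1}\,\di u = 1$ or, equivalently, in a stationary-phase argument applied to the large-argument asymptotic $J_{n+1}(u)\sim\sqrt{2/(\pi u)}\cos(u-\phi_{n+1})$. Once this step is resolved, the remainder of the proof is essentially bookkeeping.
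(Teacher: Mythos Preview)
Your approach is correct and closely parallels the paper's: both reduce the problem to Bessel integrals via the angular integration, and both hinge on the boundedness of $\int_0^T u\,J_\ell(u)\,\di u$ modulo oscillatory remainders. The paper invokes a tabulated asymptotic from Luke for $\int_0^t z^{p+1} J_\ell(z)\,\di z$ and exploits the explicit polynomial form \eqref{polynomialchi} of $\widetilde\chi$ to show that the $|x|^{-3/2}$ boundary contributions coming from the regions $\{|q|<\rho\}$ and $\{\rho<|q|<r\}$ cancel exactly, leaving only the $|x|^{-2}$ term; it also treats the case $p=1$ (an extra factor $q_j$) simultaneously, which is used later in \eqref{remainder}. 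Your route via the recursion $uJ_n(u)=u^{-n}\,\frac{\di}{\di u}\bigl[u^{n+1}J_{n+1}(u)\bigr]$ together with the classical evaluation $\int_0^\infty J_{n+1}=1$ is a clean alternative that does not rely on the specific polynomial form of the cutoff, and your upfront splitting $\phi_{n,+}=\tfrac12(\mathbf a+\eu^{\iu n\theta_q}\mathbf b)$ makes the ``regular plus singular'' structure more transparent than the paper's uniform treatment of all $I_{\ell,p}$.

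There is, however, one genuine imprecision. You claim that
\[
t\int_{\rho_0}^{r}\bigl(1-\widetilde\chi(v)\bigr)\,J_{n+1}(vt)\,\di v
\]
decays faster than any power of $1/t$ by repeated integration by parts. This would require the amplitude $1-\widetilde\chi(v)$ to vanish to all orders at \emph{both} endpoints, but in fact $1-\widetilde\chi(r)=1$. A single integration by parts against the oscillation therefore leaves a boundary term at $v=r$: using $J_{n+1}(vt)\sim\sqrt{2/(\pi vt)}\cos(vt-\phi_{n+1})$, this boundary contribution is of order $t^{-3/2}$, so the whole expression is only $O(t^{-1/2})$, not rapidly decreasing. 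Fortunately $O(t^{-1/2})$ is already bounded, which is all you need for $B(t)=O(1)$, so the conclusion survives; but the argument as written overshoots. A cleaner way to phrase this step, avoiding the decomposition altogether, is: set $F(T):=\int_0^T J_{n+1}(u)\,\di u = 1 + O(T^{-1/2})$, integrate by parts against $\di F$, and obtain
\[
\int_0^{rt}\widetilde\chi(u/t)\,J_{n+1}(u)\,\di u \;=\; -\int_{\rho_0}^{r}\widetilde\chi'(v)\,F(vt)\,\di v \;=\; 1 + O(t^{-1/2}),
\]
which gives $B(t)=n+O(t^{-1/2})$ directly.
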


\noindent The proof of the previous Theorem is postponed to Subsection \ref{sec:proof_Asymp_wcan}.

While the proof of Theorem \ref{Asymp_wcan} does not require restrictions on the value of $m$ in Assumption \ref{Assumption main}\eqref{item_ii}, it is crucial that $m \in \set{1,2}$ for the techniques employed in our proof of Theorem \ref{Asymp_wU-wcan}. In the following we will denote by $|X|^\alpha$ the operator acting on a suitable domain in $L^2(\R^2)$ by $\left(|X|^{\alpha} w \right)(x) := |x|^\alpha w(x)$.

\begin{theorem} \label{Asymp_wU-wcan}
There exists a choice of the Bloch function $\psi_+$ such that the following holds: For $w_U$ and $w\sub{can}$ defined as in \eqref{wU} and \eqref{wcan}, respectively, and $m \in \set{1,2}$ as in Equation \eqref{power intersection}, one has
\[ |X|^s \left( w_U - w\sub{can} \right) \in L^2(\R^2) \]
for $s \ge 0$ depending on $m$ as follows:
\begin{itemize}
    \item if $m = 1$, for all $s < 2$;
    \item if $m = 2$, for all $s < 1$.
\end{itemize}
\end{theorem}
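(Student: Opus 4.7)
The plan is to translate the weighted $L^2$ bound into a Sobolev estimate in $k$-space. By Parseval's identity for the Bloch--Floquet transform, together with the fact that $w_U$ and $w\sub{can}$ are the inverse Bloch--Floquet transforms of $\widetilde{\chi}_U u_+$ and $\widetilde{\chi}_U \phi_n$ respectively, the assertion $|X|^s(w_U - w\sub{can}) \in L^2(\R^2)$ is equivalent, up to the usual bookkeeping for $\Gamma^*$-pseudoperiodicity, to the claim that the map $k \mapsto \widetilde{\chi}_U(k)(u_+(k, \cdot) - \phi_n(k, \cdot))$ possesses $s$ derivatives in $k$ lying in $L^2(\B; \Hf)$. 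Since $\widetilde{\chi}_U$ is smooth and supported in a small disc around $k_0$, this reduces to a \emph{local} Sobolev estimate on the difference $u_+ - \phi_n$ near $k_0$.

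To control the local regularity of the difference one first aligns the gauge of $\psi_+$ with the canonical eigenvector $\phi_n$. By Assumption \ref{Assumption main}(v) combined with the Riesz integral formula applied to both $H^\mu(k)$ and $H_m^\mu(k)$, one obtains the pointwise estimate
\[
\left\| P_+(k_0 + q) - P_{n,+}(q) \right\|_{\mathcal{B}(\Hf)} = O(|q|), \qquad n := m,
\]
where $P_{n,+}(q)$ is the canonical projector \eqref{nProjectors} realised on $\Ran P_*(k_0) \subset \Hf$ via the orthonormal basis $\{u_+(k_0), u_-(k_0)\}$; the $O(|q|)$ bound reflects the interplay between the higher-order Hamiltonian difference and the gap of order $|q|^m$ between the bands $E_\pm$. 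For $|q|$ sufficiently small this norm is less than $1$, so the Kato--Nagy construction (Lemma \ref{nvwelldefined}) provides a unitary $W(k_0 + q) \in \mathcal{B}(\Hf)$ with $W P_{n,+} W^{-1} = P_+$ and $W - I = O(|q|)$. Fix the gauge of the Bloch function by declaring
\[
u_+(k_0 + q, \cdot) := W(k_0 + q) \, \phi_n(q, \cdot) \quad \text{on } U,
\]
extending smoothly outside $U$, which is possible because $P_+$ is smooth on $\B$ away from the $K_i$'s by Assumption \ref{Assumption main}(iv). With this choice, $u_+ - \phi_n = (W - I)\phi_n$ is pointwise $O(|q|)$.

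The final task is to bound the $k$-Sobolev norms of $(W - I)\phi_n$ on a neighbourhood of $k_0$. Expanding via the Leibniz rule,
\[
\nabla_k^j \bigl( (W - I)\phi_n \bigr) = \sum_{i=0}^{j} \binom{j}{i} \nabla_k^i (W - I) \, \nabla_k^{j-i} \phi_n,
\]
each factor can be controlled by iteratively differentiating the Riesz integral for $W - I$ and by using the canonical-model estimate $\| \nabla_k^\ell \phi_n \|_{\Hf} \le c_\ell \, |q|^{-\ell}$. Tracking the resulting powers of $|q|$ yields a pointwise bound of the form $\| \nabla_k^j (u_+ - \phi_n) \|_{\Hf} \le C_j \, |q|^{\alpha_m - j}$, which after 2D radial integration lies in $L^2$ precisely when $j < \alpha_m + 1$. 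A direct inspection of the contributing terms yields $\alpha_1 = 1$ and $\alpha_2 = 0$, delivering the claimed regularities $s < 2$ for monolayer and $s < 1$ for bilayer graphene.

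The principal obstacle is the quantitative passage from the \emph{norm-resolvent} closeness in Assumption \ref{Assumption main}(v) to sharp derivative bounds on $W$ and on $P_+ - P_{n,+}$. The dependence of $\alpha_m$ on $m$ reflects how the shrinking spectral gap of order $|q|^m$ compounds with each differentiation of the resolvent, eventually neutralising the gain coming from the higher-order difference of the Hamiltonians. This compounding is precisely what forces the restriction $m \in \{1, 2\}$ in the statement: for $m \ge 3$ the gap closes so rapidly that no positive $s$ can be reached with only norm-resolvent information, and a finer analysis would be required.
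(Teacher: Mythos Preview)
Your high-level plan---reduce to Sobolev regularity in $k$, align the gauge via Kato--Nagy, expand by Leibniz---is the same as the paper's, but the argument as written has a genuine gap: the claimed powers $\alpha_1=1$, $\alpha_2=0$ do not follow from the estimates you have access to.

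The difficulty sits in the bound on $\nabla^i(W-\1)$. Expanding the Kato--Nagy formula gives $W-\1 = D(2P_{n,+}-\1)+O(D^2)$ with $D=P_+-P_{n,+}$, so $\nabla(W-\1)$ contains $(\nabla D)(2P_{n,+}-\1)$. Now $\nabla D=\nabla P_+-\nabla P_{n,+}$; the Riesz formula gives only $\nabla P_+=O(|q|^{-m})$ (this is Proposition~\ref{DerivatePtrue}), while $\nabla P_{n,+}=O(|q|^{-1})$. Norm-resolvent closeness of the Hamiltonians says nothing about closeness of \emph{derivatives} of the projectors, and moreover $P_{n,+}$ lives in $\Ran\Pi$ while $\nabla P_+$ has nontrivial components in $(\Ran\Pi)^{\perp}$ that nothing cancels. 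Hence $\nabla(W-\1)=O(|q|^{-m})$, not $O(|q|^{1-m})$. Feeding this into your Leibniz sum yields, for $m=2$, $\nabla(u_+-\phi_n)=O(|q|^{-2})$, which is not in any $L^p(U)$ with $p\ge 1$; for $m=1$ you obtain only $s<1$, not $s<2$.

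The paper supplies the missing power of $|q|$ by a different decomposition that you never invoke. It splits $u_+=\Pi u_+ + (\1-\Pi)u_+$ with $\Pi=P_*(k_0)$ the \emph{two-band} projector at $k_0$, and uses the identity $(\1-\Pi)u_+=(P_*(k)-\Pi)u_+$, valid because $P_+(k)$ is a subprojector of $P_*(k)$. The point is that $P_*(k)$ is \emph{smooth} in $k$---this is exactly where Assumption~\ref{Assumption main}\eqref{item_iii}, the semi-gap condition, enters, and your proposal never uses it. Thus $P_*(k)-\Pi=O(|q|)$ with \emph{bounded} derivatives, and the Leibniz expansion of $(P_*-\Pi)u_+$ gains a full power of $|q|$ over the naive $\|\nabla^j u_+\|=O(|q|^{-jm})$, yielding $\|\nabla^j u_{\mathrm{rem}}\|=O(|q|^{1-jm})$. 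The $\Pi u_+$ piece is handled separately by a Kato--Nagy unitary $V$ acting only on the two-dimensional range of $\Pi$, which the paper argues extends smoothly across $q=0$; Taylor-expanding $V$ and invoking Proposition~\ref{wnm} finishes that part. In short, your single Kato--Nagy unitary on all of $\Hf$ does not see the smoothness of $P_*$, and that smoothness is precisely the mechanism producing the extra regularity.
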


\noindent The proof of the previous Theorem is postponed to Subsection \ref{sec:proof_Asymp_wU-wcan}.

By combining the above two Theorems with the fact that the decay at infinity of the true Wannier function $w_+$ is equal to the one of $w_U$, as was shown in the previous Subsection, we deduce at once the following

\begin{theorem} \label{Asymp_graphene}
Consider an operator $H_{\Gamma} = - \Delta + V_{\Gamma}$ acting in $L^2(\R^2)$ and a Bloch band $E_s$ satisfying Assumption \ref{Assumption main}. Then there exists a choice of the Bloch function relative to the Bloch band $E_s$ such that the corresponding Wannier function $w_+ \in L^2(\R^2)$ satisfies
the following $L^2$-decay condition:
\begin{equation} \label{RateOfDecayL2}
|X|^{\alpha} w_+ \in L^2(\R^2) \quad  \mbox{for every } 0 \le \alpha < 1.
\end{equation}
\end{theorem}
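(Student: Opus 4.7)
The plan is to deduce Theorem \ref{Asymp_graphene} as a direct combination of the local reduction performed in Subsection \ref{sec:reduction} with the two companion Theorems \ref{Asymp_wcan} and \ref{Asymp_wU-wcan}, which already carry all the hard analytic work.

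First, I would fix the choice of the Bloch function $\psi_+$ that is singled out by Theorem \ref{Asymp_wU-wcan}. For this choice, following Subsection \ref{sec:reduction}, write
\[ w_+(x) = \sum_{i=1}^{M} w_{U_{K_i}}(x) + \widetilde{w}(x), \]
where each $w_{U_{K_i}}$ is the Bloch--Floquet anti-transform of $\widetilde{\chi}_{U_{K_i}}(k)\,\psi_+(k)$ with a smooth, radial cutoff as in \eqref{polynomialchi}, and $\widetilde{w}$ is the anti-transform of the remainder $\widetilde{\psi}(k) = \bigl(1-\sum_{i}\widetilde{\chi}_{U_{K_i}}(k)\bigr)\psi_+(k)$. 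Since the intersections occur only at $K_1,\ldots,K_M$ (Assumption \ref{Assumption main}\eqref{item_i},\eqref{item_iii},\eqref{item_iv}), the function $\widetilde{\psi}$ is $C^{\infty}$-smooth on $\B$, so by choosing the polynomial cutoff \eqref{polynomialchi} sufficiently regular the function $\widetilde{w}$ decays faster than any inverse polynomial; in particular $|X|^{\alpha}\widetilde{w} \in L^2(\R^2)$ for every $\alpha \ge 0$.

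Second, near each intersection point $K_i$, decompose
\[ w_{U_{K_i}} = w\sub{can}^{(i)} + \bigl(w_{U_{K_i}} - w\sub{can}^{(i)}\bigr), \]
where $w\sub{can}^{(i)}$ is the $n$-canonical Wannier function \eqref{wcan} relative to the vorticity $n = n\sub{v}^{(i)}$ of the deformed family $\{P_+^{\mu}(k)\}$ at $K_i$ (which is nonzero by the assumption that $E_s$ and $E_{s-1}$ actually cross, see Assumption \ref{Assumption main}\eqref{item_v} together with Section \ref{sec:n=n}). By Theorem \ref{Asymp_wU-wcan}, since $m \in \{1,2\}$, the weighted $L^2$-norm $\bigl\||X|^{\alpha}(w_{U_{K_i}} - w\sub{can}^{(i)})\bigr\|_{L^2(\R^2)}$ is finite for every $0 \le \alpha < 1$ (the worst case being $m=2$).

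Third, I would control $|X|^{\alpha} w\sub{can}^{(i)}$ using the pointwise bound \eqref{RateOfDecaySpelledOut} of Theorem \ref{Asymp_wcan}: there exist $R, c > 0$ such that $|w\sub{can}^{(i)}(x)| \le c\,|x|^{-2}$ for $|x| \ge R$. Splitting the $L^2$ integral in two:
\[ \int_{|x|\ge R} |x|^{2\alpha}\,|w\sub{can}^{(i)}(x)|^2\,\di x \le c^2 \int_{|x|\ge R} |x|^{2\alpha-4}\,\di x = 2\pi c^2 \int_R^{\infty} r^{2\alpha - 3}\,\di r, \]
which converges if and only if $\alpha < 1$. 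On the complementary region $\{|x| < R\}$, the function $w\sub{can}^{(i)}$ is bounded (it is the Fourier-type integral \eqref{wcan} of an essentially bounded integrand over the compact set $U$), so $|X|^{\alpha} w\sub{can}^{(i)}$ is trivially in $L^2(\{|x|<R\})$. Hence $|X|^{\alpha} w\sub{can}^{(i)} \in L^2(\R^2)$ for all $0 \le \alpha < 1$.

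Combining the three ingredients by the triangle inequality in the weighted norm yields \eqref{RateOfDecayL2}. The only genuinely delicate point is the compatibility between the local $L^2$-estimates supplied by Theorem \ref{Asymp_wU-wcan} and the pointwise estimate of Theorem \ref{Asymp_wcan}: both bounds degenerate precisely at $\alpha = 1$, and the reason they match is the two-dimensional Jacobian, which turns the borderline $|x|^{-2}$ decay into a logarithmically divergent $L^2$-weight at $\alpha = 1$. Once this is recognised, no further obstacle remains; the true analytic substance of the result is already encapsulated in Theorems \ref{Asymp_wcan} and \ref{Asymp_wU-wcan}.
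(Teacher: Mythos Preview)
Your proposal is correct and follows essentially the same route as the paper: reduce to the local contributions $w_{U_{K_i}}$ via Subsection \ref{sec:reduction}, split each as $w\sub{can}^{(i)} + (w_{U_{K_i}}-w\sub{can}^{(i)})$, control the difference by Theorem \ref{Asymp_wU-wcan}, and control the canonical part by the pointwise bound of Theorem \ref{Asymp_wcan} together with the radial integral $\int_R^\infty r^{2\alpha-3}\,\di r$, which converges precisely for $\alpha<1$. Your treatment is in fact slightly more explicit than the paper's in handling the sum over the several intersection points $K_1,\ldots,K_M$ and the smooth remainder $\widetilde{w}$, but the argument is the same.
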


\begin{proof}
Fix $0 \le \alpha < 1$. The $L^2$-norm of the function $|X|^{\alpha} w_+$ can be estimated by
\begin{align}
\left\| |X|^{\alpha} w_+ \right\|_{L^2(\R^2)}^2 & = \int_{\R^2} \di x \, |x|^{2\alpha} |w_+(x)|^2 \le \nonumber \\
& \le \int_{\R^2} \di x \, |x|^{2\alpha} |w_+(x) - w\sub{can}(x)|^2 + \int_{\R^2} \di x \, |x|^{2\alpha} |w\sub{can}(x)|^2. \label{true-can+can}
\end{align}

In order to give a bound to the first integral, we use the result of Theorem \ref{Asymp_wU-wcan}. Firstly we write
\begin{align*}
\int_{\R^2} \di x \, |x|^{2\alpha} |w_+(x) - w\sub{can}(x)|^2 & = \int_{D_1} \di x \, |x|^{2\alpha} |w_+(x) - w\sub{can}(x)|^2 + \\
& \quad + \int_{\R^2 \setminus D_1} \di x \, |x|^{2\alpha} |w_+(x) - w\sub{can}(x)|^2
\end{align*}
where $D_1 \subset \R^2$ is the ball of radius $1$ around the origin. The first term on the right-hand side of the above equality is finite, because the function $w_+ - w\sub{can}$ is in $L^2(\R^2)$ (hence {\it a fortiori} in $L^2(D_1)$) and $|x|^{2\alpha} \le 1$ for $x \in D_1$. For $x \in \R^2 \setminus D_1$ and $\alpha \in [0,
1)$, by Theorem \ref{Asymp_wU-wcan} we conclude that
\[ \int_{\R^2 \setminus D_1} \di x \, |x|^{2\alpha} |w_+(x) - w\sub{can}(x)|^2 \le \left\| |X|^{\alpha} \left( w_+ - w\sub{can} \right) \right\|_{L^2(\R^2)}^2 \]
is finite.

To show that also the second summand in \eqref{true-can+can} is finite, we use instead Theorem \ref{Asymp_wcan}. Write
\[ \int_{\R^2} \di x |x|^{2\alpha} \left|w\sub{can}(x)\right|^2  = \int_{D_R} \di x |x|^{2\alpha} \left|w\sub{can}(x)\right|^2 + \int_{\R^2 \setminus D_R} \di x |x|^{2\alpha} \left|w\sub{can}(x)\right|^2. \]
Again by the fact that the Wannier function $w\sub{can}$ is in $L^2(\R^2)$, it follows that the first integral on the right-hand side is finite. As for the second summand, we use the estimate provided by Equation \eqref{RateOfDecaySpelledOut}; thus we have
\begin{equation} \label{polarintegral}
\int_{\R^2 \setminus D_R} \di x  |x|^{2\alpha} \left| w\sub{can}(x)\right|^2 \le \mathrm{const} \cdot \int_{R}^{\infty} \di |x| \, |x| \, |x|^{2 \alpha} \, |x|^{-4}
\end{equation}
and this last integral is convergent if and only if $\alpha < 1$.
\end{proof}

%%%%%%%% SECTION 4.2.1 %%%%%%%%%%%%%%%%%%%%%%%%%%%%%%%%%%%%%%%%%%%%%%%%%%%%%%%%%%

\subsubsection{\textbf{Proof of Theorem \ref{Asymp_wcan}}} \label{sec:proof_Asymp_wcan}

We proceed to the proof of Theorem \ref{Asymp_wcan}, establishing the rate of decay at infinity of the $n$-canonical Wannier function $w\sub{can}$ (corresponding to the $n$-canonical eigenvector for an eigenvalue crossing). For later convenience, we will prove a slightly more general result, concerning the Wannier function associated to a Bloch function which is obtained from the $n$-canonical eigenvector by multiplication times $q_j$, $j \in \set{1,2}$.

\begin{prop} \label{wnm}
Define
\[ w_{n,p}(x) := \frac{\expo{\iu k_0 \cdot x}}{|\B|^{\half}} \int_{U} \di q \, \expo{\iu q \cdot x} q_j^{p} \widetilde{\chi}(|q|) \phi_n(q, [x]) \]
where $j \in \set{1,2}$, $p \in \set{0,1}$, and $\phi_n(q,[x])$ is as in \eqref{phin}. Then there exist two positive constants $R, c > 0$ such that
\[ \left| w_{n,p}(x) \right| \le \frac{c}{|x|^{p+2}} \quad \text{for } |x| \ge R. \]
\end{prop}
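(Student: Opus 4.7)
The plan is to exploit the explicit form of $\phi_{n,+}$ to isolate a ``singular'' angular factor $\eu^{\iu n \ph}$ (where $\ph$ denotes the polar angle of $q$), and then handle its oscillatory Fourier integral by a dyadic decomposition of $|q|$. From the trigonometric identities $\eu^{\iu n \ph/2}\cos(n\ph/2) = \tfrac{1}{2}(1 + \eu^{\iu n \ph})$ and $\eu^{\iu n \ph/2} \sin(n\ph/2) = \tfrac{1}{2\iu}(\eu^{\iu n \ph} - 1)$, substituting \eqref{n-canonical} into \eqref{phin} yields the decomposition
\[ \phi_n(q, [x]) = A([x]) + B([x])\, \eu^{\iu n \ph}, \]
with $A, B$ bounded $\Gamma$-periodic functions of $[x]$ (note that $u_\pm(k_0, \cdot) \in W^{2,2}(\T^2_Y) \hookrightarrow C^0(\T^2_Y)$ by Sobolev embedding in $d=2$). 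Correspondingly, $w_{n,p}(x) = |\B|^{-\half}\, \eu^{\iu k_0 \cdot x}\bigl(A([x])\, I_1(x) + B([x])\, I_2(x)\bigr)$.

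The first piece $I_1(x) = \int_U \eu^{\iu q \cdot x} q_j^p\, \widetilde{\chi}(|q|)\, \di q$ is the Fourier transform of a $C^\infty$-smooth, compactly-supported function of $q$ (the function $q_j^p\, \widetilde{\chi}(|q|)$ belongs to $C_c^\infty(\R^2)$, because $\widetilde{\chi}$ is identically $1$ in a neighbourhood of $0$ where $|q|$ is not smooth), and therefore decays faster than any polynomial in $|x|$. The main term is
\[ I_2(x) = \int_U \eu^{\iu q \cdot x}\, q_j^p\, \widetilde{\chi}(|q|)\, \eu^{\iu n \ph}\, \di q, \]
whose integrand is $C^\infty$ on $\R^2 \setminus \set{0}$ and positively homogeneous of degree $p$ near the origin, but (for $n \neq 0$) only bounded -- not continuous -- at $q = 0$. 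I would estimate it via a dyadic decomposition: fix a smooth partition of unity $\sum_{k} \eta_k(|q|) \equiv 1$ on $(0, +\infty)$ with $\eta_k$ supported in $\set{|q| \sim 2^{-k}}$, and denote by $I_k(x)$ the contribution of the shell $|q| \sim 2^{-k}$. Two estimates are available: the trivial $L^1$-bound $|I_k(x)| \le C\, 2^{-k(p+2)}$ (from the size $|q|^p \sim 2^{-kp}$ of the homogeneous factor times the area $2^{-2k}$ of the shell), and the integration-by-parts bound obtained by iterating $\eu^{\iu q \cdot x} = -|x|^{-2} \Delta_q \eu^{\iu q \cdot x}$ a total of $N$ times, namely $|I_k(x)| \le C_N\, |x|^{-2N}\, 2^{k(2N-p-2)}$, since each $q$-derivative of $\eu^{\iu n \ph}$, of $\widetilde{\chi}$, or of $\eta_k$ contributes at most a factor $2^k$ on the shell.

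Choosing any $N$ with $2N > p+2$ (permissible because $p \inset{0,1}$) and splitting the sum over $k$ at the threshold $2^k \sim |x|$ -- using the IBP bound for $2^k \le |x|$ and the trivial bound for $2^k > |x|$ -- both resulting geometric series are controlled by their term at the threshold, yielding $|I_2(x)| \le c\,|x|^{-(p+2)}$ and hence the claim. The main obstacle I anticipate is carrying out the IBP rigorously: one must verify that every $q$-derivative of the integrand -- hitting either the singular angular factor $\eu^{\iu n \ph}$ (whose gradient blows up as $|q|^{-1}$) or one of the cutoff functions $\widetilde{\chi}, \eta_k$ -- contributes at most a $2^k$ factor on the shell of scale $2^{-k}$, so that the combined $2N$-fold derivative of the integrand is pointwise bounded by $O(2^{k(2N-p)})$ uniformly in $k$.
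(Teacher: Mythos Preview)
Your argument is correct and complete: the decomposition $\phi_n = A + B\,\eu^{\iu n\ph}$ is right, the $I_1$ piece is Schwartz, and the dyadic splitting of $I_2$ with the two competing bounds sums to $|x|^{-(p+2)}$ exactly as you describe. The derivative bookkeeping you flag as the main obstacle is straightforward, since $\partial^\alpha(\eu^{\iu n\ph})$, $\partial^\alpha\eta_k$, and $\partial^\alpha|q|$ are all homogeneous of the expected orders on each shell. The paper takes a more classical and more explicit route: it rotates so that $q\cdot x = |q|\,|x|\sin\ph$ (up to constants), recognises the angular integral $\int_0^{2\pi}\eu^{\iu(\ell\ph - z\sin\ph)}\,\di\ph$ as $2\pi J_\ell(z)$, and then invokes the known large-$t$ asymptotics of $\int_0^t z^\mu J_\nu(z)\,\di z$ from tables to extract not just the bound $|x|^{-(p+2)}$ but the exact leading coefficient (a ratio of Gamma functions). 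What the paper's approach buys is a sharper statement---an actual asymptotic expansion rather than an upper bound---which it exploits in the remark following the proof; what your approach buys is robustness (it would work unchanged for any bounded angular factor, not just $\eu^{\iu n\ph}$) and freedom from special-function identities. For the Proposition as stated, your method suffices.
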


Notice that the exponent in the power-law asymptotics for $w_{n,p}$ is \emph{independent} on $n$ (but the prefactor $c$ will depend on it, as will be apparent from the proof). The statement of Theorem \ref{Asymp_wcan} regarding the decay rate of $w\sub{can}$ is a particular case of the above, namely when $p = 0$.

\begin{proof}
For notational simplicity, we set $j = 1$, the case $j=2$ being clearly analogous.  Without loss of generality, we also assume that $n > 0$. We choose Cartesian coordinates in $\mathbb{R}^2$ such that $x = (0, |x|)$, and consequently $q \cdot x = - 2 \pi |q| \, |x| \sin \ph$. Since the $n$-canonical eigenfunction is
\[ \phi_{n, +}(q) = \eu^{\iu n \ph/2} \begin{pmatrix} \cos(n \ph/2) \\ \sin(n \ph/2) \end{pmatrix}, \]
we can write
\begin{equation} \label{wcossin}
w_{n, p}(x) = \frac{\eu^{\iu k_0 \cdot x}}{|\B|^{\half}} \left[ w_{\cos,p}(x) \, u_{+}(k_0,[x]) + w_{\sin,p}(x) \, u_{-}(k_0,[x]) \right]
\end{equation}
where
\begin{align*}
w_{\cos,p}(x) & := \int_{0}^{r} \di |q| \, |q|^{p+1} \widetilde{\chi}(|q|) \int_{0}^{2 \pi} \di \ph \, \eu^{- \iu 2 \pi |q| \, |x| \sin \ph} \cos(\ph)^p \eu^{\iu n \ph/2} \cos\left(\frac{n}{2}\ph\right), \\
w_{\sin,p}(x) & := \int_{0}^{r} \di |q| \, |q|^{p+1} \widetilde{\chi}(|q|) \int_{0}^{2 \pi} \di \ph \, \eu^{- \iu 2 \pi |q| \, |x| \sin \ph} \cos(\ph)^p \eu^{\iu n \ph/2} \sin\left(\frac{n}{2}\ph\right).
\end{align*}
The function $x \mapsto u_{\pm}(k_0,[x])$ is $\Gamma$-periodic, and as a consequence of its definition \eqref{BlochFunction} it is in the Sobolev space $W^{2,2}(\T^2_{Y})$, hence continuous; consequently, it is a bounded function. Thus, the upper bound on $\left| w_{n, p} \right|$ is completely determined by that on the functions $w_{\cos,p}$ and $w_{\sin,p}$.

Notice that for $p \in \set{0,1}$
\begin{align*}
\cos(\ph)^p \eu^{\iu n\ph/2} \cos\left(\frac{n}{2}\ph\right) & = \frac{1}{4} \left( \eu^{\iu (n+p) \ph} + \eu^{\iu (n-p) \ph} + \eu^{\iu p \ph} + \eu^{- \iu p \ph} \right), \\
\cos(\ph)^p \eu^{\iu n\ph/2} \sin\left(\frac{n}{2}\ph\right) & = \frac{1}{4 \iu} \left( \eu^{\iu (n+p) \ph} + \eu^{\iu (n-p) \ph} - \eu^{\iu p \ph} - \eu^{- \iu p \ph} \right),
\end{align*}
so that we can write
\begin{align*}
w_{\cos,p}(x) & = \frac{1}{4} \left( I_{n+p,p}(|x|) + I_{n-p,p}(|x|) + I_{p,p}(|x|) + I_{-p,p}(|x|) \right), \\
w_{\sin,p}(x) & = \frac{1}{4 \iu} \left( I_{n+p,p}(|x|) + I_{n-p,p}(|x|) - I_{p,p}(|x|) - I_{-p,p}(|x|) \right),
\end{align*}
where
\begin{align*}
I_{\ell,p}(|x|) & := \int_{0}^{r} \di |q| \, |q|^{p+1} \widetilde{\chi}(|q|) \int_{0}^{2 \pi} \di \ph \, \eu^{\iu \left( \ell \ph - 2 \pi |q| \, |x| \sin \ph \right)} = \\
& = \frac{1}{(2 \pi |x|)^{p+2}} \int_{0}^{2 \pi r |x|} \di z \, z^{p+1} \widetilde{\chi}\left(\frac{z}{2 \pi |x|}\right) \int_{0}^{2 \pi} \di \ph \, \eu^{\iu \left( \ell \ph - z \sin \ph \right)},
\end{align*}
with the change of variables $z = 2 \pi |q| \, |x|$.

Now, by definition \cite{Luke}
\begin{equation} \label{BesselDefinition}
\frac{1}{2\pi} \int_{0}^{2 \pi} \di \ph \, \eu^{\iu \left( \ell \ph - z \sin \ph \right)} = \frac{1}{2\pi} \int_{0}^{2 \pi} \di \ph \, \cos \left( \ell \ph - z \sin \ph \right) =: J_{\ell}(z)
\end{equation}
is the Bessel function of order $\ell$: thus, the functions $w_{\cos,p}$ and $w_{\sin,p}$ are combinations of integrals of the Bessel functions, which explicitly look like
\[ I_{\ell,p}(|x|) = \frac{1}{(2 \pi)^{p+1}} \,  |x|^{-p-2} \int_{0}^{2 \pi r |x|} \di z \, z^{p+1} \widetilde{\chi}\left(\frac{z}{2 \pi |x|}\right) J_{\ell}(z). \]

In order to evaluate these integrals and establish their asymptotic properties, we split
\[ I_{\ell,p}(|x|) = I^{(1)}_{\ell,p}(|x|) + I^{(2)}_{\ell,p}(|x|), \]
where
\begin{align*}
I^{(1)}_{\ell,p}(|x|) & := \frac{1}{(2 \pi)^{p+1}} \,  |x|^{-p-2} \int_{0}^{2 \pi \rho |x|} \di z \, z^{p+1} J_{\ell}(z), \\
I^{(2)}_{\ell,p}(|x|) & := \frac{1}{(2 \pi)^{p+1}} \,  |x|^{-p-2} \int_{2 \pi \rho |x|}^{2 \pi r |x|} \di z \, z^{p+1}  \widetilde{\chi}\left(\frac{z}{2 \pi |x|}\right) J_{\ell}(z).
\end{align*}
Notice that the function $\widetilde{\chi}$ does not appear in the integral $I^{(1)}_{\ell, p}$, since it is constantly equal to $1$ for $0 \le |q| < \rho$ (compare \eqref{polynomialchi}).

We now use the fact \cite[Sec. 2.5, Eqn. (6)]{Luke} that for large real $t \to \infty$
\[ \int_{0}^{t} \di z \, z^{\mu} \, J_{\nu}(z) = \dfrac{2^\mu \Gamma\left(\frac{\nu + \mu + 1}{2}\right)}{\Gamma\left(\frac{\nu - \mu + 1}{2}\right)} - \left( \frac{2}{\pi t} \right)^{1/2} t^{\mu} h(t), \]
whenever $\Re(\nu + \mu)> -1$, where $h(t) = f(t) \cos \theta(t) + g(t) \sin \theta(t)$ with
\[ \theta(t) = t - \nu \frac{\pi}{2} + \frac{\pi}{4}, \quad f(t) = 1 + \Or(t^{-2}), \quad g(t) = \Or(t^{-1}). \]
This allows to immediately compute the asymptotic rate of $I^{(1)}_{\ell,p}(|x|)$:
\begin{align*}
I^{(1)}_{\ell,p}(|x|) & = \dfrac{2^{p+1} \Gamma\left(\frac{\ell + p + 2}{2}\right)}{\Gamma\left(\frac{\ell - p}{2}\right)} \, \frac{1}{(2 \pi)^{p+1}} \,  |x|^{-p-2} - \\
& \quad - \frac{1}{(2 \pi)^{p+1}} \,  |x|^{-p-2} \, \left( \frac{2}{\pi \cdot 2 \pi \rho |x|} \right)^{1/2} (2 \pi \rho |x|)^{p+1} h(2 \pi \rho |x|) = \\
& = \dfrac{\Gamma\left(\frac{\ell + p + 2}{2}\right)}{\Gamma\left(\frac{\ell - p}{2}\right) \pi^{p+1}} \, |x|^{-p-2} - \frac{\rho^{p+(1/2)}}{\pi} \, |x|^{-3/2} h(2 \pi \rho |x|).
\end{align*}

We now compute also the asymptotics of $I^{(2)}_{\ell,p}(|x|)$ for large $|x|$. We use the explicit polynomial form \eqref{polynomialchi} for the cutoff function $\widetilde{\chi}$ in the interval $(\rho,r)$, thus obtaining
\begin{align*}
I & ^{(2)}_{\ell,p}(|x|) = \frac{1}{(2 \pi)^{p+1}} \,  |x|^{-p-2} \left( \int_{0}^{2 \pi r |x|} - \int_{0}^{2 \pi \rho |x|} \right) \di z \, z^{p+1}  \left( \sum_{i=0}^{N} \alpha_i \frac{z^i}{(2 \pi |x|)^{i}} \right) J_{\ell}(z) = \\
& = 2 \pi \sum_{i=0}^{N} \frac{\alpha_i}{(2 \pi |x|)^{p+i+2}} \left( \int_{0}^{2 \pi r |x|} - \int_{0}^{2 \pi \rho |x|} \right) \di z \, z^{p+i+1}  J_{\ell}(z) = \\
& = 2 \pi \sum_{i=0}^{N} \frac{\alpha_i}{(2 \pi |x|)^{p+i+2}} \left[ - \left( \frac{2}{\pi \cdot 2 \pi r |x|} \right)^{1/2} (2 \pi r |x|)^{p+i+1} h(2 \pi r |x|) - (r \leftrightarrow \rho) \right] = \\
& = \frac{r^{p+(1/2)}}{\pi} \left( - \sum_{i=0}^{N} \alpha_i \, r^i \right) |x|^{-3/2} h(2 \pi r |x|) + \frac{\rho^{p+(1/2)}}{\pi} \left( \sum_{i=0}^{N} \alpha_i \, \rho^i \right) |x|^{-3/2} h(2 \pi \rho |x|) = \\
& = \frac{\rho^{p+(1/2)}}{\pi} |x|^{-3/2} h(2 \pi \rho |x|)
\end{align*}
where in the last equality we used the fact that $\widetilde{\chi}(\rho)=1$ and $\widetilde{\chi}(r)=0$.

From these computations, we deduce that%
\footnote{The prefactor can be computed using the factorial relation $\Gamma(z+1)=z\Gamma(z)$: one obtains
\[ \dfrac{\Gamma\left(\frac{\ell + p + 2}{2}\right)}{\Gamma\left(\frac{\ell - p}{2}\right) \pi^{p+1}} = \begin{cases} \dfrac{\ell}{2 \pi} & \text{if } p=0, \\[10pt] \dfrac{\ell^2-1}{4 \pi^2} & \text{if } p=1. \end{cases} \]}%
\[ I_{\ell,p}(|x|) = \dfrac{\Gamma\left(\frac{\ell + p + 2}{2}\right)}{\Gamma\left(\frac{\ell - p}{2}\right) \pi^{p+1}} \, |x|^{-p-2} + \Or(|x|^{-p-j}) \quad \text{for all } j \in \N^\times. \]

We conclude that the upper bound on the rate of decay at infinity of both $w_{\cos,p}$ and $w_{\sin,p}$ (and hence that of the Wannier functions $w_{n, p}$) is given by a multiple of $|x|^{-p-2}$, for both $p = 0$ and $p=1$ and independently of $n \in \Z$.
\end{proof}

Notice that with the above proof we actually have a stronger control on the asymptotic behaviour of the function $w_{n,p}$ than what stated in Proposition \ref{wnm}: Indeed, from \eqref{wcossin} we see that $w_{n,p}$ is a combination of functions which behave asymptotically as $|x|^{-p-2}$ (up to arbitrarily higher order terms), times a $\Gamma$-periodic function of $x$.

%%%%%%%% SECTION 4.3 %%%%%%%%%%%%%%%%%%%%%%%%%%%%%%%%%%%%%%%%%%%%%%%%%%%%%%%%%%

\subsection{Asymptotic decrease of the true Wannier function}

In this Subsection, we will prove Theorem \ref{Asymp_wU-wcan}. The importance of this result lies in the fact that we can deduce from it an upper bound on the decay rate at infinity of the true Wannier function in the continuous model of (monolayer and bilayer) graphene, as in Theorem \ref{Asymp_graphene}. Indeed, what we will show in the following is that all the singularity of the Bloch function $u_+(k) \in \Hf$ at the intersection point $k_0$ is encoded in its components along the vectors $u_+(k_0)$ and $u_-(k_0)$ in $\Hf$. This is essentially a consequence of Assumption \ref{Assumption main}\eqref{item_iii}, namely of the fact that all
Bloch bands not involved in the intersection at $k_0$ are well separated from the intersecting ones.

%%%%%%%% SECTION 4.3.1 %%%%%%%%%%%%%%%%%%%%%%%%%%%%%%%%%%%%%%%%%%%%%%%%%%%%%%%%%%

\subsubsection{\textbf{True Bloch bundle vs. \Composite Bloch bundle}} \label{sec:TrueVSStratified}

Let us briefly summarise the geometric results of Section \ref{sec:universality}, in order to specify them to the case under study; we will use all the notation of that Section. Assume that $r > 0$ is so small that the ball of radius $2r$ is all contained in $R$. Denote as above by $\set{P_+(k) := \left| u_+(k) \right\rangle \left\langle u_+(k) \right|}_{k \in R \setminus{k_0}}$ the family of projectors corresponding to the Bloch band $E_+$; we use it as our Datum \ref{DatumProj}, so we smoothen it via a parameter $\mu \in [-\mu_0, \mu_0]$ and calculate the vorticity $n \in \Z$ using the eigenprojectors of the deformation $H^{\mu}(k)$ appearing in Assumption \ref{Assumption main}\eqref{item_v}. Theorem \ref{Th:criterion} then establishes the existence of a vector bundle $\widehat{\calL}$ on $\widehat{B}$ such that:
\begin{itemize}
 \item outside of a cylinder $C'$ of radius $r + r' < 2r$ centered at the singular point $(k=k_0, \mu = 0)$, the bundle $\widehat{\calL}$ \emph{coincides} with the bundle $\calL_+$, which is associated to the smoothed family of projectors $\set{P^{\mu}_+(k)}$;
 \item inside a smaller pointed cylinder $\dot{C}_1$ of radius $r_1 := r - r' > 0$ centered at the singular point $(k=k_0, \mu = 0)$, the bundle $\widehat{\calL}$ \emph{coincides} with the $n$-canonical stratified Bloch bundle $\PB_n = \PB_{n,+}$;
 \item inside the tubular neighbourhood $T$ of width $r'$ of the cylindrical surface $\mathcal{C} = \partial C$, the bundle $\widehat{\calL}$ is constructed extending the isomorphism of Hermitian bundles
 \begin{equation} \label{isoS2} \calL_+ \big|_{\mathcal{C}} \simeq \PB_n \big|_{\mathcal{C}} \end{equation}
 (see Lemma \ref{ch1isom}).
\end{itemize}

On the other hand, as $\dot C$ is a deformation retract of $\mathcal{C}$, the isomorphism \eqref{isoS2} extends to an isomorphism
\begin{equation} \label{isodotC} \calL_+ \big|_{\dot C} \simeq \PB_n \big|_{\dot C} \end{equation}
which, together with the first items of the list above, allows us to conclude that \emph{the bundle $\widehat{\calL}$ constructed via Theorem \ref{Th:criterion} is isomorphic to the Bloch bundle $\calL_+$ as bundles on the whole $\widehat{B}$}.

We now want to translate the information contained in the latter isomorphism in terms of the associated families of projectors, and of the ``sections'' of these bundles (\ie Bloch functions). In order to do so, we first show that the vorticity $n$ of the family $\set{P_+(k)}$ equals $m \in \set{1,2}$ as in Assumption \ref{Assumption main}\eqref{item_ii}, so that in particular it is non-zero. Indeed, consider the deformation $H^{\mu}(k)$ of the fibre Hamiltonian $H(k) = (- \iu \nabla_y + k)^2 + V_{\Gamma}(y)$, as in Assumption \ref{Assumption main}\eqref{item_v}, and a circle $\Lambda_*$ in the complex plane enclosing only the eigenvalues $\set{E_+^{\mu}(k), E_-^{\mu}(k)}$ for $|k-k_0|$ and $\mu$ sufficiently small. Then the Riesz integral
\[ P_*^{\mu}(k) := \frac{\iu}{2\pi} \oint_{\Lambda_*} \di z \, \left( H^{\mu}(k) - z \1 \right)^{-1} \]
defines a smooth family of projectors over $C = U \times(-\mu_0, \mu_0)$ (see \cite[Prop. 2.1]{Panati Pisante} for a detailed proof), such that $P_*^{\mu = 0}(k) = P_*(k)$. By using again the Riesz formula with a different contour $\Lambda^{\mu}(k)$ enclosing only the eigenvalue $E_+^{\mu}(k)$ (compare the proof of Proposition \ref{DerivatePtrue} below), we can realise $P_+^{\mu}(k)$ as a subprojector of $P_*^{\mu}(k)$. Denoting $\Pi := P_*(k_0)$, we then have
\begin{align*}
\norm{\Pi P_+^{\mu}(k) \Pi & - \Pi P_{m,+}^{\mu}(k) \Pi}_{\mathcal{B}(\Hf)} = \norm{P_+^{\mu}(k) - P_{m,+}^{\mu}(k) }_{\mathcal{B}(\Ran \Pi)} = \\
& = \left\| \frac{\iu}{2\pi} \oint_{\Lambda^{\mu}(k)} \di z \, \left[ \left( H^{\mu}(k) - z \1 \right)^{-1} - \left( H_m^{\mu}(k) - z \1 \right)^{-1} \right] \right\|_{\mathcal{B}(\Ran \Pi)} = \\
& \le \frac{1}{2 \pi} \, |\Lambda^{\mu}(k)| \, \left\| \left( H^{\mu}(k) - z \1 \right)^{-1} - \left( H_m^{\mu}(k) - z \1 \right)^{-1} \right\|_{\mathcal{B}(\Ran \Pi)}
\end{align*}
where $H_m(k)$ is as in \eqref{Hnmu} (with $n=m$ and $e(q) = |q|^m$) and $P_{m,+}^{\mu}(q)$ is its eigenprojector. The norm on the right-hand side of the above inequality is uniformly bounded by Assumption \ref{Assumption main}\eqref{item_v}, say by $\delta > 0$, and the length of the circle $\Lambda^{\mu}(k)$ can be made to shrink as $|q|^m$ when $|q| \to 0$. Thus we deduce that
\begin{equation} \label{stima}
\norm{P_+^{\mu}(k) - P_{n,+}^{\mu}(k) }_{\mathcal{B}(\Ran \Pi)} \le \mathrm{const} |q|^m \delta,
\end{equation}
and the right-hand side of the above inequality can be made smaller than $1$. By Lemma \ref{nvwelldefined}, we then have
\[ \nv(P_+) = \nv(P_{m,+}) = m \in \set{1,2} \]
by Assumption \ref{Assumption main}\eqref{item_ii}.

The above estimate gives the existence of a Kato-Nagy unitary $V^{\mu}(k)$, as in \eqref{W definition}, such that
\[ \Pi \, P^{\mu}_+(k) \, \Pi = V^{\mu}(k)\, P^{\mu}_{m,+}(k) \, V^{\mu}(k)^{-1}. \]
This can be restated in terms of the existence of a Bloch function $u_+^\mu$ (\ie an eigenfunction of $P^{\mu}_+$) such that
\begin{equation} \label{Piu=Uphi}
\Pi u_+^{\mu}(k) = V^{\mu}(k) \phi^{\mu}_n(k) \quad \text{for } (k,\mu) \in \dot{C},
\end{equation}
where
\[ \phi^{\mu}_n(q,[x]) := \phi^{\mu}_{m,+}(q)_1 \, u_+(k_0, [x]) + \phi^{\mu}_{m,+}(q)_2 \, u_-(k_0, [x]) \]
with $\phi^{\mu}_{m,+}(q) = \left( \phi^{\mu}_{m,+}(q)_1, \phi^{\mu}_{m,+}(q)_2 \right)$ as in \eqref{canonic_avoided}. Moreover, combining the definition \eqref{W definition} of $V^{\mu}(k)$ and the estimate \eqref{stima}, one easily checks that $V^{\mu}(k)$ actually extends smoothly to the whole cylinder $C$, in particular at $(k = k_0, \mu = 0)$.

We restrict our attention to the slice $\mu = 0$. Denote by $w\sub{eff}$ the Wannier function associated via \eqref{Wannier} to $U \ni k \mapsto \Pi \, u_+(k) =: u\sub{eff}(k)$. Arguing as in Section \ref{sec:reduction}, one deduces that the asymptotic decay of $w\sub{eff}$ is determined by the integration of $u\sub{eff}$ on $U$ in \eqref{Wannier}. Then the equality \eqref{Piu=Uphi} implies that
\begin{align*}
|\B|^{\half} w\sub{eff}(x) & \asymp \int_{U} \di q \, \expo{\iu q \cdot x} \, \widetilde{\chi}(|q|) \Pi u_+(k_0 + q, [x]) = \\
& = \sum_{b \in \set{+,-}} \left[ \int_{U} \di q \, \expo{\iu q \cdot x} \left(
\sum_{a \in \{1,2\}} V(k_0 + q)_{a,b} \, \widetilde{\chi}(|q|) \phi_{m,+}(q, [x])_a \right) \right] u_{b}(k_0,[x]) .
\end{align*}
As was already noticed, the functions $u_{\pm}(k_0,\cdot)$ do not contribute to the decay at infinity of $w\sub{eff}$. On the other hand, by Taylor expansion at $k_0$ we can write
\begin{equation} \label{Taylor}
V(k_0+q)_{a,b} = V(k_0)_{a,b} + \sum_{j=1}^{2} q_j \, \frac{\partial V}{\partial q_j}(k_0)_{a,b} + \sum_{j,\ell=1}^{2} q_j \, q_{\ell} R_{j,\ell}(q)
\end{equation}
where the remainder $R_{j,\ell}$ is $C^{\infty}$-smooth on $U$. Consequently, we get
\begin{equation} \label{remainder}
\begin{aligned}
\int_{U} \di q \, & \expo{\iu q \cdot x} V(k_0 + q)_{a,b} \, \widetilde{\chi}(|q|) \phi_{m,+}(q, [x])_a = V(k_0)_{a,b} \left( \int_{U} \di q \, \expo{\iu q \cdot x}  \, \widetilde{\chi}(|q|) \phi_{m,+}(q, [x])_a \right) + \\
& \qquad + \sum_{j = 1}^{2} \frac{\partial V}{\partial q_j}(k_0)_{a,b} \left( \int_{U} \di q \, \expo{\iu q \cdot x}  \, q_j \widetilde{\chi}(|q|) \phi_{m,+}(q, [x])_a \right) + \\
& \qquad + \sum_{j,\ell = 1}^{2} \int_{U} \di q \, \expo{\iu q \cdot x}  \, q_j q_l R_{j,\ell}(q) \widetilde{\chi}(|q|) \phi_{m,+}(q, [x])_a.
\end{aligned}
\end{equation}

The terms in brackets in the first and second summand have already been estimated in Proposition \ref{wnm}, and so are known to produce the rate of decay at infinity of $|x|^{-2}$ and $|x|^{-3}$, respectively. As for the third summand, we preliminarly notice that the function
\[ S(q) :=  q_j q_l R_{j,\ell}(q) \widetilde{\chi}(|q|) \phi_{m,+}(q, [x])_a \]
is in $W^{2,\infty}(U)$. Indeed, the map $q \mapsto R_{j,\ell}(q) \widetilde{\chi}(|q|)$ is smooth and bounded, while for all $r,s \in \set{1,2}$
\[ \frac{\partial^2}{\partial q_r \partial q_s} \left(q_j q_\ell \phi_{m,+}(q, [x])_a \right) \]
is in $L^{\infty}(U)$, since $\phi_{m,+}(q, [x])_a$ is homogeneous of order zero in $q$.

\noindent We can now proceed to an integration by parts: observe in fact that
\begin{align*}
- x_r \, x_s \, \int_{U} \di q \, \expo{\iu q \cdot x}  \, S(q) & = \int_{U} \di q \, \frac{\partial^2}{\partial q_r \partial q_s} \left(\expo{\iu q \cdot x}\right)  \, S(q) = \\
& = \int_{U} \di q \, \expo{\iu q \cdot x}  \, \frac{\partial^2 S}{\partial q_r \partial q_s}(q).
\end{align*}
The boundary terms vanish because $\widetilde{\chi}$ is zero on $\partial U$. By what we have shown above, we obtain that
\[ \left| x_r \, x_s \, \int_{U} \di q \, \expo{\iu q \cdot x}  \, S(q) \right| \le |U| \left\| \frac{\partial^2 S}{\partial q_r \partial q_s} \right\|_{\infty} < + \infty, \]
so that the third summand in \eqref{remainder} decays faster than $|x|^{-2}$ at infinity. We conclude that the asymptotic behaviour of $w\sub{eff}$ is that of $|x|^{-2}$.

\begin{rmk}[Invariance of the decay rate of Wannier functions] \label{AppA:invariance}
Given Theorem \ref{Asymp_wU-wcan}, whose proof will be completed in the next Subsection, the above argument shows also that the decay rate of the Wannier function $w_+$ is not affected by a change of phase in the corresponding Bloch function $\psi_+$, provided the phase is at least of class $C^2$. Indeed, we already know that the decay rate of $w_+$ depends only on the local behaviour of $\psi_+$ around the intersection point. The exchange of $\psi_+(k)$ with $\expo{\iu \theta(k)} \psi_+(k)$, for $k \in U$, is then equivalent to the exchange of $\phi_{m,+}(k)$ with $\expo{\iu \theta(k)} \phi_{m,+}(k)$. This exchange is implemented by the action of the unitary diagonal operator $V(k) = \expo{\iu \theta(k)} \1 \in U(2)$: if the dependence of $\theta(k)$ on $k \in U$ is $C^2$, then it is possible to perform a Taylor expansion as in \eqref{Taylor}, so that the above argument applies.
\end{rmk}

%%%%%%%% SECTION 4.3.2 %%%%%%%%%%%%%%%%%%%%%%%%%%%%%%%%%%%%%%%%%%%%%%%%%%%%%%%%%%

\subsubsection{\textbf{Proof of Theorem \ref{Asymp_wU-wcan}}} \label{sec:proof_Asymp_wU-wcan}

Finally, we proceed to show that the decay of $w\sub{eff}$, the Wannier function corresponding to $\Pi u_+(k)$, and of $w_U$, the Wannier function corresponding to the restriction to $U$ of $u_+(k)$, are the same. This is achieved by showing that their difference, which is the Wannier function associated to $(\1 - \Pi) u_+(k) =: u\sub{rem}(k)$, decays sufficiently fast at infinity (\ie at least faster than $|x|^{-2}$); this in turn will be proved by showing that $u\sub{rem}(k)$ is sufficiently smooth, say of class $W^{s,2}$ for some positive $s$. Indeed, in view of the results relating regularity of Bloch functions and asymptotic properties of the corresponding Wannier functions (compare \cite[Equation (2.5)]{Panati Pisante}), we have that
\begin{equation}\label{Decay rate}
\text{if } u\sub{rem} \in W^{s,2}(U;\Hf) \text{ then } |X|^s \left( w_U - w\sub{eff} \right) \in L^2(\R^2).
\end{equation}

Before establishing the Sobolev regularity of $u\sub{rem}$, we need to prove some estimates on the derivatives of the projector $P_+(k)$, and correspondingly on those of the Bloch function $u_+(k)$.

\begin{prop} \label{DerivatePtrue}
Let $\set{P_+(k)}_{k \in \dot U}$ be the family of eigenprojectors for the Hamiltonian $H_\Gamma = - \Delta + V_\Gamma$ as in Assumption \ref{Assumption main}. Let $m \in \set{1,2}$ be as in \eqref{power intersection}. Then, for all choices of multi-indices $I \in \set{1,2}^{N}$, $N \in \N$, there exists a constant $C_N > 0$ such that
\begin{equation} \label{Ptrue_estimates}
 \left\| \partial^N_{I} P_+(k) \right\|_{\mathcal{B}(\Hf)} \le \frac{C_N}{|q|^{Nm}} \qquad \text{for all } k = k_0 + q, \quad 0 < |q| < r.
\end{equation}
\end{prop}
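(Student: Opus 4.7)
The plan is to apply the Riesz projector formula and control the resolvent near the eigenvalue crossing by exploiting the $|q|^{m}$-gap provided by Assumption \ref{Assumption main}\eqref{item_ii}.

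Fix $k^{*} = k_0 + q^{*}$ with $0 < |q^{*}| < r$. I would let $\Lambda^{*} \subset \C$ be the circle of radius $c_0 |q^{*}|^{m}$ centered at $E_{+}(k^{*})$, for a small $c_0 > 0$. By \eqref{power intersection}, $E_{+}(k^{*}) - E_{-}(k^{*}) = (v_{+} + v_{-})|q^{*}|^{m} + \Or(|q^{*}|^{m+1})$, so for $c_0$ small enough $\Lambda^{*}$ encloses only $E_{+}(k^{*})$; invoking Assumption \ref{Assumption main}\eqref{item_iii} and the continuity of the Bloch bands, the contour still separates $E_{+}(k)$ from the rest of the spectrum for all $k$ in a neighborhood of $k^{*}$, with a uniform gap estimate $\mathrm{dist}(\Lambda^{*}, \sigma(H(k))) \ge (c_0/2)\,|q^{*}|^{m}$. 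This would give the Riesz representation with $k$-independent contour,
\[ P_{+}(k) = \frac{\iu}{2\pi} \oint_{\Lambda^{*}} \di z \, \bigl(H(k) - z\bigr)^{-1}, \]
ready to be differentiated freely under the integral sign.

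Next, I would iterate the resolvent identity $\partial_{j}(H-z)^{-1} = -(H-z)^{-1}(\partial_{j} H)(H-z)^{-1}$ together with the Leibniz rule, so that $\partial^{N}_{I} P_{+}(k^{*})$ becomes a finite sum of terms of the form
\[ \oint_{\Lambda^{*}} \di z \, R \cdot (\partial^{\beta_1} H) \cdot R \cdots R \cdot (\partial^{\beta_L} H) \cdot R, \qquad R = (H(k^{*}) - z)^{-1}, \]
where $1 \le L \le N$ and the multi-indices $\beta_1, \ldots, \beta_L$ partition $I$ with each $|\beta_i| \ge 1$. Two ingredients would drive the estimate of each term. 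First, the gap estimate above yields $\|R(z)\|_{\mathcal{B}(\Hf)} \le (c_0\, |q^{*}|^{m})^{-1}$ for $z \in \Lambda^{*}$. Second, every multi-derivative $\partial^{\beta} H(k^{*})$ is relatively $H(k^{*})$-bounded with uniform constants: indeed $\partial_{j} H(k) = 2(-\iu\partial_{y_j} + k_j)$ and $\partial_{j}\partial_{\ell} H(k) = 2\delta_{j\ell}\1$, while higher derivatives vanish, and the hypothesis that $V_{\Gamma}$ is $\Delta$-bounded with relative bound zero propagates this to relative $H$-boundedness. Since $|z|$ remains bounded on $\Lambda^{*}$, this would yield
\[ \|(\partial^{\beta} H(k^{*})) R(z)\|_{\mathcal{B}(\Hf)} \le C_\beta \bigl(1 + (1+|z|)\,\|R(z)\|\bigr) \le C_\beta'\, |q^{*}|^{-m}. \]

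Combining the length estimate $|\Lambda^{*}| = 2\pi c_0\, |q^{*}|^{m}$ with the two bounds above, each term in the Leibniz expansion is controlled by
\[ |\Lambda^{*}| \cdot \|R\| \cdot \prod_{i=1}^{L} \|(\partial^{\beta_i} H) R\| \le \mathrm{const} \cdot |q^{*}|^{m} \cdot |q^{*}|^{-m} \cdot |q^{*}|^{-Lm} \le \mathrm{const} \cdot |q^{*}|^{-Nm}, \]
the worst case being $L = N$ (every $|\beta_i| = 1$). Summing the finitely many such terms would then yield the claim with a constant $C_N$ independent of $k^{*}$. The main obstacle I expect is the careful bookkeeping of the Leibniz expansion and, more subtly, ensuring that the relative-boundedness constants for $\partial^{\beta} H$ are uniform in $k^{*}$; the latter should follow from the compactness of the region of interest and the explicit form of $H(k)$.
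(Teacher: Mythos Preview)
Your proposal is correct and follows essentially the same route as the paper: Riesz formula with a contour of radius $\sim |q|^{m}$, the resolvent bound $\|R(z)\| \lesssim |q|^{-m}$, and the relative $H(k)$-boundedness of $\partial^{\beta} H(k)$ to control $\|(\partial^{\beta} H)R\| \lesssim |q|^{-m}$, combined via the Leibniz/resolvent-identity expansion. The only cosmetic differences are that the paper centers $\Lambda$ at $E_0$ rather than at $E_{+}(k^{*})$ and spells out only the cases $N \le 2$ (with the relative-boundedness step isolated as a separate lemma), while you sketch the general $N$ directly; your anticipated ``main obstacle'' (uniformity in $k$ of the relative-boundedness constants) is exactly what the paper's lemma handles, and your suggested resolution via the explicit form of $H(k)$ and compactness is the right one.
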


\begin{proof}
We will explicitly prove the validity of estimates of the form \eqref{Ptrue_estimates} for $N \le 2$, as these are the only cases which will be needed later in the proof of Theorem \ref{Asymp_wU-wcan}. Using similar techniques, one can prove the result for arbitrary $N$.

The projector $P_+(k)$ can be computed by means of the Riesz integral formula, namely
\[ P_+(k) = \frac{\iu}{2 \pi} \oint_{\Lambda(k)} \di z \, (H(k) - z \1)^{-1}, \]
where $H(k) = (- \iu \nabla_y + k)^2 + V_{\Gamma}$ is the fibre Hamiltonian \eqref{H(k)}, and $\Lambda(k)$ is a circle in the complex plane, enclosing only the eigenvalue $E_{+}(k)$. We choose $\Lambda(k)$ with center on the real axis, passing through $E_0 = E_+(k_0)$ and having diameter $d(q) = 2 v_+ |q|^{m}$, where $v_+$ and $m$ are the constants appearing in Assumption \ref{Assumption main}\eqref{item_ii}; in particular, the length of the circle $\Lambda(k)$ is proportional to $|q|^m$. Hereafter we assume, without loss of generality, that $E_0 =0$ and that $r$ is so small that $2 v_+ r^m < (g/3)$, where $g$ is as in Assumption \ref{Assumption main}\eqref{item_iii}.

Recall that $\kappa \mapsto H(\kappa)$, $\kappa \in \C^2$, defines an analytic family in the sense of Kato (see Section \ref{sec:BlochHamiltonians}). In particular, in view of \cite[Thm. VI.4]{Reed-SimonI} one can compute the derivatives of $H(k)$ and of its spectral projection using either weak or strong limits with the same result. As a consequence, one obtains
\begin{equation} \label{Riesz}
 \partial_j P_+(k) = - \frac{\iu}{2 \pi} \oint_{\Lambda(k)} \di z \, (H(k) - z \1)^{-1} \partial_j H(k) (H(k) - z \1)^{-1}.
\end{equation}
Notice that the dependence of the contour of integration on $k$ does not contribute to the above derivative, because the region contained between two close circles $\Lambda(k)$ and $\Lambda(k+h)$ does not contain any point in the spectrum of $H(k)$.

\begin{figure}[ht]
\centerline{
\begin{xy}
(0,-25); (0,25) **\dir{-} ?>*\dir{>};
(-30,0); (100,0) **\dir{-} ?>*\dir{>};
(0,0)*{\bullet}; (20,0)*{\bullet}, {\ellipse(,){-}};
(-3,-3)*{E_0}; (19,-3)*{E_+(k)}; (42,13)*{\Lambda(k)};
(-19,0)*{\bullet}; (-19,-3)*{E_-(k)};
(90,0)*{\bullet}; (90,-3)*{E_{s+1}(k)};
\end{xy} }
\caption{The integration countour $\Lambda(k)$}
\label{fig:Lambda}
\end{figure}
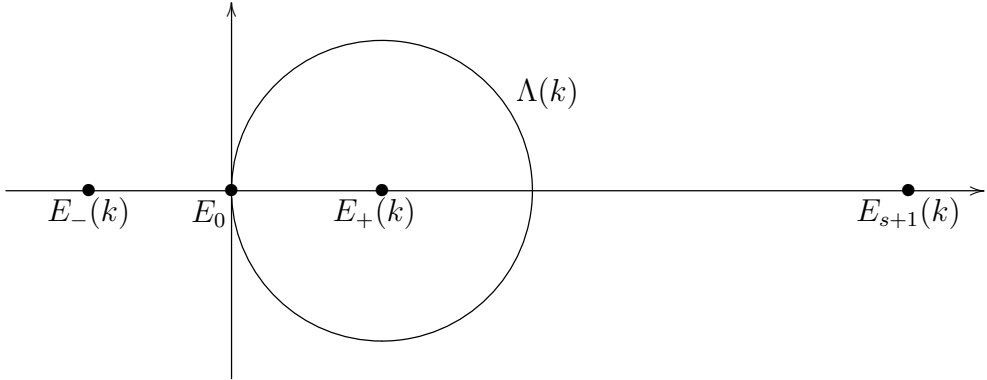

By the explicit expression of $H(k)$ given above, we deduce that
\[ \partial_j H(k) = 2 \left(- \iu \frac{\partial}{\partial y_j} + k_j\right) =: D_j(k). \]
Moreover, the resolvent of $H(k)$ is a bounded operator, whose norm equals
\[ \left\| (H(k) - z \1)^{-1} \right\|_{\mathcal{B}(\Hf)} = \frac{1}{\mathrm{dist} \left(z, \sigma(H(k))\right)} \]
where $\sigma(H(k))$ is the spectrum of the fibre Hamiltonian. As $z$ runs on the circle $\Lambda(k)$, the distance on the right-hand side of the above equality is minimal when $z$ is real, and the minimum is attained at the closest eigenvalue of $H(k)$, namely the selected Bloch band $E_+(k)$ or the eigenvalue $E_-(k)$. By Assumption \ref{Assumption main}\eqref{item_ii}, we conclude in both cases that
\begin{equation} \label{Resolvent}
\left\| (H(k) - z \1)^{-1} \right\|_{\mathcal{B}(\Hf)} \le \frac{c}{|q|^{m}}
\end{equation}
for some constant $c > 0$. In order to go further in our estimate, we need the following result.

\begin{lemma} \label{HRes}
There exists a constant $C > 0$ such that
\[ \left\| \partial_j H(k) (H(k) - z \1)^{-1} \right\|_{\mathcal{B}(\Hf)} \le \frac{C}{|q|^{m}} \]
uniformly in $z \in \Lambda(k)$, for any $k \in \dot U$.
\end{lemma}

\begin{rmk} \label{domains}
Notice that the operator $\partial_j H(k) (H(k) - z \1)^{-1}$ is indeed a bounded operator on $\Hf$. This is because the range of the resolvent $(H(k) - z \1)^{-1}$ is the domain $\mathcal{D} = W^{2,2}(\T^2_Y)$ of $H(k)$, and $k \mapsto H(k)$ is strongly differentiable, so that $\partial_j H(k)$ is well-defined on $\mathcal{D}$.
\end{rmk}

\begin{proof}[Proof of Lemma \ref{HRes}]
We recall \cite[Chap. XII, Problem 11]{Reed-Simon} that, if $H_0$ is a self-adjoint operator and $V, W$ are symmetric, then
\begin{equation} \label{Katosmall}
W << H_0 \text{ and } V << H_0 \quad \Longrightarrow \quad W << H_0 + V,
\end{equation}
where the notation $B << A$ means that $B$ is $A$-bounded with relative bound zero. Since $\iu \partial/\partial y_j << - \Delta$ and $V_\Gamma << - \Delta$ by Assumption \ref{Assumption main}, by iterating \eqref{Katosmall} we obtain
\[ \iu \frac{\partial}{\partial y_j} << - \Delta + 2 \iu k \cdot \nabla_y + |k|^2 \1 + V_\Gamma = H(k), \]
so that $D_j(k) << H(k)$. By definition of relative boundedness, this means that for any $a > 0$ there exists $b > 0$ such that
\[ \| D_j(k) \psi \|^2 \le a^2 \| H(k) \psi \|^2 + b^2 \| \psi \|^2 \]
for any $\psi$ in the domain of $H(k)$.

Fix $a > 0$. Then, by \cite[Lemma 2.40 and Eqn. (2.110)]{Amrein}, if $z_0 = \iu b/a$ one has
\begin{align*}
\| D_j(k) & (H(k) - z \1)^{-1} \|_{\mathcal{B}(\Hf)} \le \left\| D_j(k) \left( H(k) - z_0 \1 \right)^{-1} \right\|_{\mathcal{B}(\Hf)} + \\
& + \left| z - z_0 \right| \left\| D_j(k) \left( H(k) - z_0 \1 \right)^{-1} \right\|_{\mathcal{B}(\Hf)} \left\| \left( H(k) - z \1 \right)^{-1} \right\|_{\mathcal{B}(\Hf)} \le \\
& \le a + \left| z - z_0 \right| \, a \, \frac{c}{|q|^{m}}
\end{align*}
where we have used also Equation \eqref{Resolvent}. As $z$ varies in the circle $\Lambda(k)$, whose radius is proportional to $|q|^{m} \ll 1$, we can estimate
\[ \left| z - z_0 \right| \le \frac{b}{a} + 1 \quad \text{for every } z \in \Lambda(k), \text{ uniformly in } k \in \dot U. \]
As a consequence, there exists a constant $C > 0$ such that
\[ \left\| D_j(k) (H(k) - z \1)^{-1} \right\|_{\mathcal{B}(\Hf)} \le a + c \, \frac{a+b}{|q|^{m}} \le \frac{C}{|q|^{m}}, \]
yielding the claim.
\end{proof}

Plugging the result of this Lemma and Equation \eqref{Resolvent} into \eqref{Riesz}, and taking into account the fact that the length of $\Lambda(k)$ is proportional to $|q|^{m}$, we obtain that
\[ \left\| \partial_j P_+(k) \right\|_{\mathcal{B}(\Hf)} \le \frac{C_1}{|q|^{m}}. \]

The second derivatives of $P_+(k)$ can be computed similarly, again by means of the Riesz formula \eqref{Riesz} and taking into account Remark \ref{domains}: we obtain
\begin{align*}
\partial^2_{j,\ell} \, P_+(k) & = \frac{\iu}{2 \pi} \oint_{\Lambda(k)} \di z \, \left[ (H(k) - z \1)^{-1} \left( \partial_j H(k) (H(k) - z \1)^{-1} \right) \cdot \right. \\
& \quad \cdot \left( \partial_\ell H(k) (H(k) - z \1)^{-1} \right) + (j \leftrightarrow \ell) +  \\
& \left. \quad - (H(k) - z \1)^{-1} \partial^{2}_{j,\ell} H(k) (H(k) - z \1)^{-1} \right].
\end{align*}
Notice that $\partial^{2}_{j,\ell} H(k) = \delta_{j,\ell} \1_{\Hf}$, so that the last term in the above sum is easily estimated by $|q|^{-m}$. Again by \eqref{Resolvent} and Lemma \ref{HRes} we obtain that the first two terms are bounded by a constant multiple of $|q|^{-2m}$. We conclude that
\[ \left\| \partial^2_{j,\ell} \, P_+(k) \right\|_{\mathcal{B}(\Hf)} \le \frac{C_2}{|q|^{2m}}, \]
uniformly in $k \in \dot U$.
\end{proof}

From the above estimates on the derivatives of $P_+(k)$, we can deduce analogous estimates for the derivatives of the Bloch function $u_+(k)$.

\begin{prop} \label{Derivateutrue}
Let $\set{P_+(k)}_{k \in \dot U}$ and $m \in \set{1,2}$ be as in the hypotheses of Proposition \ref{DerivatePtrue}. There exists a function $k \mapsto u_+(k) \in \Hf$ such that $P_+(k) u_+(k) = u_+(k)$ and, for all choices of multi-indices $I \in \set{1,2}^{N}$, $N \in \N$, there exists a constant $C'_N > 0$ such that
\begin{equation} \label{utrue_estimates}
 \left\| \partial^N_{I} u_+(k) \right\|_{\Hf} \le \frac{C'_N}{|q|^{Nm}} \qquad \text{for all } k = k_0 + q, \quad 0 < |q| < r.
\end{equation}
\end{prop}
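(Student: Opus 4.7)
The plan is to construct $u_+(k)$ as the normalization of $P_+(k)\xi$ for a fixed reference vector $\xi \in \Hf$ with $\|P_+(k)\xi\|_{\Hf}$ uniformly bounded below on $\dot U$, and then to derive the derivative estimates by a Leibniz/quotient-rule expansion combined with Proposition \ref{DerivatePtrue}.

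For the choice of $\xi$, I would use the geometric setup from Section \ref{sec:TrueVSStratified}: by the Kato-Nagy formula, there is a unitary $V(k)$, smooth on all of $U$ (including $k=k_0$), such that $\Pi P_+(k) \Pi = V(k) P_{m,+}(k) V(k)^{-1}$ on $\Ran \Pi \simeq \C^2$, where $\Pi := P_*(k_0)$. For the canonical model the elementary identity
\[ \langle \phi_{m,+}(q), (1,\iu) \rangle_{\C^2} = e^{-\iu m \ph/2}\bigl( \cos(m\ph/2) + \iu \sin(m\ph/2) \bigr) = 1 \]
shows that $\|P_{m,+}(q)(1,\iu)\|_{\C^2} = 1$ uniformly in $q \in \dot U$. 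Taking $\xi \in \Ran \Pi \subset \Hf$ corresponding to $V(k_0)(1,\iu)$ under the identification $\Ran \Pi \simeq \C^2$, and combining the above uniform lower bound with the closeness estimate \eqref{stima} together with the smoothness of $V$ at $k_0$, yields $c_\xi := \inf_{k \in \dot U} \|P_+(k)\xi\|_{\Hf} > 0$. Define $u_+(k) := P_+(k)\xi / \|P_+(k)\xi\|_{\Hf}$; then $P_+(k) u_+(k) = u_+(k)$ by construction.

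For the derivative bounds, set $v(k) := P_+(k) \xi$, so that $\partial^N_I v(k) = (\partial^N_I P_+(k))\xi$ and Proposition \ref{DerivatePtrue} gives $\|\partial^N_I v(k)\|_{\Hf} \le C_N \|\xi\|_{\Hf} \, |q|^{-Nm}$. Writing $u_+(k) = v(k)/\|v(k)\|_{\Hf}$ with $\|v(k)\|_{\Hf} \ge c_\xi > 0$ smoothly, the quotient and chain rules (Faà di Bruno applied to $t \mapsto t^{-1/2}$ evaluated at $\langle v, v \rangle$) express $\partial^N_I u_+(k)$ as a finite sum of products of derivatives of $v$ of total order at most $N$, divided by positive powers of $\|v(k)\|_{\Hf}$. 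Using the derivative estimate on $v$ for each factor and the uniform lower bound $\|v(k)\|_{\Hf} \ge c_\xi$, every such term is of order $|q|^{-Nm}$, yielding $\|\partial^N_I u_+(k)\|_{\Hf} \le C'_N |q|^{-Nm}$ for some $C'_N > 0$.

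The main obstacle is establishing the existence of the reference vector $\xi$ with $\|P_+(k)\xi\|_{\Hf}$ uniformly bounded below on $\dot U$: this is not automatic, because $\{P_+(k)\}$ is singular at $k_0$ and the bundle $\calL_+|_{\dot U}$ has non-trivial geometry encoded in the eigenspace vorticity, so generic choices of $\xi$ may fail at some $k \in \dot U$. The resolution uses the explicit $m$-canonical model together with the bundle isomorphism from Theorem \ref{Th:criterion}, via the identity $\langle \phi_{m,+}, (1,\iu)\rangle_{\C^2} = 1$ transported by the Kato-Nagy unitary. Once $\xi$ is fixed, the inductive derivative bound on $u_+$ reduces to repeated applications of Proposition \ref{DerivatePtrue}.
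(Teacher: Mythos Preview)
Your approach is correct and takes a genuinely different route from the paper's. The paper fixes a reference point $k_* \in \dot U$, uses the Kato--Nagy unitary $W(k)$ relative to $P_+(k_*)$ to set $u_+(k) := W(k) u_*$ in a neighbourhood $U_*$ of $k_*$ where $\|P_+(k)-P_+(k_*)\|<1$, and then bounds the derivatives of $W(k)$ by differentiating the Kato--Nagy formula and the power series for $(1-z)^{-1/2}$; this is self-contained (it uses only Proposition~\ref{DerivatePtrue}) but yields $u_+$ only locally, with constants that are uniform in $k_*$. You instead construct $u_+$ \emph{globally} on $\dot U$ as the normalisation of $P_+(k)\xi$ for a single fixed vector $\xi$, the identity $\langle \phi_{m,+}(q),(1,\iu)\rangle_{\C^2} \equiv 1$ together with the intertwiner $V(k)$ from Section~\ref{sec:TrueVSStratified} ensuring the uniform lower bound $\|P_+(k)\xi\| \ge c_\xi > 0$ (after possibly shrinking $r$). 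This buys you a global smooth section in one stroke and reduces the derivative estimates to a routine quotient-rule expansion, at the price of importing Assumption~\ref{Assumption main}\eqref{item_v} and the smooth extension of $V$ to $k_0$ from Section~\ref{sec:TrueVSStratified}. As a minor simplification, note that the Kato--Nagy formula combined with the estimate~\eqref{stima} forces $V(k_0)=\1$, so your $\xi$ is in fact just $u_+(k_0)+\iu\,u_-(k_0)$.
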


\begin{proof}
Again we will prove this statement only for $N \le 2$, the general case being completely analogous. Fix any point $k_* \in \dot{U}$ and $\delta < 1$: then, by continuity of the map $k \mapsto P_+(k)$ away from $k_0$, there exists a neighbourhood $U_* \ni k_*$ such that $\norm{P_+(k) - P_+(k_*)}_{\mathcal{B}(\Hf)} \le \delta < 1$ for all $k \in U_*$. The Kato-Nagy formula \eqref{W definition} then provides a unitary operator $W(k)$ which intertwines $P_+(k_*)$ and $P_+(k)$, in the sense that $P_+(k) = W(k) P_+(k_*) W(k)^{-1}$. Choose any $u_* \in \Ran P_+(k_*)$ with $\norm{u_*}_{\Hf} = 1$. Then, by setting $u_+(k):= W(k)u_*$ one obtains a unit vector in $\Ran P_+(k)$, and moreover
\[ \left\| \partial^N_I u_+(k) \right\|_{\Hf} \le \left\| \partial^N_I W(k) \right\|_{\mathcal{B}(\Hf)}. \]
Hence, if we prove that the Kato-Nagy unitary $W$ satisfies estimates of the form \eqref{Ptrue_estimates}, we can deduce that also \eqref{utrue_estimates} holds.

Set $Q(k) := \left(P_+(k) - P_+(k_*) \right)^2$, and notice that by hypothesis $\|Q(k)\| \le \delta^2$. Recall that the Kato-Nagy unitary is given by the formula
\[ W(k) = G(Q(k)) \left( P_+(k) P_+(k_*) + (\1 - P_+(k))(\1 - P_+(k_*))\right), \]
where $G(Q(k))$ is the function $G(z) := (1-z)^{-1/2}$ evaluated on  the self-adjoint operator $Q(k)$ by functional calculus. The function $G$ admits a power series expansion
\[ G(z) = \sum_{n=0}^{\infty} g_n \, z^n \]
which is absolutely convergent and term-by-term differentiable for $|z| \le \delta < 1$ (compare \eqref{powerseries}).

Differentiating the above expression for $W(k)$ with the use of the Leibniz rule for bounded-operator-valued functions, we obtain
\begin{align*}
\partial_j W(k) & = G(Q(k)) \left( \partial_j P_+(k) \right) \left(  2 P_+(k_*) - \1 \right) + \\
& \quad + \left( \partial_j G(Q(k)) \right) \left( P_+(k) P_+(k_*) + (\1 - P_+(k))(\1 - P_+(k_*))\right).
\end{align*}
As for the first summand, it follows from the properties of functional calculus that the term $G(Q(k))$ is bounded in norm by $G(\delta^2)$, while $2 P_+(k_*) - \1$ has norm at most equal to $3$. In the second summand, instead, the operator $( P_+(k) P_+(k_*) +$ $(\1 - P_+(k))(\1 - P_+(k_*)))$ is clearly uniformly bounded in $U_*$, while the norm of $\partial_j G(Q(k))$ can be estimated by
\[ \left\| \partial_j G(Q(k)) \right\| \le \left(\sum_{n = 0}^{\infty} n |g_n| \delta^{2(n-1)} \right) \left\|\partial_j Q(k) \right\|. \]
The term in brackets is finite because also the power series for the derivative of $G(z)$ is absolutely convergent. Moreover, we have that
\[ \partial_j Q(k) = \left(P_+(k) - P_+(k_*) \right) \left( \partial_j P_+(k) \right) + \left( \partial_j P_+(k) \right) \left(P_+(k) - P_+(k_*) \right) \]
from which we deduce that
\[ \left\| \partial_j Q(k) \right\| \le 2 \delta \left\| \partial_j P_+(k) \right\|. \]

In conclusion, using \eqref{Ptrue_estimates} we obtain that
\[ \left\| \partial_j W(k) \right\| \le \mathrm{const} \cdot \left\| \partial_j P_+(k) \right\| \le \frac{C_1'}{|q|^m}. \]

The second derivatives of $W(k)$ can be treated similarly, by using again Leibniz rule. One has
\begin{align*}
\partial^2_{j,\ell} W(k) & = G(Q(k)) \left( \partial^2_{j,\ell} P_+(k) \right) \left(  2 P_+(k_*) - \1 \right) + \\
& \quad + \left( \partial_j G(Q(k)) \right) \left( \partial_{\ell} P_+(k) \right) \left(  2 P_+(k_*) - \1 \right) + ( j \leftrightarrow \ell ) + \\
& \quad + \left( \partial^2_{j,\ell} G(Q(k)) \right) \left( P_+(k) P_+(k_*) + (\1 - P_+(k))(\1 - P_+(k_*))\right).
\end{align*}
The terms on the first two lines can be estimated as was done above by a multiple of $|q|^{-2m}$. For the one on the third line, we notice that
\begin{align*}
\left\| \partial^2_{j,\ell} G(Q(k)) \right\| & \le \left( \sum_{n=0}^{\infty} n (n-1) |g_n| \delta^{2(n-2)} \right) \left\| \partial_j Q(k) \right\| \, \left\| \partial_\ell Q(k) \right\| + \\
& \quad + \left( \sum_{n=0}^{\infty} n |g_n| \delta^{2(n-1)} \right) \left\| \partial^2_{j,\ell} Q(k) \right\|
\end{align*}
where the series in brackets are finite, due to the absolute convergence of the power series for the derivatives of $G(z)$. From the fact that
\begin{align*}
\partial^2_{j,\ell} Q(k) & = \left(P_+(k) - P_+(k_*) \right) \left( \partial^2_{j,\ell} P_+(k) \right) + \left( \partial^2_{j,\ell} P_+(k) \right) \left(P_+(k) - P_+(k_*) \right) + \\
& \quad + \left( \partial_j P_+(k) \right) \left( \partial_\ell P_+(k) \right) + \left( \partial_\ell P_+(k) \right) \left( \partial_j P_+(k) \right)
\end{align*}
we derive that
\[ \left\| \partial^2_{j,\ell} Q(k) \right\| \le 2 \delta \left\| \partial^2_{j,\ell} P_+(k) \right\| + 2 \left\| \partial_j P_+(k) \right\| \, \left\| \partial_\ell P_+(k) \right\|. \]

Putting all the pieces together, we conclude that
\[ \left\| \partial^2_{j,\ell} W(k) \right\| \le \mathrm{const} \cdot \left\| \partial^2_{j,\ell} P_+(k) \right\| + \mathrm{const} \cdot \left\| \partial_j P_+(k) \right\| \, \left\| \partial_\ell P_+(k) \right\| \le \frac{C_2'}{|q|^{2m}} \]
as wanted.
\end{proof}

With the help of the estimates \eqref{utrue_estimates}, we can establish the Sobolev regularity of the function $u\sub{rem}(k)$.

\begin{prop}
Let $u_+$ be as constructed in Proposition \ref{Derivateutrue} and $m \in \set{1,2}$ as in \eqref{power intersection}. Define $u\sub{rem}(k):= (\1 - \Pi) u_+(k)$. Then
\begin{itemize}
 \item if $m = 1$, one has $u\sub{rem} \in W^{s,2}(U;\Hf)$ for all $s < 2$;
 \item if $m = 2$, one has $u\sub{rem} \in W^{s,2}(U;\Hf)$ for all $s < 1$.
\end{itemize}
\end{prop}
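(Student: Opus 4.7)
The plan is to exploit the decomposition
\[ u\sub{rem}(k) = (\1 - \Pi) u_+(k) = (\1 - P_*(k)) u_+(k) + (P_*(k) - \Pi) u_+(k) = (P_*(k) - \Pi) u_+(k), \]
where the last equality uses the crucial fact that $u_+(k) \in \Ran P_+(k) \subset \Ran P_*(k)$ for all $k \in \dot{U}$, so that $(\1 - P_*(k)) u_+(k) = 0$. This rewrites the remainder as a product of a factor vanishing at $k_0$, namely $P_*(k) - \Pi$, times the singular Bloch function $u_+(k)$ constructed in Proposition \ref{Derivateutrue}; this extra factor of $|q|$ is the source of the improved regularity.

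The first technical step is to prove that $k \mapsto P_*(k)$ is $C^\infty$-smooth in a neighbourhood of $k_0$. This is a standard consequence of the Riesz formula for $P_*(k)$ with an integration contour enclosing both eigenvalues $\set{E_+(k), E_-(k)}$ and staying uniformly separated from the rest of $\sigma(H(k))$; the existence of such a contour is guaranteed by Assumption \ref{Assumption main}\eqref{item_iii}. Taylor's theorem then yields $P_*(k) - \Pi = \sum_{j=1}^{2} q_j B_j(k)$ with $B_j \in C^\infty(U; \mathcal{B}(\Hf))$ bounded together with all their derivatives. Combining this with the Leibniz rule and the estimates $\|\partial^N_I u_+(k)\|_{\Hf} \le C'_N |q|^{-Nm}$ of Proposition \ref{Derivateutrue}, I obtain the pointwise bounds
\[ \|\partial^N_I u\sub{rem}(k)\|_{\Hf} \le \widetilde{C}_N\, |q|^{1 - Nm} \qquad \text{for } 0 < |q| < r,\ N \ge 1, \]
together with $\|u\sub{rem}(k)\|_{\Hf} \le C_0 |q|$: the term in $\partial^N_I u\sub{rem}$ in which no derivative falls on $P_*(k) - \Pi$ is bounded by $|q| \cdot |q|^{-Nm}$, while each derivative that does fall on $P_*(k) - \Pi$ replaces one power of $|q|$ by a bounded factor, yielding only subleading contributions.

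The conclusion uses the elementary two-dimensional integrability $|q|^{-\beta} \in L^p(U)$ for every $p < 2/\beta$. For $m = 1$, the estimates give $u\sub{rem} \in W^{2,p}(U; \Hf)$ for every $p < 2$ (since $\|\nabla^2 u\sub{rem}\|_{\Hf} \le C |q|^{-1}$, while lower-order derivatives are bounded on $U$); for $m = 2$ they give $u\sub{rem} \in W^{1,p}(U; \Hf)$ for every $p < 2$ (since $\|\nabla u\sub{rem}\|_{\Hf} \le C |q|^{-1}$ and $\|u\sub{rem}\|_{\Hf} \le C|q|$). The classical Sobolev embedding $W^{k,p}(U) \hookrightarrow W^{s, 2}(U)$, valid in two dimensions whenever $k - 2/p \ge s - 1$, then allows to take $s$ arbitrarily close to $k$ from below by letting $p \to 2^{-}$, yielding the advertised regularity ($s < 2$ for $m = 1$, $s < 1$ for $m = 2$). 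The main delicate point is the careful bookkeeping in the Leibniz expansion, to guarantee that the extra power of $|q|$ from the prefactor $P_*(k) - \Pi$ is always extracted; the passage from the pointwise $W^{k,p}$ estimates to the fractional $W^{s,2}$ conclusion then follows from the standard embedding theorem.
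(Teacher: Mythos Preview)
Your proof is correct and follows essentially the same route as the paper's: the key identity $u\sub{rem}(k)=(P_*(k)-\Pi)u_+(k)$, the smoothness of $k\mapsto P_*(k)$ from the Riesz formula with a fixed contour, the Leibniz expansion combined with the bounds of Proposition~\ref{Derivateutrue}, and finally the passage from $W^{k,p}$ with $p<2$ to fractional $W^{s,2}$. The only cosmetic differences are that the paper uses the Lipschitz bound $\|P_*(k)-\Pi\|\le L|q|$ directly rather than a Taylor expansion, and phrases the final embedding via the Triebel--Lizorkin scale rather than the classical Sobolev embedding you invoke; both formulations are equivalent here.
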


\begin{proof}
We begin with the following simple observation: as $P_+(k)$ is a subprojector of $P_*(k)$, we have that
\[ (\1 - \Pi) P_+(k) = (P_*(k) - \Pi) P_+(k) \]
and consequently, as $u_+(k)$ is an eigenvector for $P_+(k)$, that
\[ u\sub{rem}(k) = (P_*(k) - \Pi) u_+(k). \]

From this, we deduce that
\[ \partial_j \, u\sub{rem}(k) = \partial_j P_*(k) \, u_+(k) + (P_*(k) - \Pi) \, \partial_j \, u_+(k). \]
The first summand is bounded in norm, because $P_*(k)$ is smooth in $k$ and $u_+(k)$ has unit norm.
As for the second summand, we know by \eqref{utrue_estimates} that
\[ \left\| \partial_j \, u_+(k) \right\| \le \frac{C_1'}{|q|^m}. \]
Moreover, by the smoothness of the map $k \mapsto P_*(k)$ and the definition of $\Pi = P_*(k_0)$, we deduce the Lipschitz estimate
\begin{equation} \label{Lip}
\left\| P_*(k) - \Pi \right\| \le L |q|
\end{equation}
for some constant $L > 0$. In conclusion, we get that
\[ \left\| \partial_j \, u\sub{rem}(k) \right\| \le \mathrm{const} \cdot |q|^{-m+1}. \]

Thus, if $m=1$, then $\partial_j u\sub{rem}$ is bounded, and hence the function $u\sub{rem}$ is in $W^{1,\infty}(U;\Hf)$. Instead, if $m=2$, then we can deduce that $u\sub{rem}$ is in $W^{1,p}(U;\Hf)$ for all $p<2$. Denoting by $\set{F_{p,q}^s}$ the scale of Triebel-Lizorkin spaces (see \eg \cite{RunstSickel}) one has that $W^{1,p} = F_{p,p}^1 \subseteq F^1_{p,\infty}$ is continuously embedded in $F^s_{2,2} = W^{s,2}$ for $s = 1 - d (1/p - 1/2)$, in view of \cite[Theorem 2.2.3]{RunstSickel}.  Thus, up to a continuous embedding, $u\sub{rem}$ is in $W^{s,2}(U;\Hf)$ for every $s < 1$, yielding the claim for $m=2$.

As for the second derivative
\[ \partial^2_{j,\ell} \, u\sub{rem}(k) = \partial^2_{j,\ell} P_*(k) \, u_+(k) + \big\{\partial_j P_*(k) \, \partial_\ell u_+(k) + (j \leftrightarrow \ell)\big\} + (P_*(k) - \Pi) \partial^2_{j,\ell} u_+(k), \]
we get that the first term is again bounded; the terms in brackets can be estimated by \eqref{utrue_estimates} with a multiple of $|q|^{-m}$; and the last summand, again by \eqref{utrue_estimates} and the Lipschitz estimate \eqref{Lip}, is bounded in norm by a multiple of $|q|^{-2m+1}$. For $m=1$, these two powers of $|q|$ coincide, and we conclude that $u\sub{rem}$ is in $W^{2,p}(U;\Hf)$ for all $p<2$. Again by the interpolation methods of \cite[Theorem 2.2.3]{RunstSickel}, we deduce that up to a continuous embedding $u\sub{rem}$ is in $W^{s,2}(U;\Hf)$ for all $s<2$, when $m=1$.
\end{proof}

The above result allows us to finally conclude the proof of Theorem \ref{Asymp_wU-wcan}. Indeed, by the considerations at the beginning of this Subsection (see Equation \eqref{Decay rate}) we have that
\[ |X|^s \left( w_U - w\sub{eff} \right) \in L^2(\R^2) \quad \begin{cases} \text{if } m = 1, \text{ for all } 0 \le s < 2 \\ \text{if } m = 2, \text{ for all } 0 \le s < 1 \end{cases} \]
which is exactly the statement of Theorem \ref{Asymp_wU-wcan}.

\appendix

%%%%%%%% APPENDIX A %%%%%%%%%%%%%%%%%%%%%%%%%%%%%%%%%%%%%%%%%%%%%%%%%%%%%%%%%%

\section{Distributional Berry curvature for eigenvalue intersections} \label{AppB}

Is there a way to define the Berry curvature also for singular families of projectors, \ie in presence of an eigenvalue intersection? Strictly speaking, notions like ``connection'' and ``curvature'' make sense only in the case of smooth vector bundles, as they are defined through differential forms. The eigenspace  bundle for an eigenvalue intersection, on the other hand, is singular at the intersection point $q=0$ -- see for example \eqref{ConicalProj}. Thus, in order to define a curvature also in the latter case, we have to ``pay a toll'': this amounts to using differential forms intepreted in a \emph{distributional sense}.

We make this last statement more rigorous, at least for the canonical families of projectors presented in Section \ref{sec:models}. We want to recover the Berry curvature for eigenvalue intersections from its analogue for avoided crossings, defining for all test functions $f \in C^\infty_0(U)$
\begin{equation} \label{DistributionalLimit}
\omega_{n, \pm}[f] := \lim_{\mu \downarrow 0} \omega_{n, \pm}^\mu[f] + \lim_{\mu \uparrow 0} \omega_{n, \pm}^\mu[f],
\end{equation}
where $T[f]$ denotes the action of the distribution $T$ on the test function $f$.

The distributions on the right-hand side of \eqref{DistributionalLimit} are the ones obtained from the Berry curvature (compare Equation \eqref{omegan})
\[ \omega_{n, \pm}^\mu(q) = \pm \frac{n}{2} \left[ \partial_{|q|} \left( \frac{\mu}{\sqrt{|q|^2+\mu^2}} \right) \di |q| \wedge \di \ph - \partial_{\mu} \left( \frac{\mu}{\sqrt{|q|^2+\mu^2}} \right) \di \ph \wedge \di \mu \right]. \]
Explicitly, when these act on a $\mu$-independent test function, the term containing $\di \mu$ does not contribute, yielding to
\begin{equation} \label{omegadistribution}
\omega_{n, \pm}^\mu[f] = \pm \frac{n}{2} \int_{U} \partial_{|q|} \left( \frac{\mu}{\sqrt{|q|^2+\mu^2}} \right) f(|q|,\ph) \, \di |q| \wedge \di \ph.
\end{equation}
The right-hand side of this equality changes sign according to whether $\mu$ is positive or negative, but still the two limits for $\mu \downarrow 0$ and $\mu \uparrow 0$ in \eqref{DistributionalLimit} give the same result, because the change in sign is compensated by the different orientation of the ``top'' and ``bottom'' caps of the cylinder $\mathcal{C}$, that are approaching the plane $\mu = 0$ (compare Figure \ref{fig:cylinder}). Hence,
\[ \omega_{n, \pm}[f] = \lim_{\mu \downarrow 0} \pm n \int_{U} \partial_{|q|} \left( \frac{\mu}{\sqrt{|q|^2+\mu^2}} \right) f(|q|,\ph) \, \di |q| \wedge \di \ph. \]

We may assume that $f$ is a radial-symmetric test function%
\footnote{Indeed, if $\widetilde{f}$ is in $C^\infty_0(U)$ and
\[ T_{n, \pm}^\mu(|q|) := \pm n \, \partial_{|q|} \left( \frac{\mu}{\sqrt{|q|^2+\mu^2}} \right), \]
then we have
\[ \omega_{n, \pm}^\mu\left[\widetilde{f}\right] = \int_{U}  T_{n, \pm}^\mu(|q|) \widetilde{f}(|q|,\ph) \, \di |q| \wedge \di \ph = \int_{0}^{r} \di |q| \, T_{n, \pm}^\mu(|q|) \int_{0}^{2\pi} \di \ph \widetilde{f}(|q|,\ph). \]
Set
\[ f(|q|) := \int_{0}^{2\pi} \di \ph \widetilde{f}(|q|,\ph). \]
Then $f$ is a $C^\infty$ function, because $\widetilde{f}$ is smooth and integration is performed on the compact set $S^1$; moreover, the support of $f$ is contained in a ball of radius $\widetilde{r}$ around $q=0$, where
\[ \widetilde{r} := \max_{q \in \operatorname{supp} \widetilde{f}} |q|. \]
By definition $\widetilde{r} < r$, because $\widetilde{f}$ has compact support in $U$; hence also $f$ is compactly supported in the same ball.}%
, \ie $f = f(|q|)$. We thus have
\begin{equation} \label{ApproxId}
\frac{1}{2\pi} \, \omega_{n, \pm}^\mu[f] = \pm n \int_{0}^{r} \di |q| \, \partial_{|q|} \left( \frac{\mu}{\sqrt{|q|^2+\mu^2}} \right) f(|q|).
\end{equation}
Notice that
\begin{align*}
\partial_{|q|} \left( \frac{\mu}{\sqrt{|q|^2+\mu^2}} \right) & = - \frac{1}{\mu} \, \frac{|q|/\mu}{\left[ 1 + (|q|/\mu)^2\right]^{3/2}} = j_{\mu}(-|q|),
\end{align*}
where
\[ j_{\mu}(|q|) = \frac{1}{\mu} j \left( \frac{|q|}{\mu} \right), \quad j(|q|) := \frac{|q|}{\left(1+|q|^2\right)^{3/2}}. \]

Consider for a moment the variable $|q|$ as varying on the whole real axis, and define
\[ J(|q|) := - j(|q|) \chi_{(-\infty, 0]}(|q|) \]
where $\chi_{(-\infty, 0]}$ is the characteristic function of the negative axis. The function $J$ then satisfies
\[ J(|q|) \ge 0 \text{ for all } |q| \in \R, \quad \text{and} \quad \int_{-\infty}^{+\infty} \di |q| \, J(|q|) = 1. \]
These two properties allow one to construct the so-called \emph{approximate identities} (compare \cite[Theorem 1.18]{SteinWeiss}), namely the functions
\[ J_{\mu}(|q|) := \frac{1}{\mu} j \left( \frac{|q|}{\mu} \right), \]
which are known to satisfy
\[ \lim_{\mu \downarrow 0} (J_{\mu} * F)(t) = \lim_{\mu \downarrow 0} \int_{-\infty}^{+\infty} \di |q| \, J_{\mu}(t-|q|) F(|q|) = F(t) \]
for every compactly-supported function $F$ and all $t \in \R$. Applying this to the function $F(|q|) = f(|q|) \chi_{(0,+\infty)}(|q|)$ (\ie extending $f$ to zero for negative $|q|$), we obtain that
\begin{align*}
f(0) & = F(0) = \lim_{\mu \downarrow 0} \int_{-\infty}^{+\infty} \di |q| \, J_{\mu}(-|q|) F(|q|) = \\
& = \lim_{\mu \downarrow 0} \int_{-\infty}^{+\infty} \di |q| \, \left(- \frac{1}{\mu} j\left(-\frac{|q|}{\mu}\right) \chi_{(-\infty,0]}\left(-\frac{|q|}{\mu}\right) \right) \chi_{[0,+\infty)}(|q|) f(|q|) = \\
& = - \lim_{\mu \downarrow 0} \int_{0}^{+\infty} j_{\mu}(-|q|) f(|q|).
\end{align*}
In the last of the above equalities we have used the fact that as $\mu > 0$
\[ \chi_{(-\infty,0]}\left(-\frac{|q|}{\mu}\right) = \chi_{(-\infty,0]}(-|q|) = \chi_{[0,+\infty)}(|q|). \]

Now it suffices to observe that on the right-hand side of \eqref{ApproxId} we have the expression
\[ \int_{0}^{r} \di |q| \, \partial_{|q|} \left( \frac{\mu}{\sqrt{|q|^2+\mu^2}} \right) f(|q|) = \int_{0}^{+\infty} \di |q| \, j_{\mu}(-|q|) f(|q|) \]
because, as $f$ has compact support in $U$, integration on $[0,r]$ or on $[0,+\infty)$ in the $|q|$-variable yields the same result. When $\mu$ approaches $0$ from above, we thus obtain
\[ \frac{1}{2\pi} \omega_{n, \pm}[f] = \lim_{\mu \downarrow 0} \frac{1}{2\pi} \omega_{n, \pm}^\mu[f] = \mp n f(0). \]

In conclusion, as a distribution $\omega_{n, \pm}$ is a multiple of the Dirac delta ``function'':
\[ \frac{1}{2\pi} \omega_{n, \pm} = \mp n \delta_0. \]
In other words, the curvature of the (singular, hence ill-defined!) eigenspace ``bundle'' for an eigenvalue intersection is all concentrated at the intersection point $q = 0$. Moreover, we can define a Chern number for this eigenspace ``bundle'' $\PB_{n,\pm}^0$, with abuse of notation, by posing
\[ \ch_1\left(\PB_{n,\pm}^0\right) = \frac{1}{2\pi} \int_{U} \omega_{n, \pm} = \mp n \int_{U} \delta_0 = \mp n. \]

%%%%%%%%%%%%%%%% BIBLIOGRAPHY %%%%%%%%%%%%%%%%%

\bigskip \bigskip

{\footnotesize

(D. Monaco) \textsc{SISSA, Via Bonomea 265, 34136 Trieste, Italy}

\medskip

{\it E-mail address}: \href{mailto:dmonaco@sissa.it}{\texttt{dmonaco@sissa.it}}

\bigskip

(G. Panati) \textsc{Dipartimento di Matematica ``G. Castelnuovo'', ``La Sapienza'' Universit\`{a} di Roma, Piazzale A. Moro 2, 00185 Roma, Italy}

\medskip

{\it E-mail address}: \href{mailto:panati@mat.uniroma.it}{\texttt{panati@mat.uniroma.it}}
}
\end{document}